\documentclass[twoside,11pt]{article}

%

\usepackage{jmlr2e,amsmath}





\newtheorem{property}{Property}
\begin{document}

\title{Message Error Analysis of Loopy Belief Propagation for the Sum-Product Algorithm}

\author{\name Xiangqiong Shi \email xshi4@uic.edu \\
       \name Dan Schonfeld \email dans@uic.edu \\
       \name Daniela Tuninetti \email danielat@uic.edu \\
       \addr Department of Electrical and Computer Engineering\\
       University of Illinois at Chicago\\
       Chicago, IL , USA}

\editor{}

\maketitle

\begin{abstract}
Belief propagation is known to perform extremely well in many
practical statistical inference and learning problems using
graphical models, even in the presence of multiple loops. The
iterative use of belief propagation algorithm on loopy graphs is
referred to as Loopy Belief Propagation (LBP). Various sufficient
conditions for convergence of LBP have been presented; however,
general necessary conditions for its convergence to a unique fixed
point remain unknown. Because the approximation of beliefs to true
marginal probabilities has been shown to relate to the convergence
of LBP, several methods have been explored whose aim is to obtain
distance bounds on beliefs when LBP fails to converge. In this
paper, we derive uniform and non-uniform error bounds on messages,
which are tighter than existing ones in literature, and use these
bounds to derive sufficient conditions for the convergence of LBP
in terms of the sum-product algorithm. We subsequently use these
bounds to study the dynamic behavior of the sum-product algorithm,
and analyze the relation between convergence of LBP and sparsity
and walk-summability of graphical models. We finally use the
bounds derived to investigate the accuracy of LBP, as well as the
scheduling priority in asynchronous LBP.
\end{abstract}

\begin{keywords}
Graphical Model, Bayesian Networks, Markov Random Fields, Loopy
Belief Propagation, Error Analysis.
\end{keywords}

\section{Introduction}
\label{sec:intro}

Probabilistic inference for large-scale multivariate random
variables is very expensive computationally.  Belief propagation
(BP) algorithms are designed to reduce the computational burden by
exploiting the factorization of joint density functions captured by
the topological structure of graphical
models~[\cite{Bishop_06,Jordan_99,Kschischang_01,Wainwright_08}]. BP
is known to converge to the exact inference on acyclic graphs (i.e.
trees) or graphs that contain a single loop. In the case of graphs
with multiple loops, BP results in an iterative method referred to
as loopy belief propagation (LBP). The use of LBP generally provides
remarkably good approximations in real-world applications; e.g.,
turbo decoding and stereo
matching~[\cite{Mceliece_98,Sun03stereomatching}].

Because LBP does not always converge, sufficient conditions for
its convergence have been extensively investigated in the past
using various
approaches~[\cite{TatikondaJ02,heskes2004a,ihler05b,MooijKappen_IEEETIT_07}].
Necessary conditions for convergence of LBP, however, remain
unknown. ~\cite{TatikondaJ02} related convergence of LBP to the
uniqueness of a sequence of Gibbs measures defined on the
associated computation tree. He subsequently developed a testable
sufficient condition for convergence of LBP by applying Simon's
condition~[\cite{Georgii_1988}]. ~\cite{heskes2004a} presented
sufficient conditions for uniqueness of fixed points in LBP by
relying on the uniqueness of minima of the Bethe free energy. He
related the strength of the potentials with the convergence of the
LBP algorithm, which leads to better sufficient conditions than
those exclusively relying on the structure of the graph.

Recently, several papers have investigated the message updating
functions of the LBP algorithm as contractive mappings.
~\cite{ihler05b} analyzed the contractive dynamics of
message-error propagation in belief networks using dynamic-range
measure as a metric, and obtained error bounds and sufficient
conditions for convergence of LBP message passing.
~\cite{MooijKappen_IEEETIT_07} derived sufficient conditions for
convergence of LBP based on quotient norms of contractive
mappings, which are invariant to scaling and shown to be valid for
potential functions containing zeros.

For Gaussian graphical models, \cite{Malioutov2006} related the
convergence of means and variances to walk sums and defined
walk-summability with respect to spectral radius of partial
correlation coefficient matrix. For binary graphs,
\cite{NIPS2009_Yusuke} presented an edge zeta function based on
weighted prime cycles, and related convexity of Bethe free energy
with the determinant formula of edge zeta function. They showed
similar walk-summability of binary graphs by relating the spectra
of correlation coefficient matrix with Hessian of Bethe free
energy. For general graphical models,
\cite{MooijKappen_IEEETIT_07} derived certain interaction
coefficients between random variables based on strength of
potential functions, and related the spectral radius of
coefficient matrix with the convergence of LBP. Enlightened by
those similar analysis, we defined walk-summable for general
graphs and compared walk-summability with other existing
convergence conditions.

Although the beliefs may not be true marginal probabilities when the
LBP algorithm converges, they have been shown to provide good
approximations by~\cite{Weiss_00}. When the LBP algorithm does not
converge, however, beliefs are not good approximations of true
marginals because the Bethe free energy does not provide a good
approximation of the Gibbs-Helmholtz free energy~[\cite{Yedidia04}].
Exactness and accuracy of the LBP algorithm has consequently gained
interest in recent years. ~\cite{Tatikonda03} derived bounds on
exact marginals by relying on the girth of the graph (i.e. the
number of edges in the shortest cycle in the graph) and the
properties of Dobrushin's interdependence matrix~[\cite{Salas1997}].
~\cite{TagaM06MICAI} used Dobrushin's theorem to present a distance
bound on the marginal probabilities. ~\cite{ihler07b} introduced a
distance bound on the error between beliefs and marginals based on
recent results for computing marginal probabilities for pairwise
Markov random fields using Self-Avoiding Walk (SAW)
trees~[\cite{Weitz06}]. ~\cite{MooijKappen_NIPS_08} propagate bounds
on marginal probabilities over a subtree or the SAW tree of the
factor graph, and demonstrate that their bounds perform well in
terms of accuracy and computation time of LBP.

Several investigators have explored the consequence of scheduling on
the convergence of BP. ~\cite{TagaM06IEICE} discussed the impatient
and lazy belief propagation algorithms and showed that the former is
expected to converge faster than the latter. ~\cite{Elidan06}
proposed a residual belief propagation algorithm, which schedules
messages in an informed manner thus significantly reducing the
running time needed for convergence of LBP. Inspired by
~\cite{Elidan06}'s work, ~\cite{Sutton07} further increased the rate
of convergence by estimating the residual rather than computing it
directly.

In this paper, we derive tight error bounds on LBP and use these
bounds to study the dynamics---error, convergence, accuracy, and
scheduling---of the sum-product algorithm.\footnote{A preliminary
version of some of the error bounds presented in this paper has
appeared in~\cite{Shi2009}.} Specifically, in
Section~\ref{sec:message error} and Section~\ref{sec:bounds
belief}, we rely on the contractive mapping property of message
errors to present novel uniform and non-uniform distance bounds
between multiple fixed-point solutions. Several graphical networks
are investigated and used to demonstrate that the proposed
distance bounds are tighter than existing bounds. We subsequently
use these bounds to derive uniform and non-uniform sufficient
conditions for convergence of the sum-product algorithm. Moreover,
in Section~\ref{sec:LBP convergence}, we analyze the relation
between convergence and sparsity of graphs, and extend the
convergence perspective of walk-summability from Gaussian
graphical models to general graphical models. In
Section~\ref{sec:accuracy}, we present bounds on the distance
between beliefs and true marginals by applying SAW trees and show
that the proposed bounds can be used to improve existing bounds.
Furthermore, in Section \ref{sec:residual scheduling}, we explore
the use of the upper-bound on message errors as a criterion to
rank the priority of message passing for scheduling in
asynchronous LBP. We then present a case study of LBP by studying
its dynamics on completely uniform graphs and analyzing its true
fixed points and message-error functions in
Section~\ref{sec:uniform graph}. We conclude the paper in
Section~\ref{sec:conclusion}.

\section{Message-Error Propagation for the Sum-Product Algorithm}
\label{sec:message error}

Belief propagation originated from exact inference on tree
structured graphical models, though for graphs with loops it shows
remarkable performance of approximate inference. BP is
synonymously called sum-product algorithm for marginalization of
global distribution or max-product algorithm to compute
Maximum-A-Posteriori (MAP). In this paper, we will mainly talk
about sum-product algorithm for graphs with loops.

\subsection{Loopy Belief Propagation Updates}
\begin{figure}[htb]
\begin{center}
\includegraphics[height=4.64 cm,width=12.79 cm]{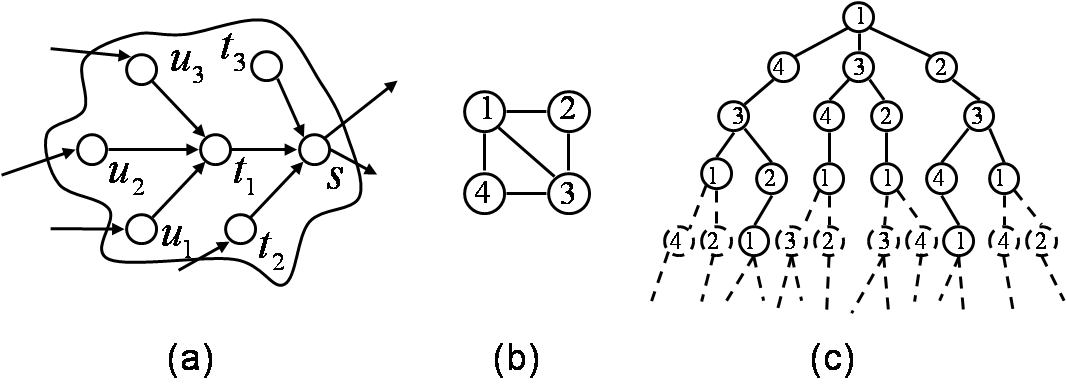}
\end{center}
\caption{Graphical models: (a) message passing in a portion of a
belief network; (b) a simple graph; and (c) Bethe tree (all nodes
and edges) and Self-Avoiding Walk tree (black solid only) of (b).}
\label{fig:msg passing}
\end{figure}
Let us consider a general graphical model
$\mathbb{G}=(\mathbb{V},\mathbb{E})$ whose distribution factors as
follows:
\begin{equation}
p(X)=\frac{1}{Z}\prod_{(s,t)\in
\mathbb{E}}\psi_{st}(x_s,x_t)\prod_{s\in\mathbb{V}} \psi_s(x_s),
\label{eq:pdf}
\end{equation}
where $Z$ is a normalization factor, $\psi_{st}(x_s,x_t)$ is the
pairwise potential function between random variables $x_s$ and
$x_t$, and $\psi_s(x_s)$ is the single node potential function on
$x_s$. $(s,t)$ denotes an undirected edge, $\mathbb{V}$ is the set
of nodes, and $\mathbb{E}$ is the set of edges. We assume that all
the potential functions are positive.

Fig.~\ref{fig:msg passing}(a) illustrates the message passing
mechanism used in BP. The updating rule of the sum-product algorithm
for the message sent by node $t$ to its neighbor node $s$ at
iteration $i$ is:
\begin{equation}
m_{ts}^i(x_s)\propto\int\psi_{ts}(x_t,x_s)\psi_t(x_t)\prod_{u\in
\Gamma_t\backslash s}m_{ut}^{i-1}(x_t)dx_t, \label{eq:msg}
\end{equation}
where $\Gamma_t$ is the set of neighbors of node $t$. The belief,
or pseudo-marginal probability of $x_t$, on node $t$ at iteration
$i$, is:
\begin{equation}
B_t^i(x_t)\propto\psi_t(x_t)\prod_{u\in
\Gamma_t}m_{ut}^i(x_t).\label{eq:singlebelive}
\end{equation}
A stable fixed point has been reached if
$m_{ts}^i(x_s)=m_{ts}^{i+1}(x_s)$, $\forall s\in\mathbb{V}$. The
pairwise belief of random variables $x_s,x_t$ at iteration $i$ is
defined as:
\begin{equation}
B_{ts}^i(x_t,x_s)\propto\psi_{ts}(x_t,x_s)\psi_t(x_t)\psi_s(x_s)\prod_{u\in
\Gamma_t\backslash s}m_{ut}^i(x_t)\prod_{p\in \Gamma_s\backslash
t}m_{ps}^i(x_s).\label{eq:pairbelive}
\end{equation}

The computation tree first introduced in~\cite{Wiberg_1996} is
always applied in the analysis of LBP.  Bethe tree and SAW tree
are two types of computation trees used in~\cite{ihler07b}, which
will also be used in the rest of the paper. Both Bethe tree and
SAW tree are tree-structured unwrappings of a graph $\mathbb{G}$
from some node $v$. The Bethe tree, denoted as
$T_B(\mathbb{G},v,n)$, contains all paths of length $n$ from $v$
that do not backtrack, while the SAW tree, denoted as
$T_{SAW}(\mathbb{G},v,n)$, contains all paths of length $n\leq
|\mathbb{V}|+1$ that do not backtrack and have all nodes on the
path unique. The belief on node $v$ at iteration $n$ in
synchronous LBP is equivalent to the exact marginal of the root
$v$ in the $n$-level Bethe tree.

Figure~\ref{fig:msg passing}(c) illustrates the Bethe tree and the
SAW tree for the graphical model in Figure~\ref{fig:msg
passing}(b). For synchronous BP, each iteration of
Equations~\eqref{eq:msg},~\eqref{eq:singlebelive}
and~\eqref{eq:pairbelive} corresponds to a level in the Bethe
tree.

\subsection{Approaches to Analyze Convergence of LBP}
Various approaches have been presented to derive convergence
conditions for the sum-product algorithm, including Gibbs measure
[\cite{TatikondaJ02}], equivalent minimax problem
[\cite{heskes2004a}], and contraction property of LBP updates
[\cite{ihler05b,MooijKappen_IEEETIT_07}]. \cite{TatikondaJ02}
proved that, when the Gibbs measure on the corresponding
computation tree is unique, LBP converges to a unique fixed point.
\cite{heskes2004a} proved that, when the minima of Bethe free
energy is unique, there is a unique fixed point for LBP.
\cite{ihler05b} and \cite{MooijKappen_IEEETIT_07} used similar
methodology by applying $\ell_\infty$ measure on potential
functions. They proved that when LBP updating is a contractive
mapping, LBP will converge. They both compared their convergence
results with those of \cite{TatikondaJ02} and \cite{heskes2004a},
and showed that their results are stronger.
\cite{MooijKappen_IEEETIT_07} further showed that they derived
more general results than \cite{ihler05b}. Enlightened by the
discussion in \cite{ihler05b} and \cite{MooijKappen_IEEETIT_07},
and based on the framework of \cite{ihler05b}, we use a new
measure on message errors of LBP, in order to obtain distance
bound and accuracy bound.

Our contributions are as follows:

1. We present a tight upper- and lower- bound for multiplicative
message error $e(x)$ in Section \ref{sec:upper lower bd}.
Furthermore, based on the upper- and lower- bound, we derive tight
uniform distance bound and non-uniform distance bound for beliefs
$B(x)$ in Section \ref{sec:bounds belief}, which help to tighten
the accuracy bounds between beliefs and true marginals in Section
\ref{sec:accuracy} and correct the upper-bound on message
residuals for residual scheduling in Section \ref{sec:residual
scheduling}.

2. We investigate the relation between convergence of LBP with
sparsity and walk-summability of graphical models in Section
\ref{sec:LBP convergence}. We extend walk-summability for Gaussian
graphical models to general graphical models and compare the
tightness of existing convergence conditions.

3. We analyze the paramagnetic fixed point, ferromagnetic and
antiferromagnetic fixed points for uniform binary graphs using
message updating functions, and present true message error
variation functions to show dynamics of sum-product algorithm in
Section \ref{sec:uniform graph}.

\subsection{Message-Error Measures}
Define {\em message error} as a multiplicative function
$e_{ts}^i(x_s)$ that perturbs the fixed-point message
$m_{ts}(x_s)$. The perturbed message at iteration $i$ is hence
\[\hat{m}_{ts}^i(x_s)=m_{ts}(x_s)e_{ts}^i(x_s).\]Dealing with
normalized messages, we define {\em fixed-point incoming message
products} as
\[M_{ts}(x_t)\propto
\psi_t(x_t)\prod_{u\in\Gamma_t\backslash s}m_{ut}(x_t),\] and {\em
perturbed incoming message products} as
\[M_{ts}^i(x_t)\propto
\psi_t(x_t)\prod_{u\in\Gamma_t\backslash s}m_{ut}^i(x_t),\] and
{\em incoming error products} as
\[E_{ts}^i(x_t)=\prod_{u\in\Gamma_t\backslash
s}e_{ut}^i(x_t).\] We have \[M_{ts}^i(x_t)\propto
M_{ts}(x_t)E_{ts}^i(x_t).\] Thus, the {\em outgoing message error}
from node $t$ to node $s$ at iteration $i+1$ is:
\begin{align*}
e_{ts}^{i+1}(x_{s})=\frac{\hat{m}_{ts}^{i+1}(x_s)}{m_{ts}(x_s)}
=\frac{\int\psi_{ts}(x_t,x_s)M_{ts}(x_t) E_{ts}^i(x_t)
dx_t}{\int\psi_{ts}(x_t,x_s)M_{ts}(x_t) E_{ts}^i(x_t) dx_t dx_s}
\times\frac{\int\psi_{ts}(x_t,x_s)M_{ts}(x_t)dx_t
dx_s}{\int\psi_{ts}(x_t,x_s)M_{ts}(x_t)dx_t}.
\end{align*}

In the following, we will introduce two measures on message
errors.

\subsubsection{Dynamic-Range Measure}
\label{sec:dynamic range measure}

The {\em dynamic-range measure} of error introduced
by~\cite{ihler05b} is defined as:
\begin{equation}
d(e^i_{ts})=\max_{a,b}\sqrt{\frac{e^i_{ts}(a)}{e^i_{ts}(b)}}.
\label{eq:dynamic range}
\end{equation}
We have $d(e^i_{ts})\to 1$ when $e^i_{ts}(x)\to 1$. In
~\cite{ihler05b}~[Th.8] it was shown that when
$d(\psi_{ts})=\max_{a,b,c,d}\sqrt{\frac{\psi_{ts}(a,b)}{\psi_{ts}(c,d)}}$
is finite, the dynamic-range measure satisfies the following
contraction:
\begin{equation}
d(e_{ts}^{i+1})\leq\frac{d(\psi_{ts})^2d(E_{ts}^i)+1}{d(\psi_{ts})^2+d(E_{ts}^i)},
\label{eq:dynamiccontraction}
\end{equation}
in other words, based on the dynamic-range measure, the outgoing
message error is bounded by a non-linear function of the potential
function and the incoming error product.

\subsubsection{Maximum-Error Measure}
To study the dynamics of message error propagation, dealing
directly with errors is more interesting than dealing with dynamic
range. Moreover, we target to tighten distance bounds of LBP
results by using a new error measure. We thus introduce the
following {\em maximum multiplicative error} function as an error
measure:
\begin{equation}
\max_{x_s} e^{i+1}_{ts}(x_{s})=\max_{x_s}
   \frac{\int\psi_{ts}(x_t,x_s)M_{ts}(x_t)E^i_{ts}(x_t)dx_t}{\int\psi_{t\star}(x_t)M_{ts}(x_t)E^i_{ts}(x_t)dx_t}\times
\frac{\int\psi_{t\star}(x_t)M_{ts}(x_t)dx_t}{\int\psi_{ts}(x_t,x_s)M_{ts}(x_t)dx_t},\label{eq:max
measure}
\end{equation}
where $\psi_{t\star}(x_t) = \int\psi_{ts}(x_t,x_s)dx_s$. It is
immediate that the maximum-error measure approaches one when
multiplicative errors vanish. We will show later that this error
measure satisfies the following contraction:
\begin{equation}
\max_{x_s}e_{ts}^{i+1}(x_s)
\leq\left(\frac{d(\psi_{ts})d(\psi_{t\star})d(E_{ts}^i)+1}{d(\psi_{ts})d(\psi_{t\star})+d(E_{ts}^i)}\right)^2.\label{eq:maxcontraction}
\end{equation}

Dynamic-range measure and maximum-error measure are equivalent
when the maximum and minimum of an error function are reciprocal.
By comparison, maximum-error measure gives an absolute error,
while dynamic-range measure gives a relative error which is
invariant to scaling. We will show in the following of the paper
that maximum-error measure should be used, when we are interested
in absolute errors. Furthermore, both defined in dynamic-range
measure, $d(\psi_{ts})$ and $d(\psi_{t\star})$ correspond to two
types of matrix norms on $\psi_{ts}$. $d(\psi_{t\star})$ in the
RHS of Inequality~\eqref{eq:maxcontraction} characterizes the
effect of normalization factor on $\max_{x_s}
e^{i+1}_{ts}(x_{s})$. We will discuss the influence of
$d(\psi_{t\star})$ on error bounds in Section~\ref{sec:upper lower
bd}.

\subsection{Strength of Potential Functions}
\cite{heskes2004a},~\cite{ihler05b}
and~\cite{MooijKappen_IEEETIT_07} have defined measures of
strength of potential functions respectively, which help to obtain
better convergence conditions than those only related with
topology of graphical models. In the following, we will show the
relationship between beliefs and strength of pairwise potential
functions.

\subsubsection{Strength of Potential functions in~\cite{heskes2004a}}

\cite{heskes2004a} defined $\sigma_{t,s}$ as the strength of a
pairwise potential function $\psi_{ts}(x_t,x_s)$ meeting the
following equation:
\begin{equation*}
\frac{1}{1-\sigma_{t,s}}=\max_{x_t,x_s,\hat{x}_t,\hat{x}_s}\frac{\psi_{ts}(x_t,x_s)\psi_{ts}(\hat{x}_t,\hat{x}_s)}{\psi_{ts}(x_t,\hat{x}_s)\psi_{ts}(\hat{x}_t,x_s)}.
\end{equation*}
This strength is related with the correlation of LBP marginals as
follows:
\begin{equation*}
\frac{B_{ts}(x_t,\hat{x}_s)}{B_t(x_t)B_s(\hat{x}_s)}\leq\frac{1}{1-\sigma_{t,s}},
\end{equation*}
which was then utilized to give a better convergence condition
than the one only depending on graph topology.

\subsubsection{Strength of Potential functions in~\cite{ihler05b}}

\cite{ihler05b} proposed the dynamic-range measure $d(\psi_{ts})$
as the strength of potential functions $\psi_{ts}(x_t,x_s)$. Let
us restate the definition of the strength of potential functions
and its relationship with message errors in Section
\ref{sec:dynamic range measure} as follows:
\begin{eqnarray*}
&d(\psi_{ts})=\max_{x_t,x_s,\hat{x}_t,\hat{x}_s}\sqrt{\frac{\psi_{ts}(x_t,x_s)}{\psi_{ts}(\hat{x}_t,\hat{x}_s)}},\\
&d(e_{ts})\leq\frac{d(\psi_{ts})^2d(E_{ts})+1}{d(\psi_{ts})^2+d(E_{ts})}.
\end{eqnarray*}
By considering single node potentials $\psi_{t}(x_t)$ and
$\psi_{s}(x_s)$, \cite{ihler05b} weakened the strength of pairwise
potential functions by using the following dynamic range measure:
\begin{equation}
d(\psi_{ts})^2=\min_{\psi_t,\psi_s}d(\frac{\psi_{ts}}{\psi_t\psi_s})^2=\sup_{x_t,x_s,\hat{x}_t,\hat{x}_s}\sqrt{\frac{\psi_{ts}(x_t,x_s)\psi_{ts}(\hat{x}_t,\hat{x}_s)}{\psi_{ts}(\hat{x}_t,x_s)\psi_{ts}(x_t,\hat{x}_s)}}.\label{eq:dmr
with single potential}
\end{equation}
We will apply the strength of potential functions in Equation
\ref{eq:dmr with single potential} in our following results.

\subsubsection{Strength of Potential functions in~\cite{MooijKappen_IEEETIT_07}}

\cite{MooijKappen_IEEETIT_07} mentioned a measure of the strength
of potential function $\psi_{ts}(x_t,x_s)$, which is defined as:
\begin{equation}
N(\psi_{ts})=\max_{x_t\neq\hat{x}_t,x_s\neq\hat{x}_s}\frac{\sqrt{\frac{\psi_{ts}(x_t,x_s)\psi_{ts}(\hat{x}_t,\hat{x}_s)}{\psi_{ts}(\hat{x}_t,x_s)\psi_{ts}(x_t,\hat{x}_s)}}-1}{\sqrt{\frac{\psi_{ts}(x_t,x_s)\psi_{ts}(\hat{x}_t,\hat{x}_s)}{\psi_{ts}(\hat{x}_t,x_s)\psi_{ts}(x_t,\hat{x}_s)}}+1}
=\frac{1-\sqrt{1-\sigma_{t,s}}}{1+\sqrt{1-\sigma_{t,s}}}.\label{eq:mooij
function strength}
\end{equation}
They defined log dynamic range measure as metric of errors. Let
$\lambda_{ts}$ be the log message reparameterization of message
$m_{ts}$. That is,
\[\lambda_{ts}(x_s)=\log m_{ts}(x_s).\] Denote $\Delta\lambda$ as
the difference of log messages. Thus, we have
\[\Delta\lambda_{ts}(x_s)=\log\hat{m}_{ts}(x_s)-\log
m_{ts}(x_s)=\log e_{ts}(x_s).\] By the quotient norm and Equation
(41) in~\cite{MooijKappen_IEEETIT_07}, we have the following
metric of error
\begin{equation}
\|\overline{\Delta\lambda_{ts}}\|=\frac{1}{2}\sup_{x_s,x'_s}|\Delta\lambda_{ts}(x_s)-\Delta\lambda_{ts}(x'_s)|=\log
d(e_{ts}).\label{eq:mooij error measure}
\end{equation}

Using the quotient mapping approach of parallel LBP update
in~\cite{MooijKappen_IEEETIT_07}, we will find the relationship
between the strength of potential functions in
Equation~\eqref{eq:mooij function strength} and the metric of
message errors in Equation~\eqref{eq:mooij error measure} in the
following.

Because
$\|\overline{\Delta\lambda_{ts}}\|\leq\sum_{u\in\Gamma_{t}\backslash
s}\|\overline{\frac{\partial\lambda_{ts}}{\partial\lambda_{ut}}}\|\|\overline{\Delta\lambda_{ut}}\|$
and
$\|\overline{\frac{\partial\lambda_{ts}}{\partial\lambda_{ut}}}\|\leq
N(\psi_{ts})$ by Equation (36-45)
in~\cite{MooijKappen_IEEETIT_07}, we have
\begin{eqnarray*}
&\log d(e_{ts})\leq N(\psi_{ts})\sum_{u\in\Gamma_{t}\backslash
s}\log
d(e_{ut})\leq N(\psi_{ts})\log d(E_{ts}),\\
&or,\quad d(e_{ts})\leq d(E_{ts})^{N(\psi_{ts})}.
\end{eqnarray*}
We can observe that the smaller $N(\psi_{ts})$ is, the smaller is
$d(e_{ts})$; therefore, the faster is the contraction of errors.
The previous inequality reveals another result on contractive
property of message errors beside the one in
Equation~\eqref{eq:dynamiccontraction}.

In the following, we use the maximum-error measure in
Equation~\eqref{eq:max measure} to explore upper and lower bounds
on message errors, and upper bounds on the distances between
beliefs.

\subsection{Upper- and Lower-Bounds on Message Errors}
\label{sec:upper lower bd}

We have the multiplicative error function as follows:
\begin{equation*}
e^{i+1}_{ts}(x_{s})=
   \frac{\int\psi_{ts}(x_t,x_s)M_{ts}(x_t)E^i_{ts}(x_t)dx_t}{\int\psi_{t\star}(x_t)M_{ts}(x_t)E^i_{ts}(x_t)dx_t}\times
\frac{\int\psi_{t\star}(x_t)M_{ts}(x_t)dx_t}{\int\psi_{ts}(x_t,x_s)M_{ts}(x_t)dx_t},
\end{equation*}
where $\psi_{t\star}(x_t) = \int\psi_{ts}(x_t,x_s)dx_s$. We will
show that the error function is upper- and lower- bounded.

\begin{theorem}
Multiplicative outgoing errors are bounded as:
\begin{eqnarray*}
\left(\frac{d(\psi_{ts})d(\psi_{t\star})+d(E_{ts})}{d(\psi_{ts})d(\psi_{t\star})d(E_{ts})+1}\right)^2
\leq \min_{x_s}e_{ts}(x_s) \leq e_{ts}(x_s)\leq
\max_{x_s}e_{ts}(x_s) \leq
\left(\frac{d(\psi_{ts})d(\psi_{t\star})d(E_{ts})+1}{d(\psi_{ts})d(\psi_{t\star})+d(E_{ts})}\right)^2.
\end{eqnarray*}\label{theo:error upper&lower}
\end{theorem}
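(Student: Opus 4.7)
The plan is to rewrite the outgoing error $e_{ts}^{i+1}(x_s)$ as a ratio of two weighted averages of a common function under two related measures, and then bound that ratio by a scalar contraction inequality. Starting from the defining expression
\[
e_{ts}^{i+1}(x_s)=\frac{\int\psi_{ts}(x_t,x_s)M_{ts}(x_t)E_{ts}^i(x_t)\,dx_t}{\int\psi_{t\star}(x_t)M_{ts}(x_t)E_{ts}^i(x_t)\,dx_t}\cdot\frac{\int\psi_{t\star}(x_t)M_{ts}(x_t)\,dx_t}{\int\psi_{ts}(x_t,x_s)M_{ts}(x_t)\,dx_t},
\]
I would pull $\psi_{t\star}$ into a common base measure $d\mu(x_t)=\psi_{t\star}(x_t)M_{ts}(x_t)\,dx_t$ and introduce the tilt ratio $\rho_{x_s}(x_t)=\psi_{ts}(x_t,x_s)/\psi_{t\star}(x_t)$. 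The error then assumes the clean form
\[
e_{ts}^{i+1}(x_s)=\frac{\langle\rho_{x_s}\,E_{ts}^i\rangle_\mu\,/\,\langle E_{ts}^i\rangle_\mu}{\langle\rho_{x_s}\rangle_\mu},
\]
where $\langle f\rangle_\mu=\int f\,d\mu/\int d\mu$; in words, $e_{ts}^{i+1}(x_s)$ measures how much tilting $\mu$ by $E_{ts}^i$ shifts the weighted average of $\rho_{x_s}$.

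The main step is the scalar contraction lemma: for any positive measure $\mu$ and positive functions $\rho,E$ with dynamic ranges $\alpha=d(\rho)$ and $\beta=d(E)$,
\[
\frac{\langle\rho E\rangle_\mu/\langle E\rangle_\mu}{\langle\rho\rangle_\mu}\le\left(\frac{\alpha\beta+1}{\alpha+\beta}\right)^{\!2}.
\]
I expect this inequality to be the main technical obstacle. The natural route is variational: for fixed dynamic ranges, the left-hand side is maximized when $\rho$ and $E$ each take only their extremal values and do so co-monotonically, which reduces $\mu$ to a two-atom measure on $(\rho_{\min},E_{\min})$ and $(\rho_{\max},E_{\max})$ with ratios $\rho_{\max}/\rho_{\min}=\alpha^2$ and $E_{\max}/E_{\min}=\beta^2$. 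The remaining optimization over the split of mass is a one-variable problem whose critical point can be solved in closed form and, after simplification, yields exactly $((\alpha\beta+1)/(\alpha+\beta))^2$.

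Plugging $\rho=\rho_{x_s}$ and $E=E_{ts}^i$ into the lemma, and using the elementary fact $d(\psi_{ts}/\psi_{t\star})\le d(\psi_{ts})\,d(\psi_{t\star})$ together with the monotonicity of $((\alpha\beta+1)/(\alpha+\beta))^2$ in $\alpha$ (valid since $\beta=d(E_{ts}^i)\ge 1$), delivers the stated upper bound on $\max_{x_s}e_{ts}^{i+1}(x_s)$ uniformly in $x_s$. For the lower bound I would observe that $1/e_{ts}^{i+1}(x_s)$ has the identical structural form, with $\psi_{ts}$ and $\psi_{t\star}$ interchanged and $E_{ts}^i$ replaced by $1/E_{ts}^i$; since the claimed bound is symmetric in $d(\psi_{ts})$ and $d(\psi_{t\star})$, and $d(1/E_{ts}^i)=d(E_{ts}^i)$, the same lemma applies verbatim and gives $\min_{x_s}e_{ts}^{i+1}(x_s)\ge((d(\psi_{ts})d(\psi_{t\star})+d(E_{ts}^i))/(d(\psi_{ts})d(\psi_{t\star})d(E_{ts}^i)+1))^{2}$, completing the chain of inequalities in the theorem.
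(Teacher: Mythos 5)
Your proposal is correct, and the route is a genuinely different packaging of the same underlying variational idea. The paper keeps $\psi_{ts}$, $\psi_{t\star}$ and $E_{ts}$ as three separate objects, argues via a mediant inequality that each is extremal when it is an indicator-type function $1+(d(\cdot)^2-1)\chi(\cdot)$, and then optimizes over five overlap integrals $M_A,M_B,M_E,M_{AE},M_{BE}$, discovering that the worst case has $\chi_\psi=\chi_E$ and $\chi_\star$ supported on the complement; the final one-variable optimization over $M_E$ yields $\Delta_1$. You instead absorb $\psi_{t\star}M_{ts}$ into a base measure and collapse the two potentials into the single tilt ratio $\rho=\psi_{ts}/\psi_{t\star}$, reducing everything to a two-function scalar lemma $\langle\rho E\rangle_\mu\big/\bigl(\langle\rho\rangle_\mu\langle E\rangle_\mu\bigr)\le\bigl((\alpha\beta+1)/(\alpha+\beta)\bigr)^2$ followed by the submultiplicativity $d(\psi_{ts}/\psi_{t\star})\le d(\psi_{ts})d(\psi_{t\star})$ and monotonicity in $\alpha$. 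The two-atom reduction and closed-form optimum you assert do check out (the optimal mass is $p^*=1/(\alpha\beta+1)$, and the paper's worst-case overlap structure is exactly the configuration in which $d(\rho)$ attains the product $d(\psi_{ts})d(\psi_{t\star})$, so no tightness is lost); your co-monotonicity claim is the correct direction of the Chebyshev-type correlation argument, and your symmetry trick for the lower bound replaces the paper's second, nearly identical computation. What your version buys is a cleaner key lemma with fewer moving parts, a lower bound for free by symmetry, and in principle a slightly sharper bound in instances where $d(\psi_{ts}/\psi_{t\star})<d(\psi_{ts})d(\psi_{t\star})$; what the paper's version buys is that the extreme-point reduction is carried out explicitly for each of the three functions rather than deferred to a lemma you still owe a full proof of.
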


The proof appears in Appendix A.

Let us use the following denotation for our upper-bound:
\begin{equation}
\Delta_1 =
\left(\frac{d(\psi_{ts})d(\psi_{t\star})d(E_{ts})+1}{d(\psi_{ts})d(\psi_{t\star})+d(E_{ts})}\right)^2.
\label{eq:up}
\end{equation}
From~\cite[Th.2 and Th.8]{ihler05b}, we can derive their
upper-bound for $\max_{x_s}e_{ts}(x_s)$:
\begin{equation}
\max_{x_s} e_{ts}(x_s) \leq
d(e_{ts})^2\leq\left(\frac{d(\psi_{ts})^2
d(E_{ts})+1}{d(\psi_{ts})^2+d(E_{ts})}\right)^2=\Delta_2.
\label{eq:up not us}
\end{equation}
\begin{theorem}
The upper bound $\Delta_1$ on the multiplicative error provided in
Theorem \ref{theo:error upper&lower} is tighter than the upper
bound $\Delta_2$ from~\cite[Th.2 and Th.8]{ihler05b}:
\label{theo:better error upper }
\end{theorem}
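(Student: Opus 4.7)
The plan is to reduce the comparison $\Delta_1 \leq \Delta_2$ to a monotonicity statement about the map $f(x) = (xD+1)/(x+D)$ where $D = d(E_{ts})$, together with the elementary fact that marginalization cannot increase the dynamic range, i.e.\ $d(\psi_{t\star}) \leq d(\psi_{ts})$.

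First I would establish $d(\psi_{t\star}) \leq d(\psi_{ts})$. By the definition of $d(\psi_{ts})$, for every fixed $x_s$ and every pair $x_t,\hat{x}_t$ we have $\psi_{ts}(x_t,x_s) \leq d(\psi_{ts})^2\,\psi_{ts}(\hat{x}_t,x_s)$; integrating over $x_s$ yields $\psi_{t\star}(x_t) \leq d(\psi_{ts})^2\,\psi_{t\star}(\hat{x}_t)$, hence $d(\psi_{t\star})^2 \leq d(\psi_{ts})^2$. In particular $d(\psi_{ts})\,d(\psi_{t\star}) \leq d(\psi_{ts})^2$.

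Next I would examine $f(x) = (xD+1)/(x+D)$. A quick derivative gives $f'(x) = (D^2-1)/(x+D)^2 \geq 0$ whenever $D \geq 1$, which is automatic since the dynamic range $d(E_{ts}) \geq 1$. Plugging in $x_1 = d(\psi_{ts})\,d(\psi_{t\star})$ and $x_2 = d(\psi_{ts})^2$, the inequality $x_1 \leq x_2$ from the previous step gives
\begin{equation*}
\frac{d(\psi_{ts})\,d(\psi_{t\star})\,d(E_{ts})+1}{d(\psi_{ts})\,d(\psi_{t\star})+d(E_{ts})}
\;\leq\;
\frac{d(\psi_{ts})^2\,d(E_{ts})+1}{d(\psi_{ts})^2+d(E_{ts})}.
\end{equation*}
Finally, since both sides are $\geq 1$ (note $f(x) \geq 1$ is equivalent to $(x-1)(D-1) \geq 0$, which holds for $x \geq 1$, $D\geq 1$), squaring preserves the inequality and yields $\Delta_1 \leq \Delta_2$.

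I do not expect a serious obstacle here: the proof is really just the monotonicity of $f$ combined with the marginalization bound. The only subtlety is remembering to verify that both ratios exceed $1$ before squaring, since the bound $\Delta_2$ is stated as a square of a quantity that a priori could be less than one. Once monotonicity and positivity are in hand, the algebra is straightforward and could alternatively be phrased as showing $(d(\psi_{ts})^2 - d(\psi_{ts})d(\psi_{t\star}))(d(E_{ts})^2 - 1) \geq 0$ after cross-multiplication.
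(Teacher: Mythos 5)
Your proof is correct and follows essentially the same route as the paper: establish $d(\psi_{t\star})\leq d(\psi_{ts})$ via marginalization, then invoke monotonicity of the bound in that argument. The only difference is that you actually verify the monotonicity of $x\mapsto (xD+1)/(x+D)$ by computing $f'(x)=(D^2-1)/(x+D)^2\geq 0$, whereas the paper simply asserts that $\Delta_1$ is increasing in $d(\psi_{t\star})$; your version is the more complete one (though the final positivity check before squaring is unnecessary, since $0\leq a\leq b$ already implies $a^2\leq b^2$).
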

\begin{proof}
Because $\Delta_1$ in~\eqref{eq:up} is increasing in
$d(\psi_{t\star})$ we conclude that~\eqref{eq:up}
implies~\eqref{eq:up not us}, i.e., $\Delta_1\leq \Delta_2$,
because
\begin{align*}
&d(\psi_{t\star})
=\max_{a,b}\sqrt{\frac{\psi_{t\star}(a)}{\psi_{t\star}(b)}}
=\max_{a,b}\sqrt{\frac{\int\psi_{ts}(a,x_s)dx_s}{\int\psi_{ts}(b,x_s)dx_s}}
\\
&\leq\max_{a,b}\sqrt{\max_{c,d}\frac{\psi_{ts}(a,c)}{\psi_{ts}(b,d)}}
=\max_{a,b,c,d}\sqrt{\frac{\psi_{ts}(a,c)}{\psi_{ts}(b,d)}}=d(\psi_{ts}).
\end{align*}
\end{proof}

We can see how $d(\psi_{t\star})$ tightens the upper-bound by
analyzing the log-distance between $\Delta_1$ and $\Delta_2$. Let
$d(\psi_{t\star})=K d(\psi_{ts})$, where $1/d(\psi_{ts})\leq K\leq
1$. Therefore, the log-distance between $\Delta_1$ and $\Delta_2$ is
denoted as
\[D(K)=\log{\Delta_1}-\log{\Delta_2}=2\times\log{\{\frac{K
d(\psi_{ts})^2 d(E_{ts})+1}{K
d(\psi_{ts})^2+d(E_{ts})}\times\frac{d(\psi_{ts})^2+d(E_{ts})}{d(\psi_{ts})^2
d(E_{ts})+1}\}}.\]  We can easily find that the first gradient
$D^{(1)}(K)>0$ when $d(E_{ts})>1$. Thus, the maximum log-distance
between $\Delta_1$ and $\Delta_2$ is obtained at
$K=1/d(\psi_{ts})$. In other words, when $d(\psi_{t\star})=1$, our
upper-bound $\Delta_1$ is tighter than $\Delta_2$ at farthest.

\section{Distance Bounds on Beliefs}
\label{sec:bounds belief}

In the study of convergence, we are interested to know how beliefs
will vary at each iteration, when LBP fails to converge. We will
show that beliefs are bounded given the strength of potential
functions and the structure of the graph. In the following, we
will present our {\em uniform distance bound} and {\em non-uniform
distance bound} on beliefs.  Based on those bounds, we further
present {\em uniform convergence condition} and {\em non-uniform
convergence condition} for synchronous LBP.

\subsection{Uniform Distance Bound}

\begin{corollary}{\bf (Uniform Distance Bound)}\\
The log-distance bound of fixed points on belief at node $s$ is
\[\sum_{t\in\Gamma_s}\log(\frac{d(\psi_{ts})d(\psi_{t\star})
\varepsilon+1}{d(\psi_{ts})d(\psi_{t\star})+\varepsilon})^2,\]
where $\varepsilon$ should satisfy
\[\log{\varepsilon}=\max_{(s,p)\in
\mathbb{E}}\sum_{t\in\Gamma_s\backslash
p}\log({\frac{d(\psi_{ts})d(\psi_{t\star})\varepsilon+1}{d(\psi_{ts})d(\psi_{t\star})+\varepsilon}})^2.\]\label{coro:uniform
distance bd}
\end{corollary}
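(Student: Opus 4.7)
My plan is to treat any two LBP fixed-point configurations $\{m_{ts}\}$ and $\{\hat m_{ts}\}$ as a multiplicative perturbation of one another via $e_{ts}(x_s)=\hat m_{ts}(x_s)/m_{ts}(x_s)$ and propagate the dynamic range of these errors around the graph. The per-edge engine is Theorem~\ref{theo:error upper&lower}: its two-sided sandwich, combined with $d(e_{ts})=\sqrt{\max_{x_s}e_{ts}/\min_{x_s}e_{ts}}$, collapses into the single inequality $d(e_{ts})\leq\Delta_1^{(ts)}(d(E_{ts}))$, where I abbreviate $\Delta_1^{(ts)}(y):=\left((d(\psi_{ts})d(\psi_{t\star})y+1)/(d(\psi_{ts})d(\psi_{t\star})+y)\right)^2$ to stress the edge dependence and to make explicit that the sum in the corollary is $\sum_{t\in\Gamma_s}\log\Delta_1^{(ts)}(\varepsilon)$.

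To close the loop through each node, I use $E_{ts}(x_t)=\prod_{u\in\Gamma_t\backslash s}e_{ut}(x_t)$ together with sub-multiplicativity of the dynamic range, $d(\prod_i f_i)\leq\prod_i d(f_i)$, to obtain $d(E_{ts})\leq\prod_{u\in\Gamma_t\backslash s}d(e_{ut})$. Combining this with the per-edge bound, a uniform scalar $\varepsilon$ dominates every $d(E_{ts})$ simultaneously as soon as $\varepsilon\geq\prod_{t\in\Gamma_s\backslash p}\Delta_1^{(ts)}(\varepsilon)$ for every directed edge $(s,p)\in\mathbb{E}$; taking logarithms and maximizing over the edge choice recovers exactly the corollary's defining equation for $\varepsilon$. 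Since each $\Delta_1^{(ts)}(\cdot)$ is monotone increasing, continuous, and saturates at $(d(\psi_{ts})d(\psi_{t\star}))^2$ as $\varepsilon\to\infty$, iterating the scalar right-hand side from the error-free value $\varepsilon=1$ (corresponding to $d(E_{ts}^{(0)})=1$) produces a monotone sequence whose limit, when finite, is the smallest admissible uniform bound.

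The belief bound then falls out directly: the single-node factor $\psi_s$ cancels in $\hat B_s/B_s\propto\prod_{t\in\Gamma_s}e_{ts}$, and one more application of sub-multiplicativity gives $d(\hat B_s/B_s)\leq\prod_{t\in\Gamma_s}d(e_{ts})\leq\prod_{t\in\Gamma_s}\Delta_1^{(ts)}(\varepsilon)$, whose logarithm matches the stated $\sum_{t\in\Gamma_s}\log\Delta_1^{(ts)}(\varepsilon)$. The main obstacle, I expect, is the self-consistency step rather than any individual inequality: one has to argue that the scalar fixed-point equation admits a finite (smallest) solution $\varepsilon$ and that this $\varepsilon$ genuinely dominates every $d(E_{ts})$ on the graph at once. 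I would handle this by a coupled monotone induction on synchronous LBP iterations, using the base case $d(E_{ts}^{(0)})=1$ and the monotonicity of each $\Delta_1^{(ts)}$ to propagate the dominator; when the scalar iteration diverges the corollary becomes vacuous, which is consistent with the expectation that LBP may then fail to admit a unique fixed point.
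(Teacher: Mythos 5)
Your route is the paper's route: the per-edge contraction $d(e_{ts})\le\Delta_1^{(ts)}(d(E_{ts}))$ extracted from Theorem~\ref{theo:error upper&lower}, sub-multiplicativity $d(E_{ts})\le\prod_{u\in\Gamma_t\backslash s}d(e_{ut})$, a scalar self-consistency equation obtained by maximizing over directed edges, and the belief bound as a product over $\Gamma_s$. The gap sits exactly in the step you flag as the main obstacle, and your proposed resolution fails. Since $\Delta_1^{(ts)}(1)=1$, the scalar map $\varepsilon\mapsto\Delta_3(\varepsilon):=\max_{(s,p)\in\mathbb{E}}\prod_{t\in\Gamma_s\backslash p}\Delta_1^{(ts)}(\varepsilon)$ fixes $\varepsilon=1$, so your iteration from the base case $d(E_{ts}^{(0)})=1$ is the constant sequence $1$; its limit is the trivial root, which would assert that any two fixed points coincide. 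Relatedly, the claim that any $\varepsilon$ with $\varepsilon\ge\Delta_3(\varepsilon)$ ``dominates every $d(E_{ts})$'' is unjustified and is already false at $\varepsilon=1$. The inequality you actually possess about the true errors points the other way: setting $M=\max_{(t,s)}d(E_{ts})$ for two genuine fixed points, the recursion gives $M\le\Delta_3(M)$, so the quantity you need is the \emph{largest} $\varepsilon\ge1$ with $\varepsilon\le\Delta_3(\varepsilon)$, not the smallest admissible $\varepsilon$ in your sense.

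Two standard repairs exist, and the second is essentially what the paper's Appendix~A does. Either note that $G_{sp}(\log\varepsilon)=\log\Delta_3(\varepsilon)-\log\varepsilon$ satisfies $G_{sp}(0)=0$, is concave for $\log\varepsilon>0$, and tends to $-\infty$, so that $\{z\ge0:G_{sp}(z)\ge0\}=[0,\log\varepsilon^*]$ with $\log\varepsilon^*$ the unique nonzero crossing; then $G_{sp}(\log M)\ge0$ forces $M\le\varepsilon^*$. Or use the saturation you mention only in passing, $\Delta_1^{(ts)}(y)\le(d(\psi_{ts})d(\psi_{t\star}))^2$ for all $y$, to get the a priori bound $d(E_{sp})\le\prod_{t\in\Gamma_s\backslash p}(d(\psi_{ts})d(\psi_{t\star}))^2$ after a single update regardless of the initial perturbation, and iterate the scalar recursion \emph{downward} from that value; concavity makes the sequence decrease monotonically to $\varepsilon^*$. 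The paper's proof is precisely this analysis of the error bound-variation function (concavity, a single nonzero crossing, stability of that crossing). The remaining ingredients of your argument --- collapsing the two-sided bound of Theorem~\ref{theo:error upper&lower} into $d(e_{ts})\le\Delta_1^{(ts)}$, the cancellation of $\psi_s$ in the belief ratio, and the final product over $\Gamma_s$ --- match the paper.
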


The proof appears in Appendix A.

Let us reintroduce the {\em error bound-variation function} used
in the proof for Corollary \ref{coro:uniform distance bd}:
\begin{equation}
G_{sp}^{O}(\log{\varepsilon})=\log{\prod_{t\in\Gamma_s\backslash
p}(\frac{d(\psi_{ts})d(\psi_{t\star})\varepsilon+1}{d(\psi_{ts})d(\psi_{t\star})+\varepsilon}})^2-\log{\varepsilon},\varepsilon\geq
1.\label{eq:bd variation}
\end{equation}
Adopting the upper-bound $\Delta_2$ in \eqref{eq:up not us}, the
error bound-variation function is:
\begin{equation}
G_{sp}^{I}(\log{\varepsilon})=\log{\prod_{t\in\Gamma_s\backslash
p}(\frac{d(\psi_{ts})^2\varepsilon+1}{d(\psi_{ts})^2+\varepsilon}})^2-\log{\varepsilon},\varepsilon\geq
1.
\end{equation}

Those error bound-variation functions describe the upper-bound on
variation of maximal message errors throughout the belief
networks. We can see that
$G_{sp}^{O}(\log{\varepsilon})<G_{sp}^{I}(\log{\varepsilon})$. In
other words, the error bound-variation function using our
upper-bound $\Delta_1$ is tighter than that using
\cite{ihler05b}'s upper-bound $\Delta_2$, which is illustrated in
Fig.~\ref{fig:bd variation}. However, in \cite{ihler05b}, they
used the following error bound-variation function:
\begin{equation}
G_{sp}^{II}(\log{\varepsilon'})=\log{\prod_{t\in\Gamma_s\backslash
p}(\frac{d(\psi_{ts})^2\varepsilon'+1}{d(\psi_{ts})^2+\varepsilon'}})-\log{\varepsilon'},
\end{equation}
where $\varepsilon'$ is an upper-bound on dynamic range measure
$d(E_{ts})$. Since our $\varepsilon$ is an upper-bound on maximum
error measure $\max{E_{ts}}$, it's hard to compare
$G_{sp}^{O}(\log{\varepsilon})$ and
$G_{sp}^{II}(\log{\varepsilon'})$. In other words, we cannot say
our Uniform Distance Bound in Corollary \ref{coro:uniform distance
bd} is better than that in~\cite[Theorem 13]{ihler05b}.

\begin{figure}[htb]
\begin{center}
\includegraphics[height=7.44 cm,width=14.02 cm]{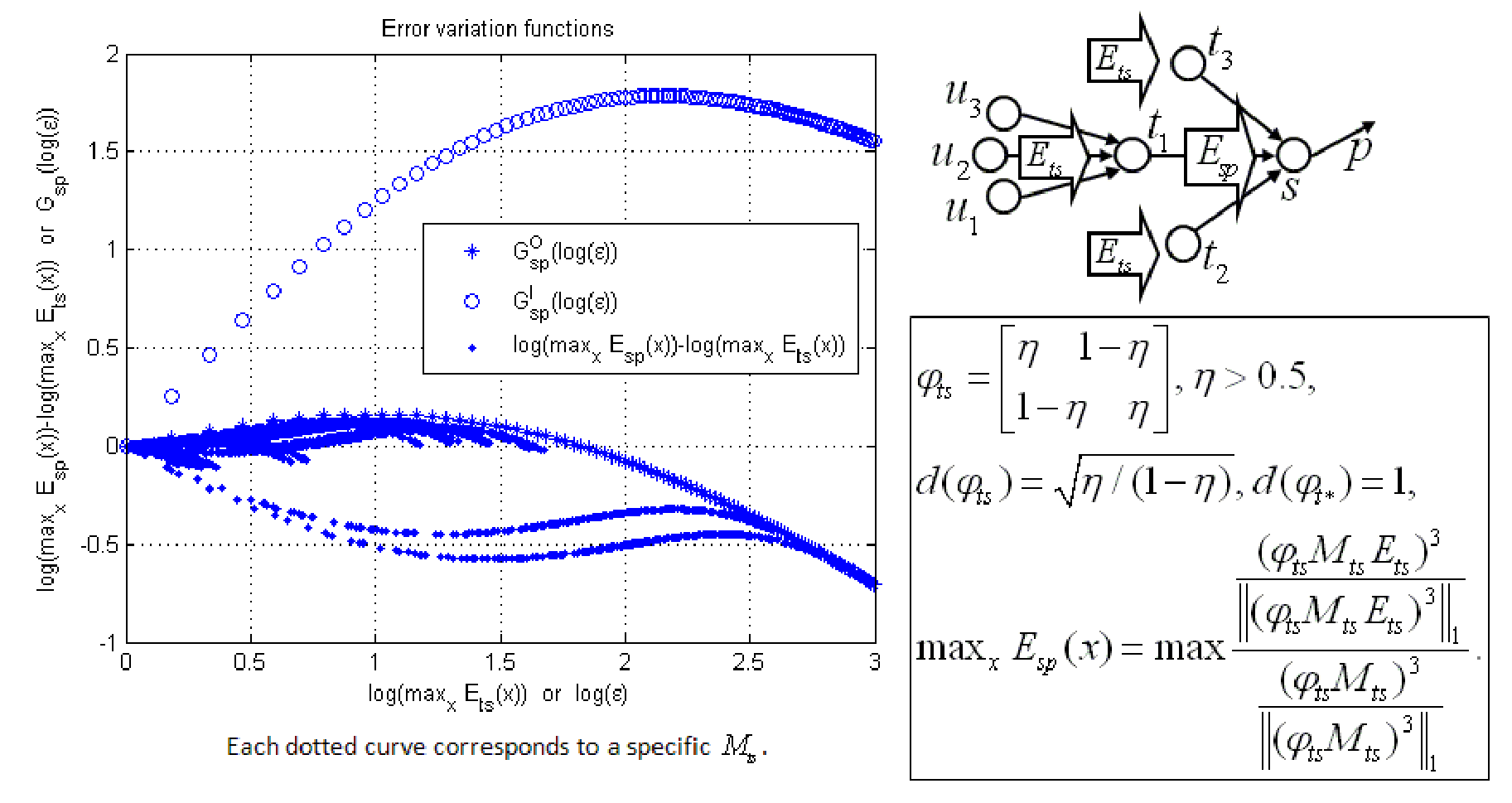}
\end{center}
\caption{Error bound-variation functions versus true
error-variation function for the local graph of node s. Potential
functions on edges $(t_1,s),(t_2,s),(t_3,s)$ are the same, where
$\eta=0.7$. We also impose the same incoming error product
$E_{ts}$ on nodes $t_1,t_2,t_3$. The dotted curves depict the true
error variation functions, $\{\log{\max_x E_{sp}(x)}-\log{\max_x
E_{ts}(x)},t\in\Gamma_s\backslash p\}$, which are enveloped by our
error bound-variation function $G_{sp}^{O}(\log{\varepsilon})$.}
\label{fig:bd variation}
\end{figure}
\begin{figure}[htb]
\begin{center}
\includegraphics[height=3.91 cm,width=10.25 cm]{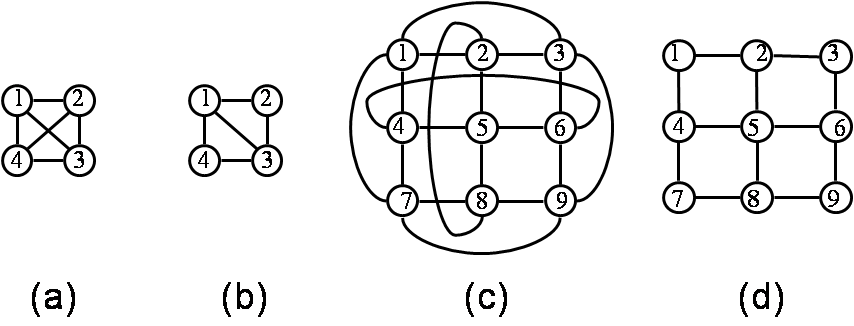}
\end{center}
\caption{Four simple graphical models: (a) a four-node fully
connected graph; (b) a partial graph that has one less edge than
(a); (c) a nine-node graph with uniform degree; and (d) a $3\times
3$ grid that is a partial graph of (c).} \label{fig:graphs}
\end{figure}

When the error bound-variation function is always less than zero,
the maximum of error bounds decreases after each iteration of LBP.
In other words, LBP will converge.  Therefore, our uniform
distance bound in Corollary \ref{coro:uniform distance bd} will
lead to a sufficient condition for convergence of LBP.
\begin{theorem}{\textbf{ (Uniform Convergence Condition)}}\\
Based on maximum-error measure, the sufficient condition for the
sum-product algorithm to converge to a unique fixed point is
\[\max_{(s,p)\in\mathbb{E}}\sum_{t\in\Gamma_s\backslash
p}\frac{d(\psi_{ts})d(\psi_{t\star})-1}{d(\psi_{ts})d(\psi_{t\star})+1}<\frac{1}{2}.\]\label{theo:uniform
converge}
\end{theorem}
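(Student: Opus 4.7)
The plan is to reduce the convergence question to showing that the error bound-variation function $G_{sp}^{O}(\log\varepsilon)$ of Equation~\eqref{eq:bd variation} is strictly negative on $(0,\infty)$. By Corollary~\ref{coro:uniform distance bd} (and the logic that precedes it), the maximum multiplicative message error at the next iteration is upper-bounded by the RHS of the fixed-point equation defining $\varepsilon$; so $G_{sp}^{O}(\log\varepsilon)<0$ for every $\varepsilon>1$ means the (upper-bound on the) error strictly contracts at every LBP sweep, forcing the only possible fixed point in $[1,\infty)$ to be $\varepsilon=1$, i.e.\ all message errors vanish and LBP converges to a unique fixed point.

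To turn this into the stated condition, I would work with $x=\log\varepsilon$ and set $K_t := d(\psi_{ts})d(\psi_{t\star})\ge 1$. Then
\[
G_{sp}^{O}(x)=\sum_{t\in\Gamma_s\setminus p} h_t(x) - x,\qquad h_t(x):=2\log\frac{K_t e^x+1}{K_t+e^x}.
\]
Note $G_{sp}^{O}(0)=0$. A short computation yields $h_t'(x)=\dfrac{2(K_t^2-1)e^x}{(K_t e^x+1)(K_t+e^x)}$, so $h_t'(0)=\dfrac{2(K_t-1)}{K_t+1}$, and therefore
\[
\bigl.\tfrac{d}{dx}G_{sp}^{O}\bigr|_{x=0}=2\sum_{t\in\Gamma_s\setminus p}\frac{K_t-1}{K_t+1}-1.
\]
Requiring this to be negative for every edge $(s,p)$ gives exactly the hypothesis of the theorem.

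It remains to bridge ``negative derivative at $0$'' with ``negative on all of $(0,\infty)$.'' The key step — and the one that requires actual work — is to show that each $h_t$ is concave on $(0,\infty)$, hence $G_{sp}^{O}$ is concave there. Writing $u=e^x$ and differentiating once more, one finds
\[
h_t''(x)=\frac{2K_t(K_t^2-1)(1-u^2)}{[(K_t u+1)(K_t+u)]^{2}}\cdot u,
\]
which is $\le 0$ for $u\ge 1$ whenever $K_t\ge 1$. Since a concave function whose derivative at $0$ is negative has a non-increasing, hence always-negative derivative on $(0,\infty)$, $G_{sp}^{O}$ is strictly decreasing on $(0,\infty)$; combined with $G_{sp}^{O}(0)=0$ this gives $G_{sp}^{O}(x)<0$ for all $x>0$, as required.

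The main obstacle is precisely this concavity step: without it the local condition $G_{sp}^{O\prime}(0)<0$ could in principle fail to propagate globally. Everything else is the clean first-derivative calculation plus the interpretation of the fixed-point iteration from Corollary~\ref{coro:uniform distance bd}. Taking the maximum over edges $(s,p)\in\mathbb{E}$ at the end yields the uniform statement claimed in the theorem.
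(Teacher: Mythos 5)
Your proposal is correct and follows essentially the same route as the paper's own proof: both work with the error bound-variation function $G_{sp}^{O}(\log\varepsilon)$, establish concavity on $z\ge 0$ via the second derivative (your expression for $h_t''$ matches the paper's $G_{sp}^{(2)}$ up to the summation), and then combine $G_{sp}^{O}(0)=0$ with $G_{sp}^{O\prime}(0)<0$ to conclude negativity on $(0,\infty)$, which yields exactly the stated condition. The only cosmetic difference is your substitution $K_t=d(\psi_{ts})d(\psi_{t\star})$ and $u=e^x$, which streamlines the derivative computations.
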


The proof appears in Appendix A.

Since we cannot compare $G_{sp}^{O}(\log{\varepsilon})$ and
$G_{sp}^{II}(\log{\varepsilon'})$ directly because $\varepsilon$
and $\varepsilon'$ correspond to different measures, let us take
the maximum of the two measures and deal with it as a new measure.
Specifically, let
$\tilde{\varepsilon}=\max\{\varepsilon,\varepsilon'\}$. After some
calculation, we can find that
$G_{sp}^{O}(\log{\tilde{\varepsilon}})$ is greater than
$G_{sp}^{II}(\log{\tilde{\varepsilon}})$. In other words,
$G_{sp}^{II}(\log{\tilde{\varepsilon}})$ is tighter than
$G_{sp}^{O}(\log{\tilde{\varepsilon}})$. Therefore, the
convergence condition derived from
$G_{sp}^{II}(\log{\tilde{\varepsilon}})$ will be better. The
following lemma provides a proof for this observation.

\begin{lemma}
Our sufficient condition $\sum_{t\in\Gamma_s\backslash
p}\frac{d(\psi_{ts})d(\psi_{t\star})-1}{d(\psi_{ts})d(\psi_{t\star})+1}<\frac{1}{2}$
is worse than the sufficient condition in~\cite{ihler05b}, which
is $\sum_{t\in\Gamma_s\backslash
p}\frac{d(\psi_{ts})^2-1}{d(\psi_{ts})^2+1}<1$.\label{lemma:weak
convergence}
\end{lemma}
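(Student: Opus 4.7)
The claim is comparative: the new sufficient condition implies the one from \cite{ihler05b}, so any graph satisfying ours already satisfies Ihler's (hence ours is the more restrictive of the two). The plan is therefore to show a \emph{term-by-term} inequality
\[
\frac{d(\psi_{ts})^2-1}{d(\psi_{ts})^2+1}\;\leq\; 2\cdot\frac{d(\psi_{ts})d(\psi_{t\star})-1}{d(\psi_{ts})d(\psi_{t\star})+1}
\]
for every edge $(t,s)$, and then sum over $t\in\Gamma_s\backslash p$. If this pointwise bound holds, then $\sum_t \frac{d(\psi_{ts})d(\psi_{t\star})-1}{d(\psi_{ts})d(\psi_{t\star})+1}<\tfrac12$ immediately gives $\sum_t\frac{d(\psi_{ts})^2-1}{d(\psi_{ts})^2+1}<1$, which is exactly what is needed.

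\textbf{Reduction to a polynomial inequality.} Set $a=d(\psi_{ts})$ and $b=d(\psi_{t\star})$. By definition of the dynamic-range measure, both quantities satisfy $a\geq 1$ and $b\geq 1$ (and in fact $b\leq a$, as already shown in the proof of Theorem~\ref{theo:better error upper } via $d(\psi_{t\star})\leq d(\psi_{ts})$, though only $b\geq 1$ will actually be needed). Since both denominators $a^2+1$ and $ab+1$ are strictly positive, cross-multiplying turns the target inequality into
\[
(a^2-1)(ab+1)\;\leq\;2(ab-1)(a^2+1).
\]
Expanding both sides and moving everything to the right yields
\[
2(ab-1)(a^2+1)-(a^2-1)(ab+1)\;=\;ab(a^2+3)-(3a^2+1),
\]
so the problem reduces to verifying $ab(a^2+3)\geq 3a^2+1$ whenever $a,b\geq 1$.

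\textbf{Finishing with a cubic factorisation.} Here the key algebraic observation is that, setting $b=1$ (the worst case for our side of the inequality), the expression collapses nicely:
\[
a(a^2+3)-(3a^2+1)\;=\;a^3-3a^2+3a-1\;=\;(a-1)^3\;\geq\;0.
\]
Since $b\geq 1$ only makes the left-hand side larger, the inequality $ab(a^2+3)\geq 3a^2+1$ holds throughout the range of interest, with equality exactly at $a=b=1$. Summing the term-by-term bound over $t\in\Gamma_s\backslash p$ and over the maximum over $(s,p)\in\mathbb{E}$ gives the desired implication, which is the content of the lemma.

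\textbf{Main obstacle.} There is no real analytic obstacle: the whole argument collapses once one spots the cubic identity $a^3-3a^2+3a-1=(a-1)^3$. The only subtlety worth flagging is that one must be careful that $b\geq 1$ (not merely $b\leq a$) is what drives the inequality; otherwise one might be tempted to use $b\leq a$ in the wrong direction and obtain nothing. Apart from this, the proof is a short cross-multiplication plus a perfect-cube factorisation.
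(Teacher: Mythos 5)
Your proof is correct and follows essentially the same route as the paper: the paper's entire proof is the one-line assertion of the term-by-term inequality $2\,\frac{d(\psi_{ts})d(\psi_{t\star})-1}{d(\psi_{ts})d(\psi_{t\star})+1}\geq\frac{d(\psi_{ts})^2-1}{d(\psi_{ts})^2+1}$, which you actually verify via the cross-multiplication $ab(a^2+3)\geq 3a^2+1$ and the factorisation $(a-1)^3\geq 0$ at the worst case $b=1$. Your observation that only $b\geq 1$ (not $b\leq a$) is needed is accurate, and your algebra checks out.
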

\begin{proof}
$2(\frac{d(\psi_{ts})d(\psi_{t\star})-1}{d(\psi_{ts})d(\psi_{t\star})+1})>\frac{d(\psi_{ts})^2-1}{d(\psi_{ts})^2+1}$.
\end{proof}

Our failure to improve the uniform convergence condition by using
{\em maximum-error measure} shows that {\em dynamic-range measure}
is better than {\em maximum-error measure} with respect to the
sensitivity of the measure to convergence. Nevertheless, as for
the upper bound on a multiplicative message error $e_{ts}(x)$,
{\em maximum-error measure} gives a tighter result, which is shown
in Theorem \ref{theo:better error upper }. Furthermore, the {\em
maximum-error measure} may provide better distance bounds for
beliefs.

Inspired by the sensitivity of dynamic-range measure to
convergence, we present the following {\em improved uniform
distance bound}, which first calculates the fixed-point values of
error bounds in dynamic-range measure, and then computes the error
bounds among beliefs in maximum-error measure.

\begin{corollary}{\bf (Improved Uniform Distance Bound)}\\
The log-distance bound of fixed points on belief at node $s$ is
\[\sum_{t\in\Gamma_s}\log(\frac{d(\psi_{ts})d(\psi_{t\star})
\varepsilon+1}{d(\psi_{ts})d(\psi_{t\star})+\varepsilon})^2,\]
where $\varepsilon$ should satisfy
\[\log{\varepsilon}=\max_{(s,p)\in
\mathbb{E}}\sum_{t\in\Gamma_s\backslash
p}\log{\frac{d(\psi_{ts})^2\varepsilon+1}{d(\psi_{ts})^2+\varepsilon}}.\]\label{coro:improved
uniform distance}
\end{corollary}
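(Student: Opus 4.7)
The plan is to hybridize the two error measures in the natural way the statement suggests: propagate the error bounds through the network using the dynamic-range contraction \eqref{eq:dynamiccontraction} (which Lemma~\ref{lemma:weak convergence} indicates is sharper for propagation purposes), and only at the final step---translating from incoming error products on node $s$ to the actual belief $B_s$---switch to the maximum-error bound from Theorem~\ref{theo:error upper&lower} (which Theorem~\ref{theo:better error upper } shows is the sharper statement for individual messages).

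First, I would establish a uniform upper bound $\varepsilon\geq 1$ on the dynamic range $d(E_{ts})$ of every incoming error product in the network. Because $d(\cdot)$ is multiplicative over products of independent factors, $\log d(E_{ts})=\sum_{u\in\Gamma_t\backslash s}\log d(e_{ut})$, and applying \eqref{eq:dynamiccontraction} gives the edgewise recursion
\[
\log d(E_{sp}^{i+1})\leq \sum_{t\in\Gamma_s\backslash p}\log\frac{d(\psi_{ts})^2\, d(E_{ts}^i)+1}{d(\psi_{ts})^2+d(E_{ts}^i)}.
\]
Taking the maximum over $(s,p)\in\mathbb{E}$ and letting the right-hand side be monotonically pushed to its least fixed point (this is where I would mimic the argument used to prove Corollary~\ref{coro:uniform distance bd}; since $\log\frac{d(\psi_{ts})^2 y+1}{d(\psi_{ts})^2+y}$ is concave and bounded in $\log y$, a smallest fixed point $\varepsilon$ exists provided some convergence-like condition is met, and $\varepsilon$ dominates $d(E_{ts})$ throughout the network at the fixed point). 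This yields exactly the fixed-point equation for $\varepsilon$ stated in the corollary.

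Second, having bounded $d(E_{ts})\leq\varepsilon$ for every edge, I apply Theorem~\ref{theo:error upper&lower} directly: since the map $y\mapsto\frac{ay+1}{a+y}$ is nondecreasing for $a\geq 1$ (its derivative equals $(a^2-1)/(a+y)^2$), substituting $\varepsilon$ for $d(E_{ts})$ preserves the upper and lower bounds, giving $\min_{x_s}e_{ts}(x_s)$ and $\max_{x_s}e_{ts}(x_s)$ sandwiched between the reciprocals of $\bigl(\tfrac{d(\psi_{ts})d(\psi_{t\star})\varepsilon+1}{d(\psi_{ts})d(\psi_{t\star})+\varepsilon}\bigr)^2$. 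Third, from \eqref{eq:singlebelive}, the ratio of two fixed-point beliefs at node $s$ is a product of the multiplicative errors $e_{ts}(x_s)$ across $t\in\Gamma_s$ (up to a normalization constant that drops out of the log-distance), so summing the individual log-bounds over $t\in\Gamma_s$ yields the claimed distance bound.

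The main obstacle is the same one that underlies Corollary~\ref{coro:uniform distance bd}: showing that the implicit fixed-point equation for $\varepsilon$ has a well-defined solution that actually dominates $d(E_{ts})$ uniformly, rather than merely being a formal stationary point of the recursion. The rest of the argument is essentially a bookkeeping exercise---verifying the monotonicity needed to substitute $\varepsilon$ for $d(E_{ts})$ inside Theorem~\ref{theo:error upper&lower}, and handling the normalization constants so that they cancel when one forms the log-distance of beliefs. The conceptual point, which the proof only needs to make explicit, is that the two measures can be safely decoupled: dynamic range is used precisely where the contraction propagates through many stages (and hence sharpness matters most), while the maximum-error bound is reserved for the single, final conversion into a belief distance.
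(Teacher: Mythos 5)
Your proposal matches the paper's proof: the paper likewise obtains the fixed-point bound $\varepsilon$ on the incoming error products in the dynamic-range measure (via the approach of Ihler's Theorem 12, i.e.\ the contraction \eqref{eq:dynamiccontraction} propagated to its stable fixed point exactly as in Corollary~\ref{coro:uniform distance bd}), and then applies Theorem~\ref{theo:error upper&lower} once at the end to convert the bound $d(E_{ts})\leq\varepsilon$ into the belief log-distance. Your write-up simply makes explicit the monotonicity and fixed-point-existence details that the paper leaves implicit by citation.
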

\begin{proof}
Using the approach in~\cite[Theorem 12]{ihler05b} to obtain
distance bounds on incoming error products in dynamic-range
measure and applying our Theorem \ref{theo:error upper&lower}, we
obtain our corollary.
\end{proof}

\begin{figure}[htb]
\begin{center}
\includegraphics[height=10.1 cm,width=14 cm]{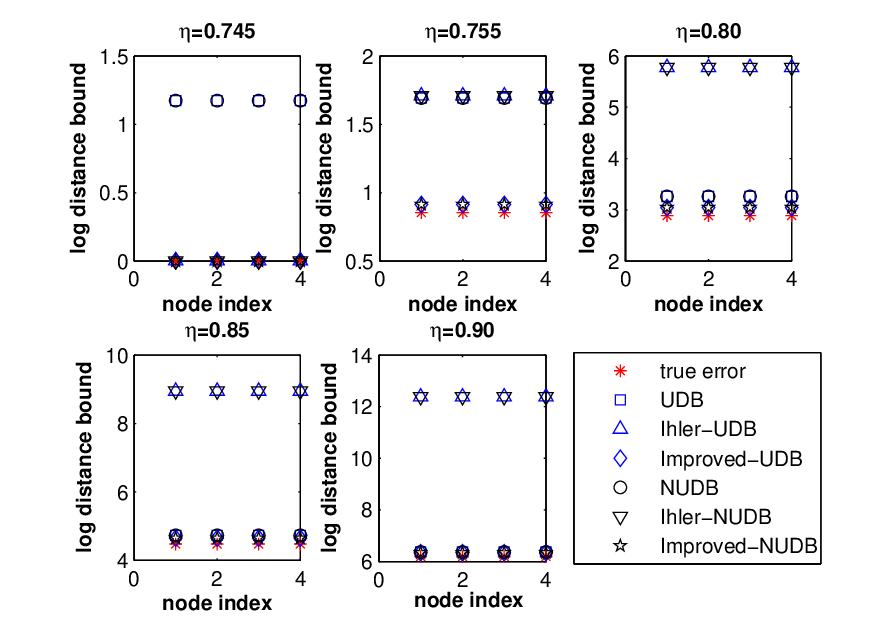}
\end{center}
\caption{True distance, uniform distance bounds and non-uniform
distance bounds for the graph in ~\ref{fig:graphs}(a) with various
$\eta$'s. The empirical critical value of $\eta$ for LBP to
converge is $\eta<0.75$.} \label{fig:bounds fig3a}
\end{figure}
\begin{figure}[htb]
\begin{center}
\includegraphics[height=10.33 cm,width=14 cm]{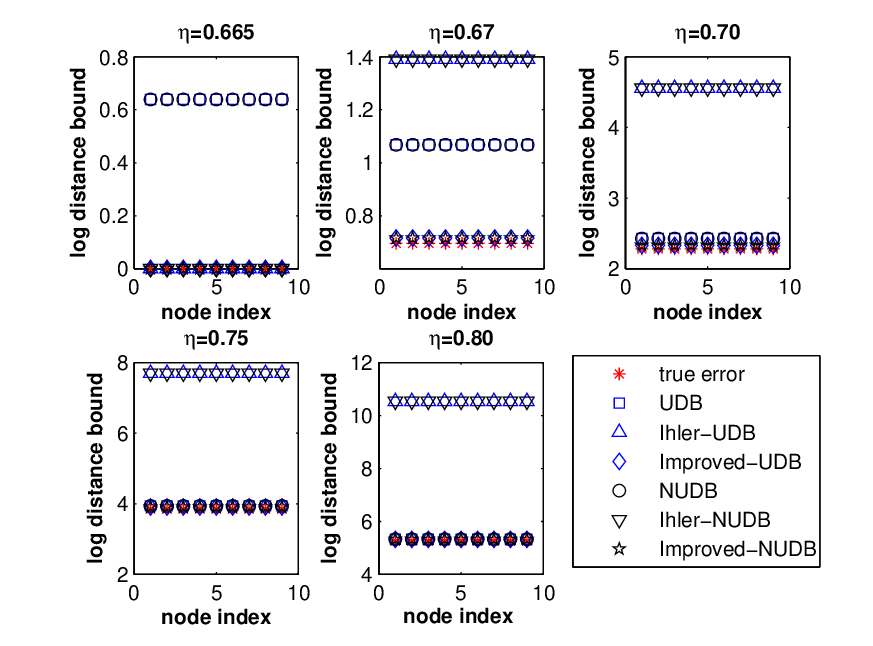}
\end{center}
\caption{True distance, uniform distance bounds and non-uniform
distance bounds for the graph in ~\ref{fig:graphs}(c) with various
$\eta$'s. The empirical critical value of $\eta$ for LBP to
converge is $\eta<0.67$.} \label{fig:bounds fig3c}
\end{figure}
\begin{figure}[htb]
\begin{center}
\includegraphics[height=10.1 cm,width=14 cm]{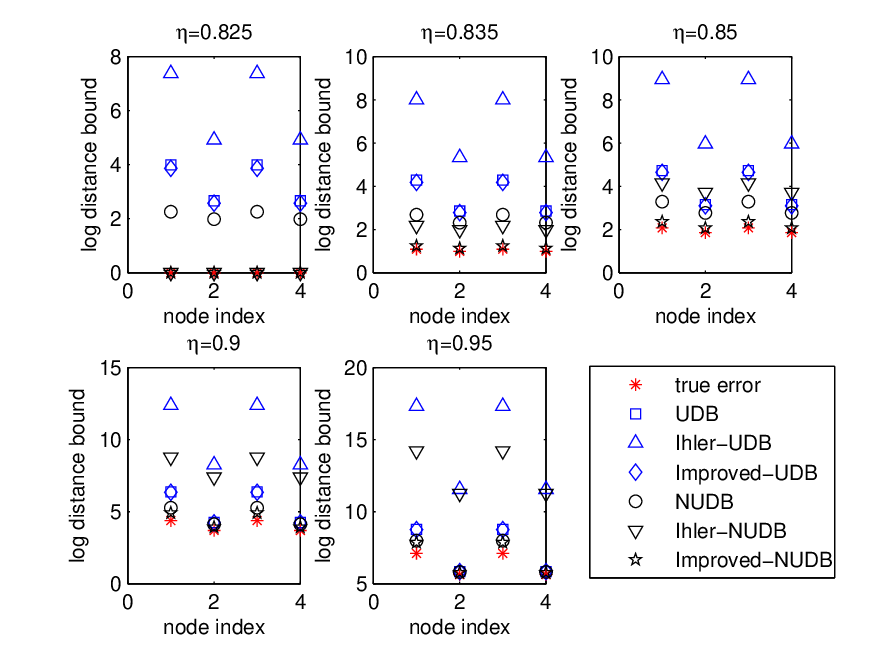}
\end{center}
\caption{True distance, uniform distance bounds and non-uniform
distance bounds for the graph in ~\ref{fig:graphs}(b) with various
$\eta$'s. The empirical critical value of $\eta$ for LBP to
converge is $\eta<0.83$.} \label{fig:bounds fig3b}
\end{figure}
\begin{figure}[htb]
\begin{center}
\includegraphics[height=10.33 cm,width=14 cm]{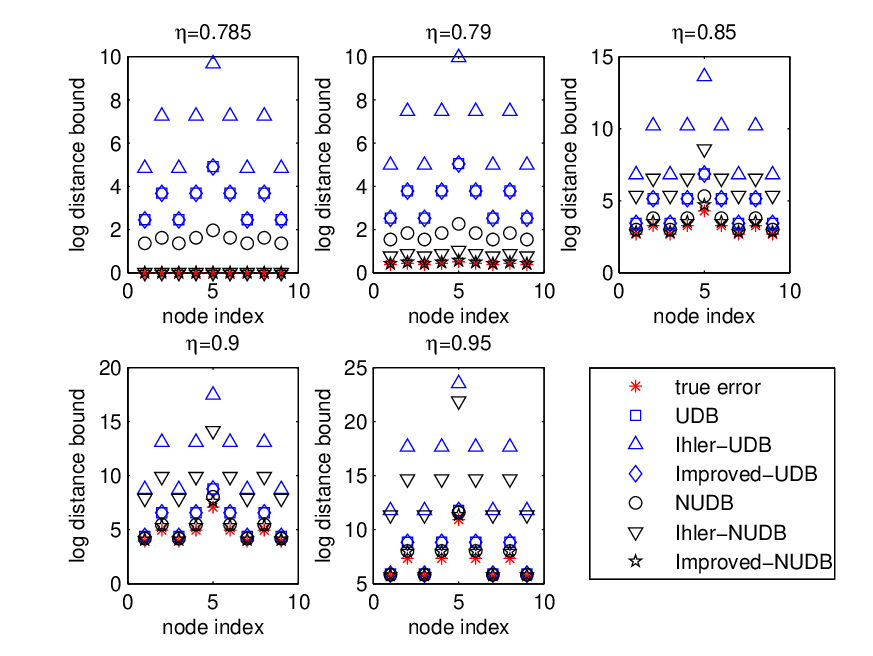}
\end{center}
\caption{True distance, uniform distance bounds and non-uniform
distance bounds for the graph in ~\ref{fig:graphs}(d) with various
$\eta$'s. The empirical critical value of $\eta$ for LBP to
converge is $\eta<0.79$.} \label{fig:bounds fig3d}
\end{figure}
Let see how our {\em uniform distance bound} and {\em improved
uniform distance bound} perform for graphical models in
Fig.~\ref{fig:graphs} by comparison to the {\em Fixed-point
distance bound} in~\cite{ihler05b}. Let all the pairwise potential functions be $\begin{pmatrix} \eta & 1-\eta \\
1-\eta & \eta \end{pmatrix}$ where $\eta>0.5$ and all the single
node potentials be $\begin{pmatrix} 1 \\ 1 \end{pmatrix}$.
Therefore, $d(\psi_{ts})=\sqrt{\eta/(1-\eta)}$ and
$d(\psi_{t\star})=1$ for $\forall~(t,s)\in \mathbb{E}$.

We compare the following bounds in our simulations: UDB, our
uniform distance bound in Corollary \ref{coro:uniform distance
bd}; Improved-UDB, our improved uniform distance bound in
Corollary \ref{coro:improved uniform distance}; Ihler-UDB,
Fixed-point distance bound in~\cite[Theorem 13]{ihler05b}.
Fig.\ref{fig:bounds fig3a} - Fig.~\ref{fig:bounds fig3d}
illustrate the performances of those bounds for graphs in
Figs.~\ref{fig:graphs}(a), (c), (b) and (d), respectively.

Graphs in Figs.~\ref{fig:graphs}(a) and (c) are uniform (uniform
degrees, uniform potential functions). Given a specific $\eta$,
all nodes have the same distance bound. The \textbf{critical
value} of $\eta$ is the value beyond which LBP will not converge.
For those two graphs, the empirical critical values of $\eta$ with
respect to the convergence of LBP are $0.75$ and $0.67$
respectively. We can see that, for various $\eta$'s, our
Improved-UDBs are very close to the true errors between beliefs.
Our UDBs become tighter when $\eta$ increases, while Ihler-UDBs
become looser. From Fig.~\ref{fig:bounds fig3a} and
Fig.~\ref{fig:bounds fig3c}, we can see that, compared to
Ihler-UDB, our UDB requires stricter critical values of $\eta$ to
ensure error bounds to be zeros. Specifically, for
Fig.~\ref{fig:bounds fig3a}, when $\eta=0.745$, our UDBs are
non-zeros and Ihler-UDBs are zeros; hence, our UDB requires
$\eta<0.745$ for the convergence of LBP, while Ihler-UDB only
requires $\eta<0.75$. Nevertheless, the critical values by our UDB
are $0.735$ for Fig.~\ref{fig:graphs}(a) and $0.66$ for
Fig.~\ref{fig:graphs}(c), which are close to the empirical
critical values. Based on our UDB and Ihler-UDB, our Improved-UDBs
will approximate zeros when $\eta$ approaches $0.75$ and give
tightest distance bounds for any $\eta$.

\subsection{Non-Uniform Distance Bound}
\label{sec:nonuniform bd}

Fig.~\ref{fig:graphs}(b) and Fig.~\ref{fig:graphs} (d) are
non-uniform graphs. Because uniform distance bounds are computed
locally, beliefs on the nodes with different topologies will have
different error bounds, which can be observed from
Fig.~\ref{fig:bounds fig3b} and Fig.~\ref{fig:bounds fig3d}. We
can also find that when the true errors are zeros, uniform bounds
are not all zeros. In other words, $\eta$ must be smaller than the
empirical critical value to ensure the largest uniform distance
bounds to be zero. Furthermore, in such cases, uniform convergence
conditions derived from uniform distance bounds will not perform
well as for uniform graphs. Therefore, when every loop contains
potentials with various strengths and each node has different
topology, we present the following {\em non-uniform distance
bound} and {\em improved non-uniform distance bound}.
\begin{corollary}{\textbf{ (Non-uniform Distance Bound)}}\\
The non-uniform log-distance bound of fixed points on belief at
node $s$ after $n\geq 1$ iterations is
\[\sum_{t\in\Gamma_s}\log(\frac{d(\psi_{ts})d(\psi_{t\star})
\varepsilon^n_{ts}+1}{d(\psi_{ts})d(\psi_{t\star})+\varepsilon^n_{ts}})^2,\]
where $\varepsilon^i_{ts}$ is updated by
\[\log{\varepsilon^{i}_{ts}}=\sum_{u\in\Gamma_t\backslash
s}\log({\frac{d(\psi_{ut})d(\psi_{u\star})\varepsilon^{i-1}_{ut}+1}{d(\psi_{ut})d(\psi_{u\star})+\varepsilon^{i-1}_{ut}}})^2\]
with initial condition
\[\log\varepsilon^1_{ut}=\sum_{v\in\Gamma_u\backslash t}\log
(d(\psi_{vu})d(\psi_{v\star}))^2.\]\label{coro:nonuniform
distance}
\end{corollary}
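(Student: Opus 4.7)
The plan is to mirror the proof of Corollary~3 (the uniform case) while replacing the single fixed point $\varepsilon$ by a family of per-edge, per-iteration quantities $\varepsilon^i_{ts}$ propagated through the graph. The invariant I would maintain by induction on $i \geq 1$ is that $\varepsilon^i_{ts}$ is a valid upper bound on $d(E^i_{ts})$ for every directed edge $(t,s)\in\mathbb{E}$.

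First I would invoke Theorem~\ref{theo:error upper&lower} at a single node to obtain a per-edge contraction. Since the upper and lower bounds in that theorem are reciprocals of one another, the same expression controls both $\max_{x_s} e^i_{ts}$ and the dynamic range $d(e^i_{ts})$, giving
\[
d(e^i_{ts}) \leq \left(\frac{d(\psi_{ts})d(\psi_{t\star})\,d(E^{i-1}_{ts})+1}{d(\psi_{ts})d(\psi_{t\star})+d(E^{i-1}_{ts})}\right)^{2}.
\]
Next I would combine this with the submultiplicativity of the dynamic range, namely $d(E^i_{ts}) = d\bigl(\prod_{u\in\Gamma_t\backslash s}e^i_{ut}\bigr) \leq \prod_{u\in\Gamma_t\backslash s} d(e^i_{ut})$, and with the monotonicity of $f_\alpha(x)=(\alpha x+1)/(\alpha+x)$ in $x\geq 1$ when $\alpha\geq 1$. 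Under the inductive hypothesis $d(E^{i-1}_{ut})\leq \varepsilon^{i-1}_{ut}$ on every incoming directed edge, these two ingredients yield
\[
d(E^i_{ts}) \leq \prod_{u\in\Gamma_t\backslash s}\left(\frac{d(\psi_{ut})d(\psi_{u\star})\,\varepsilon^{i-1}_{ut}+1}{d(\psi_{ut})d(\psi_{u\star})+\varepsilon^{i-1}_{ut}}\right)^{2},
\]
which is exactly the recursion that defines $\varepsilon^i_{ts}$. The base case $i=1$ is obtained by the worst-case incoming error: since $\sup_{\delta\geq 1}\bigl((\alpha\delta+1)/(\alpha+\delta)\bigr)^{2}=\alpha^{2}$, one recovers the stated initialization $\log \varepsilon^1_{ut}=\sum_{v\in\Gamma_u\backslash t}\log(d(\psi_{vu})d(\psi_{v\star}))^{2}$.

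Finally, the belief bound at $s$ follows by one further application of Theorem~\ref{theo:error upper&lower}. Because the perturbed belief satisfies $\hat B_s(x_s)/B_s(x_s)\propto \prod_{t\in\Gamma_s} e_{ts}(x_s)$, the log-distance between two fixed-point beliefs at $s$ is bounded by $\sum_{t\in\Gamma_s}\log d(e_{ts})^{2}$; substituting the per-edge Theorem~\ref{theo:error upper&lower} bound for each $d(e_{ts})^{2}$ together with the invariant $d(E^n_{ts})\leq \varepsilon^n_{ts}$ produces the closed-form sum in the statement. The main obstacle is the index bookkeeping: the recursion couples each directed edge $(t,s)$ to the $|\Gamma_t|-1$ directed edges incoming to $t$, so the inductive invariant must be maintained \emph{simultaneously} on every edge at each round; the worst-case initialization at $i=1$ and the monotonicity of $f_\alpha$ are what make the induction valid in the first round and preserve it in every subsequent round.
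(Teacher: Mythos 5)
Your induction is, in substance, exactly the paper's proof: the paper's own argument consists of redefining the error bound-variation function of Corollary~\ref{coro:uniform distance bd} per directed edge, $G_{ts}(\log\varepsilon^i_{ts})=\sum_{u\in\Gamma_t\backslash s}\log\Delta_{ut}(\varepsilon^{i-1}_{ut})-\log\varepsilon^i_{ts}$ with $\Delta_{ut}(\varepsilon)=\left(\frac{d(\psi_{ut})d(\psi_{u\star})\varepsilon+1}{d(\psi_{ut})d(\psi_{u\star})+\varepsilon}\right)^2$, which is precisely your invariant $d(E^i_{ts})\leq\varepsilon^i_{ts}$ propagated edge-by-edge through $d(E^i_{ts})\leq\prod_{u}d(e^i_{ut})\leq\prod_{u}\Delta_{ut}(\varepsilon^{i-1}_{ut})$. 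Your three ingredients --- the per-edge contraction $d(e^i_{ut})\leq\Delta_{ut}(d(E^{i-1}_{ut}))$ obtained from the reciprocity of the upper and lower bounds in Theorem~\ref{theo:error upper&lower}, the submultiplicativity of $d(\cdot)$, and the monotonicity of $f_\alpha$ together with the worst-case initialization $\sup_{\delta\geq1}f_\alpha(\delta)^2=\alpha^2$ --- are the same ones used in the appendix proof of the uniform case, so the recursion and its base case are correctly established.

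The one step that does not go through as written is the last one. You bound the belief log-distance by $\sum_{t\in\Gamma_s}\log d(e_{ts})^2$ and then substitute "the per-edge Theorem~\ref{theo:error upper&lower} bound for each $d(e_{ts})^2$." But Theorem~\ref{theo:error upper&lower} gives $\max e_{ts}\leq\Delta_{ts}$ and $\min e_{ts}\geq 1/\Delta_{ts}$, hence $d(e_{ts})^2=\max e_{ts}/\min e_{ts}\leq\Delta_{ts}^2$, not $\Delta_{ts}$; your route therefore produces $\sum_{t\in\Gamma_s}\log\Delta_{ts}(\varepsilon^n_{ts})^2$, which is twice the stated bound. (The identity $d(e)^2=\max e$ holds only when the extrema of $e$ are reciprocal, as for the completely uniform graphs of Corollary~\ref{coro:uniform distance bd completely uniform graph}, not in general.) The paper avoids this by measuring the belief distance directly in the maximum-error measure: $\log E_s(x_s)=\sum_{t\in\Gamma_s}\log e_{ts}(x_s)\leq\sum_{t\in\Gamma_s}\log\max_{x_s}e_{ts}(x_s)\leq\sum_{t\in\Gamma_s}\log\Delta_{ts}(\varepsilon^n_{ts})$, which is the stated expression --- this is exactly the place where the maximum-error measure, rather than the dynamic range, earns its keep. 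Replacing your detour through $d(e_{ts})^2$ with this direct application of the max-error contraction repairs the final constant; everything else stands.
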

\begin{proof}
The result can be easily proved from Corollary \ref{coro:uniform
distance bd}, by defining the {\em error bound-variation function}
in \eqref{eq:bd variation} as follows:
\begin{equation*}
G_{ts}(\log{\varepsilon_{ts}^{i}})=\log\prod_{u\in\Gamma_t\backslash
s}\Delta_{ut}(\varepsilon_{ut}^{i-1})-\log{\varepsilon_{ts}^{i}}=\sum_{u\in\Gamma_t\backslash
s}\log({\frac{d(\psi_{ut})d(\psi_{u\star})\varepsilon^{i-1}_{ut}+1}{d(\psi_{ut})d(\psi_{u\star})+\varepsilon^{i-1}_{ut}}})^2-\log{\varepsilon^{i}_{ts}}.
\end{equation*}

\end{proof}
Similarly, based on the fact that the dynamic-range measure gives
better convergence condition than the maximum-error measure, we
improve the previous non-uniform distance bound in the following.
\begin{corollary}{\bf (Improved Non-uniform Distance Bound)}\\
The improved non-uniform log-distance bound of fixed points on
belief at node $s$ after $n\geq 1$ iterations is
\[\sum_{t\in\Gamma_s}\log(\frac{d(\psi_{ts})d(\psi_{t\star})
\varepsilon^n_{ts}+1}{d(\psi_{ts})d(\psi_{t\star})+\varepsilon^n_{ts}})^2,\]
where $\varepsilon^i_{ts}$ is updated by
\[\log{\varepsilon^{i}_{ts}}=\sum_{u\in\Gamma_t\backslash
s}\log{\frac{d(\psi_{ut})^2\varepsilon^{i-1}_{ut}+1}{d(\psi_{ut})^2+\varepsilon^{i-1}_{ut}}}\]
with initial condition
$\log\varepsilon^1_{ut}=\sum_{v\in\Gamma_u\backslash t}\log
d(\psi_{vu})^2$.\label{coro:improved nonuniform distance}
\end{corollary}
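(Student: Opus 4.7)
The plan is to mirror the structure of Corollary~\ref{coro:nonuniform distance} but substitute the dynamic-range contraction of~\cite{ihler05b} for the maximum-error contraction in the per-iteration recursion, since the former gives a sharper shrinkage on incoming error products, while still finishing with Theorem~\ref{theo:error upper&lower} at the last step to get the sharpest bound on the \emph{belief} distance. In other words, I would run two distinct contractions in sequence: the dynamic-range one while iterating, and the maximum-error one once at node~$s$.

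First, I would set up the recursion. For each directed edge $(u,t)$, let $\varepsilon_{ut}^{i}$ be an upper bound on the dynamic range $d(E_{ut}^{i})$ of the incoming error product at iteration $i$. Using~\eqref{eq:dynamiccontraction}, each message error $d(e_{ut}^{i})$ is bounded by $(d(\psi_{ut})^2 \varepsilon_{ut}^{i-1}+1)/(d(\psi_{ut})^2+\varepsilon_{ut}^{i-1})$. Since $d(\cdot)$ is submultiplicative over products, the incoming error product $d(E_{ts}^{i})=d(\prod_{u\in\Gamma_t\backslash s} e_{ut}^{i})\leq\prod_{u\in\Gamma_t\backslash s} d(e_{ut}^{i})$, so taking logs gives exactly the stated update rule for $\varepsilon_{ts}^{i}$. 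The initial condition $\log\varepsilon^{1}_{ut}=\sum_{v\in\Gamma_u\backslash t}\log d(\psi_{vu})^2$ is the degenerate case $\varepsilon^{0}_{vu}=\infty$ of the recursion, i.e., the trivial worst-case message error $d(e_{vu})\leq d(\psi_{vu})^2$ one obtains by letting the incoming error product be arbitrary.

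Second, after $n$ iterations I would switch measures. Given a bound $\varepsilon_{ts}^{n}\geq d(E_{ts}^{n})$ for every neighbor $t$ of $s$, Theorem~\ref{theo:error upper&lower} immediately yields
\[\max_{x_s} e_{ts}^{n+1}(x_s)\leq \Bigl(\frac{d(\psi_{ts})d(\psi_{t\star})\varepsilon_{ts}^{n}+1}{d(\psi_{ts})d(\psi_{t\star})+\varepsilon_{ts}^{n}}\Bigr)^{2},\]
and an analogous lower bound by taking reciprocals. Since the log-distance between two fixed-point beliefs $B_s$ and $\hat B_s^{n+1}$ at node $s$ is at most $\sum_{t\in\Gamma_s}\log\max_{x_s}e_{ts}^{n+1}(x_s)$ (by~\eqref{eq:singlebelive} and the multiplicativity of errors over incoming messages), summing the displayed inequality over $t\in\Gamma_s$ yields the claimed belief bound.

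The only nontrivial step is justifying that we may chain the dynamic-range contraction~\eqref{eq:dynamiccontraction} with the maximum-error bound of Theorem~\ref{theo:error upper&lower} on the very last leg, i.e., that the $\varepsilon_{ts}^{n}$ produced by the tighter dynamic-range recursion is a legitimate input to the maximum-error formula. This is fine because Theorem~\ref{theo:error upper&lower} is monotone increasing in $d(E_{ts})$, and any valid upper bound on $d(E_{ts}^{n})$ --- in particular one obtained by the tighter dynamic-range updates rather than the maximum-error updates of Corollary~\ref{coro:nonuniform distance} --- may be substituted without affecting validity. The expected main obstacle is purely bookkeeping: verifying that the recursion is well-defined (e.g., $\varepsilon_{ts}^{i}\geq 1$ for all $i$, so that Theorem~\ref{theo:error upper&lower} applies in the correct direction) and that the initial condition correctly matches the convention that no error information is assumed before iteration~$1$.
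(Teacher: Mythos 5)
Your proposal is correct and follows essentially the same route as the paper, whose proof is a one-line appeal to the dynamic-range recursion of Ihler's Theorem 14 for the incoming error products followed by a single application of Theorem~\ref{theo:error upper&lower} at node $s$. You simply spell out the details the paper omits (the submultiplicativity of $d(\cdot)$, the degenerate initial condition, and the monotonicity of $\Delta_1$ in $d(E_{ts})$ that licenses chaining the two measures), all of which check out.
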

\begin{proof}
Using the approach in~\cite[Theorem 14]{ihler05b} to obtain
distance bounds on incoming error products in dynamic-range
measure and applying our Theorem \ref{theo:error upper&lower}, we
obtain our corollary.
\end{proof}

Let see the performaces of our {\em non-uniform distance bound}
and {\em improved non-uniform distance bound} for the graphs in
Fig.~\ref{fig:graphs} compared with the non-uniform distance bound
in~\cite[Thm. 14]{ihler05b}. We denote the bounds in our
simulation as follows: NUDB, our non-uniform distance bound in
Corollary \ref{coro:nonuniform distance}; Improved-NUDB, our
improved non-uniform distance bound in Corollary
\ref{coro:improved nonuniform distance}; Ihler-NUDB, non-uniform
distance bound in~\cite[Theorem 14]{ihler05b}.

For uniform graphs in Fig.~\ref{fig:graphs}(a) and (c), NUDB
performs exactly the same as UDB. However, for non-uniform graphs
in Fig.~\ref{fig:graphs}(b) and (d), because NUDB propagates error
bounds throughout the whole graph rather than on a local
neighborhood, NUDBs are tighter than UDBs, which can be observed
from Fig.~\ref{fig:bounds fig3b} and Fig.~\ref{fig:bounds fig3d}.
For various $\eta$'s, our Improved-NUDBs always approach the true
errors. Therefore, when our Improved-NUDB is zero, $\eta$ almost
equals the empirical critical value to ensure convergence of LBP.
Though worse than Improved-NUDB, our NUDB performs better than
Ihler-NUDB when $\eta$ is far way from the area of convergence.

\subsubsection{Non-Uniform Convergence}
\label{sec:nonuniform convergence}

Based on our Improved-NUDB or Ihler-NUDB, a sufficient convergence
condition of LBP can be derived, which is based on the
dynamic-range measure of propagating errors.

For each cycle-involved vertex $v$, $T(\mathbb{G},v)$ is the
corresponding computation tree. Let $\mathbb{V}$ be the set of
vertices in the computation tree. For $w_{i}\in
\mathbb{V},i=0,...,|\mathbb{V}|-1$, $l(w_{i})$ is the labelling
function which maps $w_{i}$ to the original vertex in
$\mathbb{G}$. Let $l(w_0)=v$.

\begin{theorem}{\bf (Non-Uniform Convergence Condition)}\\
For a graphical model $\mathbb{G}(\mathbb{V},\mathbb{E})$,
$\{T(\mathbb{G},v), v\in\mathbb{V}\}$ is the set of computation
trees. Let $\mathbb{\bar{E}}$ denote the set of directed edges.
For each $T(\mathbb{G},v), v\in\mathbb{V}$, given
$vu\in\mathbb{\bar{E}}$, $\mathcal{H}_{vu}$ denotes an expression
on edge $vu$:
\begin{equation}
\mathcal{H}_{vu}=\sum_{w_i\in\Gamma_{v}\backslash
u}\frac{d(\psi_{l(w_i)v})^2-1}{d(\psi_{l(w_i)v})^2+1}\sum_{w_j\in\Gamma_{w_i}\backslash
v}\frac{d(\psi_{l(w_j)l(w_i)})^2-1}{d(\psi_{l(w_j)l(w_i)})^2+1}...
\sum_{w_r\in\Gamma_{w_q}\backslash
w_p}\frac{d(\psi_{l(w_r)l(w_q)})^2-1}{d(\psi_{l(w_r)l(w_q)})^2+1},\label{eq:non-uniform
walk sums}
\end{equation}
where $\Gamma_{w_i}$ is the set of neighbors of $w_i$. The
non-uniform sufficient condition for the sum-product algorithm to
converge to a local stable fixed point is:
\begin{equation*}
\max_{vu\in\mathbb{\bar{E}}}\mathcal{H}_{vu}<1.
\end{equation*}
\label{theo:nonuniform convergence}
\end{theorem}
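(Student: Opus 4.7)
The plan is to linearize the Improved-NUDB recursion of Corollary~\ref{coro:improved nonuniform distance} about its trivial fixed point $\varepsilon\equiv 1$, unroll the resulting linear bound along the computation tree until the nested walk sum in \eqref{eq:non-uniform walk sums} appears, and then read off $\max_{vu}\mathcal{H}_{vu}<1$ as the contraction condition that forces $\varepsilon^{n}_{ts}\to 1$, so that the distance bound of Corollary~\ref{coro:improved nonuniform distance} vanishes and LBP converges to a unique local stable fixed point.

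The first step is a pointwise linearization in log coordinates. Writing $y^{i}_{ts}=\log\varepsilon^{i}_{ts}$ and $a=d(\psi_{ut})$, the recursion becomes $y^{i}_{ts}=\sum_{u\in\Gamma_{t}\setminus s}f_{a}(y^{i-1}_{ut})$ with $f_{a}(y)=\log\!\frac{a^{2}e^{y}+1}{a^{2}+e^{y}}$. Direct differentiation gives $f_{a}(0)=0$, $f_{a}'(0)=(a^{2}-1)/(a^{2}+1)$, and $f_{a}''(y)<0$ for $y>0$ whenever $a\geq 1$, so $f_{a}$ is concave on $[0,\infty)$. Because the initial condition $\log\varepsilon^{1}_{ut}=\sum_{v\in\Gamma_{u}\setminus t}\log d(\psi_{vu})^{2}\geq 0$ is nonnegative and the recursion preserves nonnegativity (again because $a\geq 1$), the tangent bound $f_{a}(y)\leq \frac{a^{2}-1}{a^{2}+1}\,y$ holds along the entire orbit, yielding
\[
y^{i}_{ts}\;\leq\;\sum_{u\in\Gamma_{t}\setminus s}\frac{d(\psi_{ut})^{2}-1}{d(\psi_{ut})^{2}+1}\,y^{i-1}_{ut}.
\]

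The second step unfolds this affine bound on the computation tree $T(\mathbb{G},v)$. Because the recursion always excludes the backtracking neighbour $s$ at each iterate, iterating $n$ times maps the directed edge $(v,u)$ to a sum over non-backtracking walks of length $n$ in $T(\mathbb{G},v)$ emanating from $(v,u)$, each weighted by the product of coefficients $(d(\psi)^{2}-1)/(d(\psi)^{2}+1)$ along its edges. This is exactly the length-$n$ truncation of $\mathcal{H}_{vu}$ in \eqref{eq:non-uniform walk sums}, multiplied by the finite initial value $\log\varepsilon^{1}$. If $\max_{vu}\mathcal{H}_{vu}<1$, the truncated walk sums decay geometrically in $n$, so every $y^{n}_{ts}\to 0$, so every $\varepsilon^{n}_{ts}\to 1$, and the log-distance bound of Corollary~\ref{coro:improved nonuniform distance} tends to zero, giving convergence to a unique local stable fixed point.

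The main technical obstacle is the concavity step, which is the sole place where the nonlinear update gets replaced by a clean walk sum; I would prove it by showing that $f_{a}''(y)/f_{a}'(y)=a^{2}(1-e^{2y})/[(a^{2}e^{y}+1)(a^{2}+e^{y})]\leq 0$ on $y\geq 0$, and by treating $a=1$ (coefficient zero, instantly convergent) as a degenerate case. A secondary bookkeeping issue is matching the $\Gamma_{t}\setminus s$ exclusion in the Improved-NUDB recursion with the non-backtracking structure of the computation-tree walks defining $\mathcal{H}_{vu}$; this is precisely why $\mathcal{H}_{vu}$ is indexed by directed edges and lives on $T(\mathbb{G},v)$ rather than on $\mathbb{G}$, and it is also the feature that makes the condition $\max_{vu}\mathcal{H}_{vu}<1$ strictly tighter than the uniform condition of Theorem~\ref{theo:uniform converge}.
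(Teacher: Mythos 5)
Your proposal is correct and arrives at the same condition $\max_{vu}\mathcal{H}_{vu}<1$, but by a route that differs from the paper's in a way worth noting. The paper forms the fully composed, multi-level error bound-variation function $G_{vu}(\log\varepsilon)$ down to the leaves of $T(\mathbb{G},v)$, observes $G_{vu}(0)=0$, and computes $\partial G_{vu}/\partial\log\varepsilon$ at $\log\varepsilon=0$ by the chain rule; the product of the derivatives $f'(0)=(d(\psi)^2-1)/(d(\psi)^2+1)$ along the tree yields exactly $\mathcal{H}_{vu}-1$, and the condition is read off as $G_{vu}'(0)<0$. Crucially, the paper then states that because concavity of the \emph{composite} $G_{vu}$ is not guaranteed, only convergence in an infinitesimal neighborhood of zero error can be concluded --- hence the "local stable fixed point" phrasing of the theorem. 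You instead apply concavity to each single-edge update $f_a(y)=\log\frac{a^2e^y+1}{a^2+e^y}$ individually on $[0,\infty)$ (your computation of $f_a''/f_a'$ is correct, and since $f_a(0)=0$, $f_a$ increasing, and the initial condition is nonnegative, the orbit stays in $[0,\infty)$ where the tangent bound $f_a(y)\le f_a'(0)\,y$ is valid), which dominates the nonlinear recursion by a linear non-backtracking-walk recursion \emph{globally} on $\varepsilon\ge 1$, not just near $\varepsilon=1$. This buys you more than the paper claims: the error bounds contract from any admissible initial value, whereas the paper settles for a local statement. Two small points to tighten: first, your claim that the truncated walk sums "decay geometrically in $n$" needs the block-submultiplicativity step (a non-backtracking walk of length $kN$ factors into $k$ segments of length $N$, so a uniform bound $\mathcal{H}_{vu}\le\rho<1$ over all directed edges gives decay like $\rho^{k}$); this is routine for the Bethe-tree version but should be said, and it does not transfer verbatim to the SAW-tree variant where segments must remain jointly self-avoiding. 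Second, the initial values $\log\varepsilon^1_{ut}$ differ across edges, so the unrolled bound is the walk sum times the \emph{maximum} initial value rather than a common factor --- harmless, but worth a sentence.
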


The proof appears in Appendix A. Based on the type of computation
tree, the non-uniform convergence condition will be called
\textit{non-uniform convergence condition based on $N$-th level
Bethe tree}, or \textit{non-uniform convergence condition based on
infinite Bethe tree}, or \textit{non-uniform convergence condition
based on SAW tree}. Our non-uniform convergence condition based on
infinite Bethe tree is equivalent to \cite[Theorem 14]{ihler05b}.

When a graph has uniform potential functions with strength
$d(\psi)$, to ensure convergence, it is sufficient to have
\begin{equation}
\max_{vu\in\mathbb{\bar{E}}}\sum_{w_i\in\Gamma_{v}\backslash
u}\frac{d(\psi)^2-1}{d(\psi)^2+1}\sum_{w_j\in\Gamma_{w_i}\backslash
v}\frac{d(\psi)^2-1}{d(\psi)^2+1}...\sum_{w_r\in\Gamma_{w_q}\backslash
w_p}\frac{d(\psi)^2-1}{d(\psi)^2+1}<1.\label{eq:walk sums uniform}
\end{equation}

Let us apply our {\em non-uniform convergence condition based on
SAW tree} to the graphs in Fig.~\ref{fig:graphs}(b) and (d) with
uniform potential functions as in the previous simulations. For
the graph in Fig.~\ref{fig:graphs}(b), we obtain the critical
value $\eta<0.78$ for convergence of LBP, which is closer to the
empirical value $\eta<0.83$, compared to $\eta<0.75$ obtained by
{\em uniform convergence condition}. For the graph in
Fig.~\ref{fig:graphs}(d), we obtain the critical value
$\eta<0.77$, while the empirical value is $\eta<0.79$ and the
critical value obtained by {\em uniform convergence condition} is
$\eta<0.67$. Therefore, our {\em non-uniform convergence
condition} is tighter than our {\em uniform convergence
condition}. However, since our {\em non-uniform convergence
condition} is derived from \cite[Theorem 14]{ihler05b}, we do not
improve the convergence condition. Rather than in the form of
distance bound in \cite[Theorem 14]{ihler05b}, we express the
convergence condition explicitly, which will be used in our later
analysis of walk-summability of graphical models. Furthermore, we
improve distance bounds between beliefs in Corollary
\ref{coro:improved uniform distance} and Corollary
\ref{coro:improved nonuniform distance}, which are useful in
tightening accuracy bounds in Section \ref{sec:accuracy}.

\section{Convergence of Loopy Belief Propagation}
\label{sec:LBP convergence}
\subsection{Sparsity and Convergence}
It lacks theoretical verification that the more sparse a graph is,
the less stricter is its convergence condition. Since the
definition of sparse graphs is vague, to be confined, we would
relate sparsity with partial graphs. In this section, we will show
that when LBP converges on one graph in a partial graph set,
convergence properties of other graphs can be deduced through our
Theorem \ref{theo:nonuniform convergence}. Let us define partial
graphs and introduce the convergence property of such graphs in
the following.
\begin{definition}{\bf (Walk)}\\
In a graph G(V,E), a walk of length $l$ is a sequence of nodes
$w=(v_0,v_1,...,v_l)$, $v_i\in V$, such that each step of walk
$(v_i,v_i+1)$ corresponds to an edge in $E$.\label{def:walk}
\end{definition}
\begin{definition}{\bf (Prime Cycle)}\\
A closed walk is called a prime cycle if it is not backtracking
and not a repeated concatenation of a shorter closed
walk.\label{def:prime cycle}
\end{definition}
\begin{definition}{\bf (Reduction)}\\
A walk composed of two edges $(v_1,v_2)$ and $(v_2,v_3)$ can be
reduced to a walk composed of one edge $(v_1,v_3)$, where
$\psi_{v_1v_3}(x_{v_1},x_{v_3})=\int_{x_{v_2}}\psi_{v_1v_2}(x_{v_1},x_{v_2})\psi_{v_2v_3}(x_{v_2},x_{v_3})dx_{v_2}$,
when there is no branch on the walk.
\end{definition}
\begin{definition}{\bf (Extension)}\\
A walk composed of one edge $(v_1,v_3)$ can be extended to a walk
composed of two edges $(v_1,v_2)$ and $(v_2,v_3)$, where
$\int_{x_{v_2}}\psi_{v_1v_2}(x_{v_1},x_{v_2})\psi_{v_2v_3}(x_{v_2},x_{v_3})dx_{v_2}=\psi_{v_1v_3}(x_{v_1},x_{v_3})$.
\end{definition}
It is not hard to prove that {\bf {Reduction}} and {\bf
{Extension}} do not change the convergence property of the
original graph. Comparatively, \cite{Ruozzi2010} splitted some
edges and reparameterized the original graphical model in order to
obtain a convergent and correct message passing algorithm.
\begin{definition}{\bf (Partial Graphs)}\\
For two graphical models $\mathbb{G}_1(\mathbb{V}_1,\mathbb{E}_1)$
and $\mathbb{G}_2(\mathbb{V}_2,\mathbb{E}_2)$ after reduction and
extension, there exists an isomorphism between graphs
$\mathbb{G}_1(\mathbb{V}_1,\mathbb{E}_1)$ and
$\mathbb{G}_2(\mathbb{V}_2^{\ast},\mathbb{E}_2^{\ast})$, when
$\mathbb{V}_2^{\ast}\subseteq\mathbb{V}_2$ and
$\mathbb{E}_2^{\ast}\subset\mathbb{E}_2$. When
$\mathbb{E}_2-\mathbb{E}_2^{\ast}$ is cycle-involved, we call
$\mathbb{G}_1$ a partial graph of $\mathbb{G}_2$ and denote it as
$\mathbb{G}_1\subset\mathbb{G}_2$.\label{def:partial graph}
\end{definition}
\begin{theorem}{\bf{ (Strictness of Convergence Condition for Two Partial Graphs)}}\\
Given $\mathbb{G}_1$ and $\mathbb{G}_2$ as defined in Definition
\ref{def:partial graph}, assume that
$\mathbb{G}_1\subset\mathbb{G}_2$. Assume the dynamic-range
measures of potential functions for edges in $\mathbb{E}_1$ are
not greater than those of potential functions for corresponding
edges in $\mathbb{E}_2^{\ast}$. Then, when LBP for
$\mathbb{G}_2(\mathbb{V}_2,\mathbb{E}_2)$ converges, LBP for
$\mathbb{G}_1(\mathbb{V}_1,\mathbb{E}_1)$ must converge; however,
the reverse implication is not true in
general.\label{theo:strictness convergence}
\end{theorem}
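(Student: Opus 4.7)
The plan is to derive the forward implication from Theorem~\ref{theo:nonuniform convergence} by showing that the quantity $\mathcal{H}_{vu}$ in Eq.~\eqref{eq:non-uniform walk sums} is monotone non-decreasing under the partial-graph relation of Definition~\ref{def:partial graph}. Concretely, I would establish
\[
\max_{vu\in\mathbb{\bar{E}}_1}\mathcal{H}_{vu}^{(1)}\;\le\;\max_{vu\in\mathbb{\bar{E}}_2}\mathcal{H}_{vu}^{(2)},
\]
so that whenever the sufficient condition of Theorem~\ref{theo:nonuniform convergence} certifies convergence of LBP on $\mathbb{G}_2$, it also certifies convergence on $\mathbb{G}_1$. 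The converse is then disproved by appealing to the numerical experiments already reported in the paper.

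The monotonicity argument breaks into two ingredients. First, every factor $\frac{d(\psi)^2-1}{d(\psi)^2+1}$ appearing in the nested sum \eqref{eq:non-uniform walk sums} lies in $[0,1)$, because $d(\psi)\ge 1$, and is monotonically non-decreasing in $d(\psi)$; hence each summand of $\mathcal{H}_{vu}$ is non-negative and grows termwise with the edge strengths. Second, I would interpret $\mathcal{H}_{vu}$ as a weighted sum over non-backtracking walks in the computation tree rooted at the directed edge $vu$. The isomorphism in Definition~\ref{def:partial graph} identifies $\mathbb{G}_1$ with the subgraph $(\mathbb{V}_2^\ast,\mathbb{E}_2^\ast)$ of $\mathbb{G}_2$, and this lifts to an injection from non-backtracking walks in $\mathbb{G}_1$ into non-backtracking walks in $\mathbb{G}_2$. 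By the hypothesis that edge strengths in $\mathbb{E}_1$ are bounded by the strengths of their images in $\mathbb{E}_2^\ast$, each walk contribution in $\mathcal{H}_{vu}^{(1)}$ is dominated by that of its image, and the image sum is itself a subsum of $\mathcal{H}_{vu}^{(2)}$ because the omitted walks are precisely those that traverse the extra edges $\mathbb{E}_2\setminus\mathbb{E}_2^\ast$, whose contributions are non-negative. Combining the two observations yields the claimed inequality term by term.

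For the second half of the theorem, I would cite the numerical experiments already in the paper. The graph in Fig.~\ref{fig:graphs}(b) is a partial graph of Fig.~\ref{fig:graphs}(a) in the sense of Definition~\ref{def:partial graph}, and their empirical critical couplings satisfy $\eta<0.83$ versus $\eta<0.75$; similarly Fig.~\ref{fig:graphs}(d) is a partial graph of Fig.~\ref{fig:graphs}(c), with critical values $\eta<0.79$ and $\eta<0.67$ respectively. For any $\eta$ in the corresponding gap, LBP converges on the partial graph while diverging on the full graph, which is exactly the counterexample required.

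The main obstacle I anticipate is not in the algebra but in making the walk-to-walk identification fully rigorous in the presence of the reduction and extension operations allowed by Definition~\ref{def:partial graph}. When an edge of $\mathbb{G}_1$ has been produced by reducing two consecutive edges of $\mathbb{G}_2$, a walk crossing that edge must be unfolded into a two-step walk in $\mathbb{G}_2$, and one has to verify that this unfolding respects the non-backtracking constraint and that the dynamic-range strength of the composed potential $\int\psi_{v_1v_2}\psi_{v_2v_3}\,dx_{v_2}$ is consistent with the termwise comparison above. Once this bookkeeping is settled, the monotonicity and subsum arguments go through routinely, and the empirical critical values supply the required converse counterexample.
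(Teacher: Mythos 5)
Your proposal takes essentially the same route as the paper: its proof likewise observes that $T_B(\mathbb{G}_1,v,n)\subset T_B(\mathbb{G}_2,v,n)$, so the nested sum $\mathcal{H}_{vu}$ in Equation~\eqref{eq:non-uniform walk sums} for $\mathbb{G}_2$ contains every (non-negative) summand of the one for $\mathbb{G}_1$, whence the sufficient condition of Theorem~\ref{theo:nonuniform convergence} transfers from $\mathbb{G}_2$ to $\mathbb{G}_1$ and the converse simply fails. You fill in details the paper leaves implicit (non-negativity and monotonicity of each factor, the walk injection, the reduction/extension bookkeeping) and ground the converse in the empirical critical values rather than a bare assertion, but the underlying argument --- including the shared caveat that both directions are really statements about the sufficient condition rather than about convergence itself --- is the same.
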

\begin{proof}
Because $\mathbb{G}_1\subset \mathbb{G}_2$ and
$\mathbb{E}_2-\mathbb{E}_2^{\ast}$ are cycle-involved,
$T_{B}(\mathbb{G}_1,v,n)\subset T_{B}(\mathbb{G}_2,v,n)$.
Therefore, the expression in~\eqref{eq:non-uniform walk sums} for
$\mathbb{G}_2$ has more summands than that for $\mathbb{G}_1$.
When $\mathbb{G}_2$ satisfies the convergence condition in Theorem
\ref{theo:nonuniform convergence}, $\mathbb{G}_1$ must satisfy it.
However, when $\mathbb{G}_1$ satisfies the convergence condition,
$\mathbb{G}_2$ may not satisfy it.
\end{proof}
When the potential functions of a graph are uniform, we have the
following corollary.
\begin{corollary}{\bf (Critical Values of Convergence for Two Partial Graphs)}\\
Given $\mathbb{G}_1\subset\mathbb{G}_2$, $\mathbb{G}_1$ and
$\mathbb{G}_2$ have uniform potential
functions $\psi_{i}=\begin{pmatrix} \eta_i & 1-\eta_i \\
1-\eta_i & \eta_i \end{pmatrix}, i=1,2$ on all edges. Then, the
critical values for convergence of LBP satisfy
$\eta_{2}<\eta_{1}$.\label{coro:critical value convergence}
\end{corollary}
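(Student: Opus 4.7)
The plan is to apply Theorem \ref{theo:strictness convergence} in the uniform-potential setting, where the non-uniform convergence condition of Theorem \ref{theo:nonuniform convergence} collapses to a single scalar inequality in $\eta$. First, for $\psi_i$ as given, a direct computation yields $d(\psi_i)^2 = \eta_i/(1-\eta_i)$, so
\[
\frac{d(\psi_i)^2 - 1}{d(\psi_i)^2 + 1} \;=\; 2\eta_i - 1,
\]
a strictly increasing, continuous bijection from $[1/2, 1)$ onto $[0, 1)$. Substituting into the uniform walk-sum expression \eqref{eq:walk sums uniform}, the LHS becomes
\[
P_i(\eta) \;=\; \max_{vu \in \mathbb{\bar{E}}_i} \sum_{w} (2\eta - 1)^{|w|},
\]
where the inner sum ranges over non-backtracking walks $w$ in the computation tree of $\mathbb{G}_i$ rooted at the directed edge $vu$. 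This is a power series in $(2\eta - 1)$ with non-negative integer coefficients counting walks by length.

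Next, since $\mathbb{G}_1 \subset \mathbb{G}_2$, there is an isomorphism of $\mathbb{G}_1$ onto the subgraph $\mathbb{G}_2^{\ast}$ of $\mathbb{G}_2$, so every walk contributing to $P_1(\eta)$ transports injectively to a walk contributing to $P_2(\eta)$ starting at the image directed edge. Because $\mathbb{E}_2 - \mathbb{E}_2^{\ast}$ is cycle-involved by hypothesis, the computation tree of $\mathbb{G}_2$ carries strictly more walks, namely those that eventually traverse a removed edge. Hence the coefficient sequence of $P_2$ dominates that of $P_1$ termwise and strictly at some positive length, so $P_2(\eta) > P_1(\eta)$ whenever $\eta > 1/2$ and both sides are finite.

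Each $P_i$ is continuous and strictly increasing on its domain of finiteness (a subinterval of $[1/2, 1)$), vanishes at $\eta = 1/2$, and diverges at the right boundary because the branching factor of the computation tree guarantees exponential growth of walk counts. Consequently the equation $P_i(\eta) = 1$ has a unique solution $\eta_i^{\ast}$, which we identify with the critical value of convergence obtained from the sufficient condition of Theorem \ref{theo:nonuniform convergence}. The pointwise strict inequality $P_2 > P_1$ then forces $\eta_2^{\ast} < \eta_1^{\ast}$, which is the desired conclusion.

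The main obstacle is the precise identification of the ``critical value of $\eta$'' with the root of $P_i(\eta) = 1$ rather than with the empirical convergence boundary. Adopting this identification, the corollary becomes a quantitative sharpening of Theorem \ref{theo:strictness convergence}: that theorem already furnishes the non-strict inequality $\eta_2^{\ast} \leq \eta_1^{\ast}$ via inclusion of convergence sets, and the strictness is supplied by the cycle-involvement assumption, which produces the extra walks that separate $P_2$ from $P_1$.
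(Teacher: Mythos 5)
Your proposal is correct and follows essentially the same route as the paper: the paper's (one-line) proof likewise observes that the uniform walk-sum expression \eqref{eq:walk sums uniform} for $\mathbb{G}_2$ has more summands than for $\mathbb{G}_1$, forcing a smaller $d(\psi_2)$ and hence a smaller $\eta_2$ via $d(\psi_i)^2=\eta_i/(1-\eta_i)$. Your write-up merely makes explicit the steps the paper leaves implicit (the reduction to $2\eta-1$, monotonicity in $\eta$, termwise dominance of walk counts, and the identification of the critical value with the root of the sufficient-condition equation), and your caveat about ``critical value'' referring to the sufficient-condition threshold rather than the empirical boundary is the correct reading of the paper.
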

\begin{proof}
Because \eqref{eq:walk sums uniform} for $\mathbb{G}_2$ has more
summands than that for $\mathbb{G}_1$, we easily have
$d(\psi_{2})<d(\psi_{1})$ to satisfy the inequality. Because
$d(\psi_i)=\sqrt{\eta_i/(1-\eta_i)}$, we get $\eta_{2}<\eta_{1}$.
\end{proof}
Our Theorem \ref{theo:strictness convergence} and Corollary
\ref{coro:critical value convergence} can be easily extended to
strictness of convergence condition of LBP for a set of partial
graphs, and for those with uniform potential functions.
\begin{corollary}{\bf{ (Strictness of Convergence Condition for Set of Partial Graphs)}}\\
Given $\mathbb{G}_1\subset\mathbb{G}_2...\subset\mathbb{G}_N$,
assuming the dynamic-range measures of potential functions on
isomorphous edges of those graphs are correspondingly
non-decreasing in the previous partial order, LBP convergence for
$\mathbb{G}_j$ implies LBP convergence for $\mathbb{G}_i$, where
$i<j$ and $i,j=1,...,N$. However, the reverse implication is not
true in general.\label{coro:strictness convergence set partial
graphs}
\end{corollary}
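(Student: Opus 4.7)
The plan is to prove this by induction on the chain length $N$, with Theorem \ref{theo:strictness convergence} serving as the base case. The observation is that the convergence implication established pairwise in that theorem is transitive along any chain of partial-graph inclusions, provided the hypotheses propagate correctly.

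First I would verify that the hypotheses of Theorem \ref{theo:strictness convergence} cascade across consecutive pairs in the chain. For each pair $\mathbb{G}_i \subset \mathbb{G}_{i+1}$, Definition \ref{def:partial graph} gives an isomorphism between $\mathbb{G}_i$ and a subgraph $(\mathbb{V}_{i+1}^{\ast}, \mathbb{E}_{i+1}^{\ast})$ of $\mathbb{G}_{i+1}$ with $\mathbb{E}_{i+1}\setminus \mathbb{E}_{i+1}^{\ast}$ cycle-involved, and by assumption the dynamic-range measures on isomorphous edges satisfy $d(\psi^{(i)}_{ab}) \leq d(\psi^{(i+1)}_{ab})$. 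These are exactly the hypotheses of Theorem \ref{theo:strictness convergence}, so convergence of LBP on $\mathbb{G}_{i+1}$ implies convergence of LBP on $\mathbb{G}_i$.

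Next I would chain these implications. Fix $i < j$. Applying the pairwise implication successively along $\mathbb{G}_j \Rightarrow \mathbb{G}_{j-1} \Rightarrow \cdots \Rightarrow \mathbb{G}_{i+1} \Rightarrow \mathbb{G}_i$ establishes the result by a finite induction on $j - i$. Transitivity of the partial-graph relation $\subset$ itself is not even needed: only transitivity of the implication of convergence is required. For the non-reversibility, any counterexample witnessing the failure of the reverse implication in the pairwise Theorem \ref{theo:strictness convergence} can be embedded as two consecutive links in a chain, showing the reverse fails here as well.

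The main obstacle, which is mild, is bookkeeping the isomorphism between edges across the chain so that the phrase ``dynamic-range measures on isomorphous edges are non-decreasing in the previous partial order'' is unambiguous, that is, so that an edge of $\mathbb{G}_1$ can be tracked consistently to its image in each $\mathbb{G}_k$. Since each partial-graph relation in the chain supplies its own isomorphism, the composition of these isomorphisms along $\mathbb{G}_i \hookrightarrow \mathbb{G}_{i+1} \hookrightarrow \cdots \hookrightarrow \mathbb{G}_N$ gives a well-defined edge correspondence, and the assumed monotonicity of dynamic-range measures along this composition is precisely what is used link-by-link. No further computation is necessary beyond the repeated invocation of Theorem \ref{theo:strictness convergence}.
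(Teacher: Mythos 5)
Your proposal is correct and takes essentially the same approach as the paper: both reduce the statement to the pairwise Theorem \ref{theo:strictness convergence}, the only (cosmetic) difference being that you chain the implication through consecutive links $\mathbb{G}_j \Rightarrow \mathbb{G}_{j-1} \Rightarrow \cdots \Rightarrow \mathbb{G}_i$ while the paper invokes the pairwise theorem directly on the pair $\mathbb{G}_i \subset \mathbb{G}_j$. Your extra care about composing the edge isomorphisms and propagating the monotonicity hypothesis is a reasonable tightening of what the paper leaves implicit.
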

\begin{proof}
For any $\mathbb{G}_i\subset\mathbb{G}_j$ in the set of
$\{\mathbb{G}_i,1\leq i\leq N\}$, according to Theorem
\ref{theo:strictness convergence}, we have the convergence of
$\mathbb{G}_j$ implies the convergence of $\mathbb{G}_i$.
\end{proof}
\begin{corollary}{\bf (Critical Value of Convergence for Set of Partial Graphs)}\\
Given $\mathbb{G}_1\subset\mathbb{G}_2...\subset\mathbb{G}_k$,
$\mathbb{G}_1$,..., $\mathbb{G}_k$ have uniform potential
functions $\begin{pmatrix} \eta_i & 1-\eta_i \\
1-\eta_i & \eta_i \end{pmatrix},1\leq i\leq k$ on all edges. Then,
the critical values for convergence of LBP satisfy
$\eta_{k}<\eta_{k-1}...<\eta_{1}$.\label{coro:critical value
convergence set partial graphs}
\end{corollary}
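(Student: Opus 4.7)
The plan is to reduce the chain statement to the already-established pairwise result in Corollary~\ref{coro:critical value convergence} by a direct telescoping argument. The key observation is that the strict partial-order hypothesis $\mathbb{G}_1\subset\mathbb{G}_2\subset\cdots\subset\mathbb{G}_k$ is transitive: for every consecutive pair $(\mathbb{G}_i,\mathbb{G}_{i+1})$ with $1\le i\le k-1$, we have $\mathbb{G}_i\subset\mathbb{G}_{i+1}$ in the sense of Definition~\ref{def:partial graph}, and the potential functions on both graphs are uniform with the specified structure. Thus Corollary~\ref{coro:critical value convergence} applies directly to each such pair and yields a strict inequality between the corresponding critical values.

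Concretely, I would proceed as follows. First I would verify that each consecutive inclusion $\mathbb{G}_i\subset\mathbb{G}_{i+1}$ satisfies the hypotheses of Corollary~\ref{coro:critical value convergence}: each graph carries a single uniform pairwise potential of the stated form, and the set difference $\mathbb{E}_{i+1}-\mathbb{E}_{i+1}^{\ast}$ is cycle-involved by Definition~\ref{def:partial graph}. Applying the corollary to this pair then gives $\eta_{i+1}<\eta_{i}$. Second, I would chain these $k-1$ pairwise strict inequalities to obtain $\eta_{k}<\eta_{k-1}<\cdots<\eta_{1}$.

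The only point that requires a moment of care is to ensure that the pairwise Corollary is actually applicable to \emph{non-adjacent} inclusions such as $\mathbb{G}_1\subset\mathbb{G}_k$ (which follows by transitivity of $\subset$ applied to Definition~\ref{def:partial graph}); however, the telescoping proof sketched above never needs non-adjacent inclusions, so this issue does not arise. Thus the main technical content is entirely contained in Corollary~\ref{coro:critical value convergence}, and there is no real obstacle: the present statement is a straightforward induction on $k$ (or equivalently a telescoping of strict inequalities), with the base case $k=2$ being precisely Corollary~\ref{coro:critical value convergence}.
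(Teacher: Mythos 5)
Your proposal is correct and matches the paper's own argument, which likewise reduces the chain to repeated applications of Corollary~\ref{coro:critical value convergence} on pairs $\mathbb{G}_i\subset\mathbb{G}_j$ from the ordered family. The telescoping over consecutive pairs is exactly the intended (one-line) proof, so there is nothing to add.
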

\begin{proof}
For any $\mathbb{G}_i\subset\mathbb{G}_j$ in the set of
$\{\mathbb{G}_i,1\leq i\leq N\}$, according to Corollary
\ref{coro:critical value convergence}, we have the convergence of
$\eta_{j}<\eta_{i}$.
\end{proof}

By our Corollary \ref{coro:strictness convergence set partial
graphs} on partially ordered graphs, we can conclude that graphs
with less cycle-induced edges are more sparse and thus have weaker
convergence condition. It is intuitively true that the strength of
potential functions for Fig.~\ref{fig:graphs}(a) or
Fig.~\ref{fig:graphs}(c) should be weaker than that for
Fig.~\ref{fig:graphs}(b) or Fig.~\ref{fig:graphs}(d) to ensure
convergence of LBP. This observation can be soundly verified by
our previous corollaries.

\subsection{Walk-Summability and Convergence}
\begin{figure}[htb]
\begin{center}
\includegraphics[height=4.5 cm,width=9.93 cm]{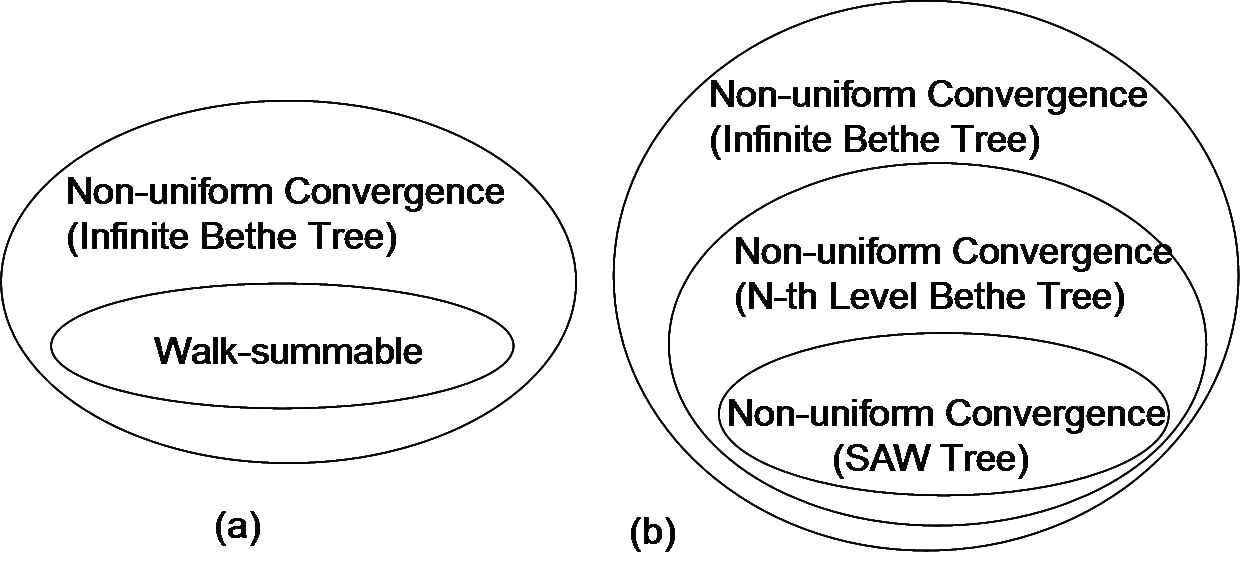}
\end{center}
\caption{Diagram summarizing mildness of convergence conditions.
The SAW tree is a partial tree of the $N$-level Bethe tree,
therefore, convergence condition based on the SAW tree is
stronger.} \label{fig:tightness_nonuniform}
\end{figure}
\cite{Malioutov2006} related the convergence of LBP with the
spectral radius of partial correlation matrix of Gaussian
graphical model, for which they introduced a concept called
walk-summability. We observe similarity between walk-summability
of Gaussian graphical model and our convergence condition for
general graphcial model discussed in Section \ref{sec:nonuniform
convergence}. Therefore, based on some existing works in
literature, we extend the walk-summability defined
in~\cite{Malioutov2006} to that for general graphical models.

A Gaussian graphical model is defined by an undirected graph
$G(V,E)$, where $V$ is the set of nodes and $E$ is the set of
edges, and a set of jointly Gaussian random variables $\{x_i, i\in
V\}$. The joint density function is defined as follows:
\begin{equation*}
p(X)\propto
\exp\{-\frac{1}{2}\textsc{x}^TJ\textsc{x}+h^T\textsc{x}\},
\end{equation*}
where $J$ is a symmetric and positive definite matrix called
information matrix and $h$ is a potential vector. The partial
correlation coefficient between random variable $x_i$ and $x_j$ is
defined as follows:
\begin{equation*}
r_{ij}\triangleq\frac{\textrm{cov}(x_i,x_j|x_{V\backslash
ij})}{\sqrt{\textrm{var}(x_i|x_{V\backslash
ij})\textrm{var}(x_j|x_{V\backslash
ij})}}=-\frac{J_{ij}}{\sqrt{J_{ii}J_{jj}}}.
\end{equation*}
A walk is defined in Definition \ref{def:walk}. The weight
$\phi(w)$ of a walk $w=(v_0,v_1,...,v_{l(w)})$ with length $l(w)$
is defined as:
\begin{equation}
\phi(w)=\prod_{k=1}^{l(w)}r_{v_{k-1}v_{k}}.\label{eq:walk weight}
\end{equation}
\begin{definition}(\cite{Malioutov2006}){\bf (Walk-Summable)}\\
A Gaussian distribution is walk-summable if for all $i,j\in V$ the
unordered walk $w$ from $i$ to $j$, $\sum_{w:i\rightarrow
j}\phi(w)$, is well defined.\label{def:walk-summable}
\end{definition}
\begin{proposition}(\cite{Malioutov2006}){\bf (Walk-Summability)}\\
Let $R$ be a partial correlation coefficient matrix of a Gaussian
graphical model, of which diagonal entries are zeros. Each of the
following conditions are equivalent to walk-summability:\\
(i) $\sum_{w:i\rightarrow j}|\phi(w)|$ converges for all $i,j\in
V$,\\
(ii) $\sum_{l}\bar{R}^{l}$ converges, where
$\bar{R}_{ij}=|R_{ij}|$
and $l$ is the length of walk,\\
(iii) $\rho(\bar{R})<1$, where $\rho(\bar{R})$ is the spectral radius of $\bar{R}$,\\
(iv) $I-\bar{R}\succ 0$.
\end{proposition}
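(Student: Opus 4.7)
The plan is to establish the chain of equivalences (walk-summable) $\Leftrightarrow$ (i) $\Leftrightarrow$ (ii) $\Leftrightarrow$ (iii) $\Leftrightarrow$ (iv) by exploiting a single foundational identity linking walk weights to matrix powers. Specifically, for each length $l\geq 1$ and each pair $i,j\in V$, expanding $\phi(w)=\prod_{k=1}^{l}r_{v_{k-1}v_k}$ and summing over all walks from $i$ to $j$ of length exactly $l$ gives, by the definition of matrix multiplication,
\[
\sum_{w:i\to j,\,\ell(w)=l}\phi(w)=(R^l)_{ij},\qquad \sum_{w:i\to j,\,\ell(w)=l}|\phi(w)|=(\bar R^l)_{ij},
\]
with the convention that the length-$0$ walk at a vertex contributes the diagonal entry of $R^0=I$.

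For walk-summable $\Leftrightarrow$ (i), I would invoke the standard theorem that an unordered sum of real numbers over a countable index set is well-defined independently of ordering if and only if it is absolutely convergent; applied to the countable collection of walks from $i$ to $j$, this is exactly the statement that Definition~\ref{def:walk-summable}'s requirement and (i) coincide. For (i) $\Leftrightarrow$ (ii), I would partition walks by length and use the foundational identity to rewrite
\[
\sum_{w:i\to j}|\phi(w)|=\sum_{l=0}^{\infty}(\bar R^l)_{ij},
\]
so demanding (i) for every $(i,j)$ is exactly entry-wise convergence of the matrix series $\sum_l \bar R^l$, i.e.\ (ii).

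For (ii) $\Leftrightarrow$ (iii) I would cite the classical Neumann-series result: $\sum_{l=0}^{\infty}A^l$ converges (in any matrix norm, equivalently entry-wise when $A$ is non-negative) if and only if $\rho(A)<1$, and then the sum is $(I-A)^{-1}$. For (iii) $\Leftrightarrow$ (iv), the subtle step, I would appeal to the Perron--Frobenius theorem: because $\bar R$ is symmetric with non-negative entries, $\rho(\bar R)$ is attained as the largest eigenvalue $\lambda_{\max}(\bar R)$ and moreover dominates the absolute value of every eigenvalue of $\bar R$. Thus (iv), $I-\bar R\succ 0$, is equivalent to $\lambda_{\max}(\bar R)<1$, and by the Perron--Frobenius bound this forces $|\lambda|<1$ for every eigenvalue, i.e.\ $\rho(\bar R)<1$; conversely $\rho(\bar R)<1$ trivially yields $\lambda_{\max}(\bar R)<1$, hence $I-\bar R\succ 0$.

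The main obstacle is precisely the implication (iv) $\Rightarrow$ (iii). Without the non-negativity of $\bar R$, a symmetric matrix with a very negative eigenvalue could satisfy $I-\bar R\succ 0$ while having $\rho(\bar R)>1$, so the Perron--Frobenius control $|\lambda|\leq\lambda_{\max}(\bar R)$ is the essential structural input that ties the spectral radius condition to the positive-definiteness condition. All the remaining steps are essentially bookkeeping around the walk-matrix identity, absolute convergence, and the Neumann series.
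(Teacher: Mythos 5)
Your proof is correct. Note that the paper itself gives no proof of this proposition --- it is imported verbatim from \cite{Malioutov2006} --- and your argument (absolute convergence of the unordered walk sum, partition of walks by length to identify $\sum_{w}|\phi(w)|$ with the entries of $\sum_l \bar{R}^l$, the Neumann-series criterion for (ii)$\Leftrightarrow$(iii), and Perron--Frobenius on the symmetric non-negative matrix $\bar{R}$ to tie $\rho(\bar{R})$ to $\lambda_{\max}(\bar{R})$ for (iii)$\Leftrightarrow$(iv)) is essentially the standard proof given in that cited source, with the genuinely delicate step, namely that non-negativity is what makes $I-\bar{R}\succ 0$ imply $\rho(\bar{R})<1$, correctly identified.
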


The walk-summability of a Gaussian graphical model has been shown
to be related with the convergence of LBP. Proposition 21
in~\cite{Malioutov2006} states that ``If a model on a (Gaussian)
graph G is walk-summable, then LBP is well-posed, the means
converge to the true means and the LBP variances converge to
walk-sums over the backtracking self-return walks at each node".
Enlightened by the analysis for Gaussian graphical model, we
extend the walk-summability perspective to general graphical
models in the following.

For a Gaussian graphical model, the interaction between two random
variables is the partial correlation coefficient. However, for a
general graphical model, we have multi-dimensional potential
functions between two random variables. We hope to find a scalar
quantity to represent the interaction between them as well.

\subsubsection{Walk-summability For Pairwise Binary Graphs}
\cite{NIPS2009_Yusuke} introduced weights on edges of an arbitrary
binary graph, defined an edge zeta function based on those weights
and related the convexity of Bethe free energy with the edge zeta
function. Specifically, given $\mathcal{P}$ be the set of prime
cycles $\{v_{k_0}v_{k_1}...v_{k_{i-1}}v_{k_i}...v_{k_l}v_{k_0}\}$
defined in Definition \ref{def:prime cycle}, for given weights
$\textbf{u}$, the edge zeta function is defined in
\cite{NIPS2009_Yusuke} by
\begin{equation*}
\zeta_{G}(\textbf{u}):=\prod_{w\in
\mathcal{P}}(1-g(w))^{-1},g(w):=u_{v_{k_0}v_{k_1}}...u_{v_{k_{i-1}}v_{k_i}}...u_{v_{k_l}v_{k_0}}.
\end{equation*}
We find that $(1-g(w))^{-1}=\sum_{i=0}^\infty(g(w))^i$, which
represents the walk sums of a prime cycle and its repeated
concatenations.

They introduced an adjacency matrix of directed edges, which is
defined as follows:
\begin{eqnarray*}
\mathcal{M}_{i\rightarrow j,p\rightarrow
q}=\left\{\begin{array}{c}
1,\quad\textrm{if}~p\in\Gamma_{i}\backslash
j,\\0,\quad\textrm{otherwise}.\end{array}\right.
\end{eqnarray*}
Here we use $i\rightarrow j$ rather than $ij$ to explicitly
represent directed edge. They showed that
\begin{equation}
\zeta_{G}(\textbf{u})=det(I-\mathcal{U}\mathcal{M})^{-1}=\prod_{w\in
\mathcal{P}}(1-g(w))^{-1},\label{eq:zeta determinant}
\end{equation}
where $\mathcal{U}$ is a diagonal matrix defined by
$\mathcal{U}_{i\rightarrow j,p\rightarrow q}=u_{i\rightarrow
j}\delta_{i\rightarrow j,p\rightarrow q}$. Let us define two
directed edges $i\rightarrow j$ and $p\rightarrow q$ satisfying
$p\in \Gamma_i\backslash j$ as {\textbf{adjacent}} edges, and call
$\mathcal{U}\mathcal{M}$ an {\textbf{interaction coefficient
matrix}} for adjacent edges. Therefore, Equation \eqref{eq:zeta
determinant} relates summation of weighted prime cycles with
interaction coefficient matrix.


\cite{NIPS2009_Yusuke} further defined weights as follows:
\begin{equation*}
u_{i\rightarrow j}:=\frac{\chi_{ij}-m_im_j}{1-m_j^2},
\end{equation*}
where mean $m_i=E_{b_i}[x_i]$ and correlation
$\chi_{ij}=E_{b_{ij}}[x_ix_j]$. Let
$Spec(\mathcal{U}\mathcal{M})\subset \mathbb{C}$ denote the
spectra. They presented the following theorem.
\begin{theorem}(Theorem 4.,\cite{NIPS2009_Yusuke})\\
Given $\mathcal{U}$,$\mathcal{M}$,$m_i$ and $\chi_{ij}$,
$Spec(\mathcal{U}\mathcal{M})$$\subset \mathbb{C}\backslash
\mathbb{R}_{\geq 1}$$\Longrightarrow$ Hessian of Bethe free energy
is positive definite at $\{m_i,\chi_{ij}\}$.
\end{theorem}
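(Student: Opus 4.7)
The plan is to connect the Hessian of the Bethe free energy directly to $\det(I-\mathcal{U}\mathcal{M})$ through the edge zeta function identity in \eqref{eq:zeta determinant}, and then obtain positive definiteness from the spectral hypothesis via a scaling homotopy.

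First, I would write the Bethe free energy explicitly in the moment parameterization. For binary variables (say $x_i\in\{\pm 1\}$), the singleton and pairwise beliefs are recovered from $m_i$ and $\chi_{ij}$ by $b_i(x_i)=(1+m_ix_i)/2$ and $b_{ij}(x_i,x_j)=(1+m_ix_i+m_jx_j+\chi_{ij}x_ix_j)/4$, so $F_{\mathrm{Bethe}}(\{m_i,\chi_{ij}\})$ becomes an explicit analytic function of these moments on the interior of the local polytope. The potentials $\psi_{ij},\psi_i$ enter only linearly through $\sum b\log\psi$ terms and therefore contribute nothing to the Hessian; this is the crucial reduction, because it means the Hessian depends on the point $\{m_i,\chi_{ij}\}$ only through the entropy part.

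Second, I would compute the Hessian in block form with respect to the groups $\{\chi_{ij}\}_{(i,j)\in\mathbb{E}}$ and $\{m_i\}_{i\in\mathbb{V}}$, and then take a Schur complement to eliminate the $\chi$-block, which is easily seen to be positive diagonal. A direct calculation of $\partial^2 F_{\mathrm{Bethe}}/\partial m\partial m$ and of the mixed derivatives $\partial^2 F_{\mathrm{Bethe}}/\partial m\partial \chi$ shows that the resulting Schur complement acts nontrivially only between pairs of directed edges sharing a head, i.e.\ precisely on the sparsity pattern of $\mathcal{M}$, and that the coefficients are (up to positive diagonal rescalings) the weights $u_{i\to j}=(\chi_{ij}-m_im_j)/(1-m_j^2)$. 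I would then isolate the combinatorial core and obtain an identity of the schematic form
\begin{equation*}
\det(\mathrm{Hess}\,F_{\mathrm{Bethe}})=P(\{m_i,\chi_{ij}\})\cdot\det(I-\mathcal{U}\mathcal{M}),
\end{equation*}
where $P$ is a manifestly positive factor collecting the diagonal contributions from the pairwise block and the node-degree corrections in the Bethe form.

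Third, I would upgrade a statement about $\det\neq 0$ to positive definiteness via homotopy. Consider the family $\mathcal{U}(t)=t\mathcal{U}$ for $t\in[0,1]$. At $t=0$ the Schur-reduced Hessian is diagonal with positive entries coming from the strict convexity of the singleton and pairwise Shannon entropies, so the full Hessian is positive definite. The eigenvalues of $\mathcal{U}(t)\mathcal{M}$ are $t\lambda$ for $\lambda\in \mathrm{Spec}(\mathcal{U}\mathcal{M})$, and $t\lambda=1$ forces $\lambda=1/t\in\mathbb{R}_{\geq 1}$, contradicting the hypothesis. Hence $\det(I-\mathcal{U}(t)\mathcal{M})\neq 0$ along the path, so by the product identity above the full Hessian is nonsingular for every $t\in[0,1]$. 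Continuity of eigenvalues then prevents any eigenvalue of the Hessian from crossing zero, and we conclude positive definiteness at $t=1$.

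The main obstacle is the second step: reorganizing the second derivatives of $F_{\mathrm{Bethe}}$ in the $(m_i,\chi_{ij})$ coordinates so that after Schur reduction the edge-adjacency matrix $\mathcal{M}$ and the diagonal weight matrix $\mathcal{U}$ emerge cleanly, allowing the Ihara-type determinant formula \eqref{eq:zeta determinant} to be invoked. Once the product factorization is in hand, the spectral hypothesis and the homotopy yield positive definiteness with essentially no further work.
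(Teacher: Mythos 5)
This statement is quoted in the paper as Theorem 4 of \cite{NIPS2009_Yusuke}; the paper itself offers no proof of it, so there is nothing internal to compare your attempt against. Your sketch does, in outline, follow the strategy of the original source: express the Bethe free energy in the moment coordinates $(m_i,\chi_{ij})$, relate $\det(\mathrm{Hess}\,F_{\mathrm{Bethe}})$ to $\det(I-\mathcal{U}\mathcal{M})$ through the edge zeta function identity~\eqref{eq:zeta determinant}, and then deduce positive definiteness by a homotopy that scales the interactions to zero. So the approach is the right one.

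However, as a proof it has a genuine gap: the entire mathematical content of the theorem sits in your second step, and you only assert it ``schematically.'' The identity $\det(\mathrm{Hess}\,F_{\mathrm{Bethe}})=P\cdot\det(I-\mathcal{U}\mathcal{M})$ with $P>0$ is not a routine bookkeeping fact; it is the main determinant formula of \cite{NIPS2009_Yusuke}, and deriving it requires carrying out the Schur reduction explicitly, identifying the degree-dependent corrections $(1-d_i)$ in the node entropy terms, and verifying that the residual positive factor $P$ really is positive on the interior of the local polytope. Declaring it ``the main obstacle'' and stopping there leaves the theorem unproved. Second, your homotopy is underspecified in a way that matters. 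Scaling $\mathcal{U}(t)=t\,\mathcal{U}$ corresponds to moving the point along $\chi_{ij}(t)=m_im_j+t(\chi_{ij}-m_im_j)$ with the $m_i$ fixed; for the argument ``no eigenvalue of the Hessian crosses zero'' to work you must check that (i) the beliefs $b_{ij}$ remain strictly inside the local polytope for all $t\in[0,1]$, so the entropy Hessian stays finite and continuous, and (ii) the prefactor $P(t)$ stays positive along the whole path, not just at the endpoints. Without (i) the eigenvalues need not vary continuously, and without (ii) nonvanishing of $\det(I-t\,\mathcal{U}\mathcal{M})$ does not imply nonvanishing of the Hessian determinant. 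The spectral step itself ($t\lambda=1$ forces $\lambda=1/t\in\mathbb{R}_{\geq 1}$, contradicting the hypothesis) is correct and is the easy part.
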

We can see that $\prod_{w\in \mathcal{P}}(1-g(w))^{-1}=\prod_{w\in
\mathcal{P}}\sum_{i=0}^\infty(g(w))^i$ in Equation \eqref{eq:zeta
determinant} is well defined, when
$Spec(\mathcal{U}\mathcal{M})$$\subset \mathbb{C}\backslash
\mathbb{R}_{\geq 1}$. Therefore, we can define
$Spec(\mathcal{U}\mathcal{M})$$\subset \mathbb{C}\backslash
\mathbb{R}_{\geq 1}$ as walk-summability condition for pairwise
binary graphs. Since convexity of Bethe free energy implies the
uniqueness of fixed point, the walk-summability condition is
equivalent to convergence condition of LBP.


Unlike correlation coefficient between two nodes (random
variables), interaction coefficient is between two edges. A
symmetrization of $u_{i\rightarrow j}$ and $u_{j\rightarrow i}$
was defined in \cite{NIPS2009_Yusuke} by
\begin{equation*}
\beta_{ij}:=\frac{\chi_{ij}-m_im_j}{\{(1-m_i^2)(1-m_j^2)\}^{1/2}}=\frac{\textrm{Cov}_{b_{ij}}[x_i,x_j]}{\{\textrm{Var}_{b_i}[x_i]\textrm{Var}_{b_j}[x_j]\}^{1/2}}.
\end{equation*}
$\beta_{ij}$ is the correlation coefficient between $x_i$ and
$x_j$. They showed
$\mathrm{Spec}(\mathcal{U}\mathcal{M})=\mathrm{Spec}(\mathcal{B}\mathcal{M})$,
where $(\mathcal{B})_{i\rightarrow j,p\rightarrow
q}=\beta_{ij}\delta_{i\rightarrow j,p\rightarrow q}$. Therefore,
similar to Gaussian graphical model, for an arbitrary binary
graph, we can also use correlation coefficient $\beta_{ij}$ to
characterize the interaction between two random variables and
analyze the convergence of LBP.

We find another interaction coefficient matrix in
\cite{MooijKappen_IEEETIT_07}. They proved that for pairwise
binary graphs, LBP converges to a unique fixed point, if the
spectral radius of $\mathcal{A}\mathcal{M}$ is strictly smaller
than $1$, where $\mathcal{A}_{i\rightarrow j,p\rightarrow
q}:=tanh|J_{ij}|\delta_{i\rightarrow j,p\rightarrow q}$.
$\mathcal{A}\mathcal{M}$ is also an interaction coefficient matrix
between neighboring edges. We can see
$Spec(\mathcal{B}\mathcal{M})\subset \mathbb{C}\backslash
\mathbb{R}_{\geq 1}$ or $Spec(\mathcal{A}\mathcal{M})\subset
\mathbb{C}\backslash \mathbb{R}_{\geq 1}$ as a walk-summable
condition for binary graphs. However, \cite[Lemma
3]{NIPS2009_Yusuke} showed that: given $\beta_{ij}$ at any fixed
point of LBP, $|\beta_{ij}|\leq tanh|J_{ij}|$. In other words
$\mathcal{B}\mathcal{M}$ is tighter than $\mathcal{A}\mathcal{M}$.

\subsubsection{Walk-summability For General Pairwise Graphs}
In the non-uniform convergence condition in Theorem
\ref{theo:nonuniform convergence}, for a $N$-th level Bethe tree,
we add up all the $N$-th step walks from a root node, where the
weight on edge $(i,j)$ is the quantity
$\frac{d(\psi_{ij})^2-1}{d(\psi_{ij})^2+1}$ and
$d(\psi_{ij})^2=\sup_{x_i,x_j,\hat{x}_i,\hat{x}_j}\sqrt{\frac{\psi_{ij}(x_i,x_j)\psi_{ij}(\hat{x}_i,\hat{x}_j)}{\psi_{ij}(\hat{x}_i,x_j)\psi_{ij}(x_i,\hat{x}_j)}}$.
Let $\mathcal{W}$ be the interaction coefficient matrix with
$(\mathcal{W})_{i\rightarrow j,p\rightarrow
q}=w_{ij}\delta_{i\rightarrow j,p\rightarrow q}$ and
$w_{ij}=\frac{d(\psi_{ij})^2-1}{d(\psi_{ij})^2+1}$. We define the
walk-summability of a general graphical model as follows:
\begin{definition}{\bf (Walk-summability of General Graphical
Model)}\\
Given $\mathcal{W}$, a general pairwise graphical model is
walk-summable, when
$\rho(\mathcal{W}\mathcal{M})<1$.\label{def:walk-summable general}
\end{definition}

Like that for binary graphs, the convergence condition is
equivalent to the walk-summability of a general graphical model
with the interaction coefficient matrix $\mathcal{W}\mathcal{M}$,
which is proved by the following theorem.
\begin{theorem}(Theorem 4.,\cite{MooijKappen_IEEETIT_07})\\
For general pairwise graphical model, LBP converges to a unique
fixed point, when spectral radius
$\rho(\mathcal{W}\mathcal{M})<1$.
\end{theorem}

\begin{lemma}
Our non-uniform convergence condition in Theorem
\ref{theo:nonuniform convergence} is better than Theorem 4
in~\cite{MooijKappen_IEEETIT_07}, or walk-summable condition in
Definition \ref{def:walk-summable general}.\label{lemma:compare
with walk-sum}
\end{lemma}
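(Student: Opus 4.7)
The plan is to instantiate Theorem~\ref{theo:nonuniform convergence} on the SAW tree (the tightest of the three computation-tree choices available) and compare it, walk-by-walk, with the walk-summable condition $\rho(\mathcal{W}\mathcal{M})<1$ of Definition~\ref{def:walk-summable general}. Both quantities are ultimately sums of products of the common edge weights $\alpha_{ts}=\frac{d(\psi_{ts})^2-1}{d(\psi_{ts})^2+1}\in[0,1)$ over families of walks in $\mathbb{G}$, so the proof reduces to exhibiting one family as a subset of the other.

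The first step is to translate $\rho(\mathcal{W}\mathcal{M})<1$ into a walk sum. The Neumann series $\sum_{\ell\ge 0}(\mathcal{W}\mathcal{M})^\ell$ has $(vu,pq)$ entry equal to the sum over non-backtracking walks in $\mathbb{G}$ from edge $pq$ to edge $vu$ of the products of $\alpha$-weights along the walk. By the remark following Theorem~\ref{theo:nonuniform convergence}, this same collection of non-backtracking walks emanating from a root $v$ is what is summed in $\mathcal{H}_{vu}$ when the computation tree is the infinite Bethe tree $T_B(\mathbb{G},v,\infty)$; so the infinite-Bethe instance of our condition is exactly the walk-summable condition up to the choice of matrix norm used to pass from entries to the $\max_{vu}$.

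The second step is to descend to $T_{SAW}(\mathbb{G},v)$. By construction (Section~\ref{sec:message error}), $T_{SAW}(\mathbb{G},v,n)$ is a subtree of $T_B(\mathbb{G},v,n)$: every self-avoiding walk is non-backtracking, but a non-backtracking walk may revisit a previously seen vertex. Since all $\alpha_{ts}\ge 0$, dropping the non-self-avoiding walks only decreases the sum, giving
\[
\mathcal{H}^{\mathrm{SAW}}_{vu}\;\le\;\mathcal{H}^{\mathrm{Bethe}}_{vu}\qquad\text{for every }vu\in\mathbb{\bar{E}},
\]
and hence $\max_{vu}\mathcal{H}^{\mathrm{SAW}}_{vu}\le\max_{vu}\mathcal{H}^{\mathrm{Bethe}}_{vu}$. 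Combining with step one, the hypothesis $\rho(\mathcal{W}\mathcal{M})<1$ implies the SAW-tree instance of Theorem~\ref{theo:nonuniform convergence}, so our sufficient condition holds on a (weakly) larger class of graphical models. To confirm strict improvement I would exhibit a small example such as Fig.~\ref{fig:graphs}(a) or (c), where non-backtracking closed walks proliferate but self-avoiding ones are sharply capped by the vertex count.

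The main obstacle is the sharp identification between the spectral-radius condition and the $<1$ bound on $\mathcal{H}^{\mathrm{Bethe}}$: the Neumann series only guarantees finiteness of the walk sum, not that each row sum lies below $1$. I would bridge this either by re-norming with the Perron--Frobenius eigenvector of the non-negative matrix $\mathcal{W}\mathcal{M}$ (so that $\rho<1$ translates into a strict contraction in the weighted $\ell_\infty$ norm, and the associated weighted $\max$ of $\mathcal{H}^{\mathrm{Bethe}}_{vu}$ falls below $1$), or by using $\rho(\mathcal{W}\mathcal{M})=\lim_n\|(\mathcal{W}\mathcal{M})^n\|_\infty^{1/n}$ to select a depth $n^\star$ at which truncated row sums drop strictly below $1$ and then dominating the finite-depth SAW sum by this truncation. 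Either reformulation preserves the walk-counting inequality above and seals the implication Mooij--Kappen $\Rightarrow$ SAW-based Theorem~\ref{theo:nonuniform convergence}.
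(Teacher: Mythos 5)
Your proposal is correct and follows essentially the same route as the paper's proof: both sides are read as weighted walk sums, $\rho(\mathcal{W}\mathcal{M})<1$ is converted into a bound on a norm of $(\mathcal{W}\mathcal{M})^N$ for large $N$, and $\mathcal{H}_{vu}$ is then dominated by that norm because the walks it counts form a subfamily of those enumerated by the matrix power. If anything you are more careful than the paper on the two points you flag---the paper simply invokes $\rho(A)<1\Leftrightarrow\|A^{N}\|_1<1$ as $N\rightarrow\infty$ and places the walk-family inclusion at ``backtracking vs.\ non-backtracking,'' whereas you correctly locate the strict containment at ``non-backtracking (Bethe) vs.\ self-avoiding (SAW)''---but these are refinements of the same argument rather than a different one.
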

\begin{proof}
Let $A=\mathcal{W}\mathcal{M}$. $\rho(A)<1$ is equivalent to
$\|A^{N}\|_1<1,N\rightarrow\infty$
(~\cite{MooijKappen_IEEETIT_07}). $(A^{N})_{i\rightarrow
j,k\rightarrow l}$ is the summation of $N$-step weighted walks
from edge $k\rightarrow l$ to $i\rightarrow j$, including
backtracking walks. However, the walk-sum in \eqref{eq:non-uniform
walk sums} for a $N$-level Bethe tree does not include
backtracking walks; thus, it is smaller than $\|A^{N}\|_1$.
Therefore, our non-uniform convergence condition in Theorem
\ref{theo:nonuniform convergence} is milder than $\rho(A)<1$, or
walk-summable condition, which is illustrated in
Fig.~\ref{fig:tightness_nonuniform}(a).
\end{proof}

By ''milder", we mean the set satisfying the sufficient
convergence condition is bigger. Since our non-uniform convergence
condition is derived from \cite[Theorem 14]{ihler05b} and they are
equivalent for infinite Bethe tree, \cite[Theorem 14]{ihler05b} is
better than ~\cite[Theorem 4]{MooijKappen_IEEETIT_07}. When the
convergence condition based on $N$-level Bethe tree is satisfied,
the convergence condition based on infinite Bethe tree must be
satisfied, because the error bounds are guaranteed to decrease
after $N$ iterations of error propagation. Similarly, convergence
condition based on $N$-level Bethe tree is milder than that based
on SAW tree. Therefore, we obtain mildness of convergence
conditions \textit{}, which is shown in
Fig.~\ref{fig:tightness_nonuniform}(b).

In the following, we will analyze the performance of LBP with
respect to accuracy and convergence rate.

\section{Accuracy Bounds for Loopy Belief Propagation}
\label{sec:accuracy}

Recently, \cite{ihler07b} presented an accuracy bound for LBP
which relates the belief of a random variable to its true
marginal. He showed that there exists a configuration on some
nodes of the SAW tree rooted at certain node $s$ of the original
graph, such that the true maginal at node $s$ of the original
graph is equal to the belief at root $s$ of the SAW tree.
Therefore, given certain external force functions on a subset of
nodes, he adopted the non-uniform distance bound in~\cite[Thm.
14]{ihler05b} to obtain an accuracy bound between beliefs and true
marginals.

Given $d(p(x)/b(x))\leq\delta$, his accuracy bound is as follows:
\begin{equation}
\frac{b(x)}{\delta^2+(1-\delta^2)b(x)}\leq
p(x)\leq\frac{\delta^2b(x)}{1-(1-\delta^2)b(x)},\label{eq:Ihler
accuracy bd}
\end{equation}
where $\delta$ is an error bound in dynamic-range measure, $p(x)$
is the normalized true marginal and $b(x)$ is the normalized
belief. Note that $\delta$ in~\cite[Lemma 5]{ihler07b} should be
$\delta^2$.

Because our {\em improved non-uniform distance bound} has been
shown tighter than his non-uniform bound, we can improve his
accuracy bound between the belief and the true marginal. Let
$\max_{x}|\log{p(x)/b(x)}|\leq\log{\varepsilon}$, where
$\varepsilon$ is an error bound in maximum-error measure applying
our Corollary \ref{coro:nonuniform distance}, under certain
external force functions on a subset of nodes of a SAW tree.
Therefore, we have the accuracy bound as $b(x)/\varepsilon\leq
p(x)\leq\varepsilon b(x)$, where $\varepsilon<\delta^2$. Combining
our accuracy bound with the bound in~\eqref{eq:Ihler accuracy bd},
we have the improved bound
\begin{equation*}
\max\{b(x)/\varepsilon,\frac{b(x)}{\delta^2+(1-\delta^2)b(x)}\}\leq
p(x)\leq\min\{\varepsilon
b(x),\frac{\delta^2b(x)}{1-(1-\delta^2)b(x)}\}.
\end{equation*}

\section{Rate of Convergence and Residual Scheduling}
\label{sec:residual scheduling}

For an iterative algorithm such as LBP, the rate of convergence is
an important criteria of performance. We will analyze the
convergence rate of LBP by looking into the gradient of error
bounds on messages. The error bound-variation function
$G_{sp}(\log{\varepsilon})$ in~\eqref{eq:bd variation} is a
measure of the variation of error bounds between successive
iterations; on the other hand, it reflects how fast LBP converges,
because the smaller $G_{sp}(\log{\varepsilon})$ is, the faster
error bounds tighten. Because dynamic-range measure is better than
maximum-error measure in terms of convergence of LBP, we will use
the following error bound-variation function:
\begin{equation*}
G_{sp}(\log{\varepsilon})=\log{\prod_{t\in\Gamma_s\backslash
p}\frac{d(\psi_{ts})^2\varepsilon+1}{d(\psi_{ts})^2+\varepsilon}}-\log{\varepsilon},
\end{equation*}
where $\varepsilon$ is an error bound in dynamic-range measure on
incoming error product. We will use the first derivative of the
function as a metric on the rate of convergence:
\begin{equation*}
G_{sp}^{(1)}(\log{\varepsilon})=\sum_{t\in\Gamma_s\backslash
p}\frac{\varepsilon((d(\psi_{ts})^4-1)}{(d(\psi_{ts})^2\varepsilon+1)(d(\psi_{ts})^2+\varepsilon)}-1.
\end{equation*}
Recall that $G_{sp}^{(1)}(\log{\varepsilon})$ should be less than
zero to ensure convergence. When we have infinitesimal error
disturbance, $|G_{sp}^{(1)}(0)|$ will be used as a local rate of
convergence. Because our rate of convergence varies on each
direction of message passing, messages on the direction with the
greatest rate will be updated prior to others in dynamic
scheduling.

Some works have been done to utilize message residuals as a way of
priority in dynamic scheduling by~\cite{Elidan06}
and~\cite{Sutton07}. Rather than calculating future message
residuals,~\cite{Sutton07} utilized their upper-bounds as
estimates of message residuals in their scheduling algorithm {\em
RBP0L}. They adopted maximum-error measure as a metric of message
residuals, which was defined by them as
$r(m_{ts})=\max_{x_{s}}|\log{e_{ts}(x_s)}|$. They showed that by
the contraction property of maximum-error measure it can be
upper-bounded as $r(m_{ts})\leq\sum_{u\in\Gamma_{t}\backslash
s}r(m_{ut})$. However, their upper-bound is not theoretically
sound, because they ignored the normalization factor in their
proof. Therefore, we can modify their {\em RBP0L} by utilizing our
upper-bound in \eqref{eq:maxcontraction}.

\section{Fixed Points and Message Errors for Uniform Binary Graphs}
\label{sec:uniform graph}

\cite{MooijKappen_JSTAT_05} analyzed the phase transition for
binary graphs based on Hessian of Bethe free energy. They
presented ferromagnetic interactions, antiferromagnetic
interactions and spin-glass interactions, by analyzing stability
of paramagnetic fixed point and other stable or unstable fixed
points. \cite{NIPS2009_Yusuke} obtained several interesting
results on binary graphs based on edge zeta function and Bethe
free energy. They stated that Bethe free energy is never convex
for any connected graph with at least two linearly independent
cycles. They also stated that the number of the fixed points of
LBP is always odd for binary graphs. We will analyze the behavior
of fixed points of LBP based on message updating function
directly.

In Section~\ref{sec:bounds belief}, we discussed uniform and
non-uniform distance bounds on beliefs. An error bound-variation
function was introduced to study the variation of error bounds
between successive iterations. However, to study the mechanism
behind message passing, we are more interested to know the
variation of true errors. Since it is usually hard to formulate
the true error-variation function for general graphical models, in
this section, we will only explore true error variation functions
for binary graphs.

Let us first introduce a well-studied binary graph -- Ising model.
The probability measure of Ising model can be expressed as:
\begin{equation}
P(x)=\frac{1}{Z}\exp{(\sum_{(s,t)\in\mathbb{E}}J_{st}x_sx_t+\sum_{s\in\mathbb{V}}
\theta_s x_s)},\label{eq:Ising}
\end{equation}
corresponding to $\psi_{st}(x_s,x_t)=\exp{(J_{st}x_sx_t)}$ and
$\psi_s(x_s)=\exp{(\theta_s x_s)}$ in \eqref{eq:pdf}. Because
$\{x_s\}$ are $\pm1$-valued, potential functions can also be
expressed as
$\begin{pmatrix}\exp{(J_{st})}&\exp{(-J_{st})}\\\exp{(-J_{st})}&\exp{(J_{st})}\end{pmatrix}$
and
$\begin{pmatrix}\exp{(\theta_s)}\\\exp{(-\theta_s)}\end{pmatrix}$.
However, rather than working on the Ising model, we will study a
more simple model. We call it completely uniform model (uniform
connectivity, uniform potential functions), which has the pairwise
potential functions $\begin{pmatrix}a&b\\b&a\end{pmatrix}$ and
single-node potential functions
$\begin{pmatrix}c\\d\end{pmatrix}$, where $a,b,c,d$ are positive.
Similar to~\eqref{eq:singlebelive}, we will put single-node
potential functions into beliefs and only discuss the influence of
pairwise potential functions on message errors. We can easily find
that a completely uniform graph has uniform messages.
\begin{property}
For a completely uniform graphical model, when synchronous LBP
reaches a steady state, all messages are the
same.\label{prop:steady state}
\end{property}
\begin{proof}
Completely uniform graphs are topologically invariant for each
node. In other words, each message has the same LBP update
equation. If some messages are different, for the symmetric
network, LBP will not reach a steady state.
\end{proof}
Because all messages have the same LBP update equation, we can
calculate the fixed-point messages exactly and discuss the
distances between them.

\subsection{Fixed Points and Quasi-Fixed Points}
Let us first discuss fixed-point messages for completely uniform
graphs. Assume the degree of each node is $k$. Let
$m_{out}=\begin{pmatrix} y\\1-y
\end{pmatrix}$ denote the outgoing message and
$m_{in}=\begin{pmatrix} x\\1-x \end{pmatrix}$ denote each incoming
message.  Therefore, we have the following LBP updating function:
\begin{equation}
y=F(x)=\frac{ax^k+b(1-x)^k}{(a+b)(x^k+(1-x)^k)}.\label{eq:binary
message}
\end{equation}
We can easily find that~\eqref{eq:binary message} is symmetric
with respect to the point $(x=0.5,y=0.5)$. Synchronous LBP update
corresponds to the fixed-point iteration function
$x_{n+1}=F(x_n)$, where $n$ is the iteration number. When
$x_{n+1}=x_{n}$, LBP message reaches a {\em fixed point}. However,
we sometimes have $x_{n+k}=x_{n}$ or $F^k(x)=x$, where $F^k(x)$ is
the composition function of $F(x)$ with itself $k$ times, which
shows $k$th-order periodicity. We define the solutions to
$F^k(x)=x,k>1$ as {\em quasi-fixed points}, when a belief network
will oscillate. In the following, we will show that LBP for
completely uniform binary graphs will have at most second order
periodicity.

\begin{property}
LBP updating function in~\eqref{eq:binary message} has at most
three real fixed points.\label{prop:three fixed points}
\end{property}
\begin{proof}
The second derivative of $F(x)$ is as follows: when $a>b$
\begin{equation*}
F^{(2)}(x)=((2x-k-1)x^k+(2x+k-1)(1-x)^k)\times\frac{k(a-b)x^{k-2}(1-x)^{k-2}}{(a+b)(x^k+(1-x)^k)^3}=\left\{\begin{array}{c}
>0,x\in(0,0.5)\\<0,x\in(0.5,1)\\=0,x=0,1,0.5\end{array}\right..
\end{equation*}
We can see that $F(x)$ is strictly convex when $0<x<0.5$ and
strictly concave when $0.5<x<1$. Similarly, for $a<b$, $F(x)$ is
strictly concave when $0<x<0.5$ and strictly convex when
$0.5<x<1$. When this function intersects with an arbitrary line,
there must be at most three crossing points. As shown in
Fig.~\ref{fig:binary LBP}(a), it must have at most three crossings
with $y=x$; similarly with $y=1-x$ in Fig.~\ref{fig:binary
LBP}(b).
\end{proof}
\begin{figure}[htb]
\begin{center}
\includegraphics[height=5.14 cm,width=10.94 cm]{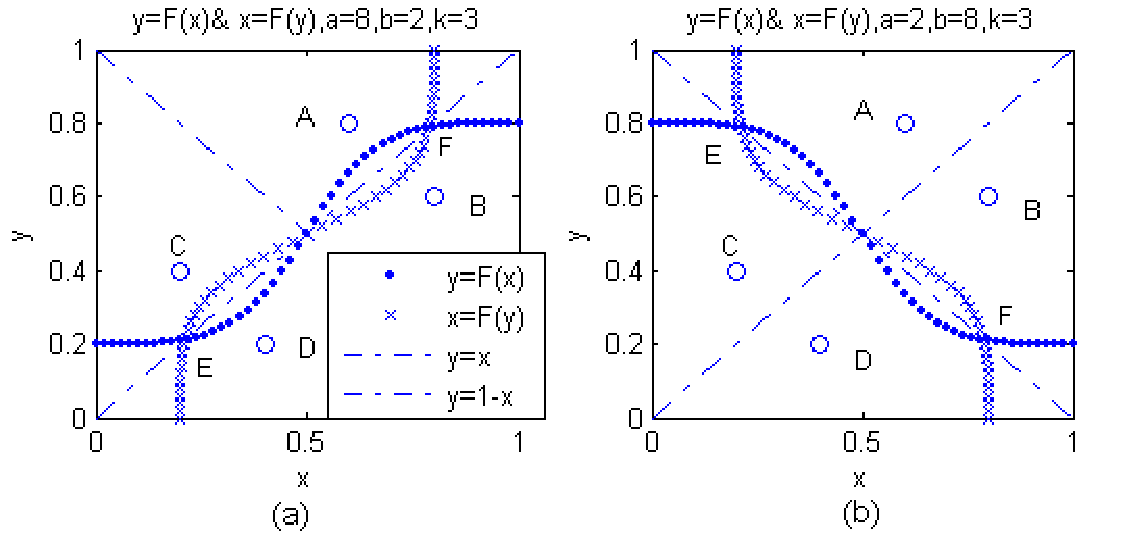}
\end{center}
\caption{LBP updating function in \eqref{eq:binary message} for
$a>b$ and $a<b$.}\label{fig:binary LBP}
\end{figure}
This property conforms to the analysis of
\cite{MooijKappen_JSTAT_05} and \cite{NIPS2009_Yusuke}. We will
show the symmetry of fixed-point messages for uniform binary
graphs as follows.
\begin{property}
For a completely uniform binary graph, synchronous LBP will either
converge to the unique fixed point
$\begin{pmatrix}0.5\\0.5\end{pmatrix}$ (paramagnetic fixed point),
or converge to one of $\begin{pmatrix}x^*\\1-x^*\end{pmatrix}$ and
$\begin{pmatrix}1-x^*\\x^*\end{pmatrix}$ when $a>b$
(ferromagnetic), or oscillate between
$\begin{pmatrix}x^*\\1-x^*\end{pmatrix}$ and
$\begin{pmatrix}1-x^*\\x^*\end{pmatrix}$ when $a<b$
(anti-ferromagnetic). When $a>b$, $x^*$ is the solution to
$x^*=F(x^*)$; otherwise, $x^*$ is the solution to
$1-x^*=F(x^*)$.\label{prop:fixed or oscilloate}
\end{property}

The proof appears in Appendix A.

From the previous property, we can conclude that completely
uniform binary graphs will have at most second order periodicity.
In other words, $F^{2n}(x)=x\Leftrightarrow F^{2}(x)=x$ and
$F^{2n-1}(x)=x\Leftrightarrow F(x)=x$.

Let us calculate the fixed points and quasi-fixed points for the
uniform graph in Fig.~\ref{fig:graphs}(c) with $a=\eta$ and
$b=1-\eta$. Solving $x=\frac{\eta
x^3+(1-\eta)(1-x)^3}{x^3+(1-x)^3}$ and $1-x=\frac{\eta
x^3+(1-\eta)(1-x)^3}{x^3+(1-x)^3}$ yields the fixed points and
quasi-fixed points respectively, for the graph in
Fig.~\ref{fig:graphs}(c).  Specifically, we can obtain four
solutions of fixed points
$\{\frac{1}{2},\frac{1}{2},\frac{-2+\eta-\sqrt{-4+8\eta-3\eta^2}}{2(-2+\eta)},
\frac{-2+\eta+\sqrt{-4+8\eta-3\eta^2}}{2(-2+\eta)}\}$ and four
solutions of quasi-fixed points
$\{\frac{1}{2},\frac{1}{2},\frac{1+\eta-\sqrt{1-2\eta-3\eta^2}}{2(1+\eta)},
\frac{1+\eta+\sqrt{1-2\eta-3\eta^2}}{2(1+\eta)}\}$.  When
$\eta>2/3$, the graph has two real fixed points except $0.5$; when
$\eta<1/3$, the graph has two real quasi-fixed points except
$0.5$; when $1/3<\eta<2/3$, the graph has one real fixed point
$0.5$. For instance, when $\eta=0.7$, we have two stable fixed
points $(0.9071, 0.0929)$ and $(0.0929, 0.9071)$; when $\eta=0.3$,
we have two quasi-fixed points $(0.9071, 0.0929)$ and $(0.0929,
0.9071)$. We observe that both cases have the same strength of
potential function $d(\psi)^2=0.7/0.3$, though their dynamic
characteristics are different.

Based on Property \ref{prop:fixed or oscilloate}, we find that for
completely uniform graphs, the maximum multiplicative error and
the minimum multiplicative error between two fixed-point messages
are reciprocal. In other words, $d(e(x))=\max{e(x)}$. Therefore,
compared to our {\em uniform distance bound} in Corollary
\ref{coro:uniform distance bd}, we have a tighter distance bound
as follows.
\begin{corollary}{\textbf{ (Uniform Distance Bound for Completely Uniform Binary Graph)}}\\
$\mathbb{G}(\mathbb{V},\mathbb{E})$ is a completely uniform binary
graphical model. The log-distance bound on beliefs at node $s$ is
\[\sum_{t\in\Gamma_s}\log\frac{d(\psi_{ts})^2
\varepsilon+1}{d(\psi_{ts})^2+\varepsilon},\] where $\varepsilon$
should satisfy \[\log{\varepsilon}=\max_{(s,p)\in
\mathbb{E}}\sum_{t\in\Gamma_s\backslash
p}\log{\frac{d(\psi_{ts})^2\varepsilon+1}{d(\psi_{ts})^2+\varepsilon}}.\]\label{coro:uniform
distance bd completely uniform graph}
\end{corollary}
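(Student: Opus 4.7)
The plan is to sharpen the general Uniform Distance Bound (Corollary~\ref{coro:uniform distance bd}) for this restricted class of graphs by exploiting the symmetry of fixed-point messages established in Property~\ref{prop:fixed or oscilloate}. The only two things that need to change in the proof of Corollary~\ref{coro:uniform distance bd} are (i) the factor $d(\psi_{t\star})$, which collapses to $1$ for the binary pairwise potentials at hand, and (ii) the exponent $2$ appearing in $\Delta_1$ of Theorem~\ref{theo:error upper&lower}, which can be dropped because the dynamic-range and maximum-error measures coincide on the relevant error functions.

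First I would check the potential-strength simplification. For a completely uniform binary graph the pairwise potential is $\begin{pmatrix} a & b \\ b & a \end{pmatrix}$, so $\psi_{t\star}(x_t)=\int\psi_{ts}(x_t,x_s)dx_s = a+b$ is constant in $x_t$; hence $d(\psi_{t\star})=1$. Next I would use Property~\ref{prop:fixed or oscilloate}: any two fixed-point (or quasi-fixed-point) messages have the form $m=(x^*,1-x^*)^{T}$ and $m'=(1-x^*,x^*)^{T}$, so the multiplicative error $e(x)=m'(x)/m(x)$ takes only the two reciprocal values $(1-x^*)/x^*$ and $x^*/(1-x^*)$. Consequently
\[
d(e)=\sqrt{\max_x e(x)\big/\min_x e(x)}=\sqrt{(\max_x e(x))^{2}}=\max_x e(x),
\]
i.e.\ the dynamic-range and maximum-error measures are equal on such errors. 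The same reciprocity is inherited by any incoming error product $E_{ts}$, since each factor $e_{ut}$ has this form.

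With these two observations in hand, I would replace the generic bound $\Delta_1$ on $\max_{x_s} e_{ts}(x_s)$ (which carries an exponent $2$) by the unsquared contraction of the dynamic-range measure from~\cite[Th.~8]{ihler05b}, namely Equation~\eqref{eq:dynamiccontraction}. Because $\max e_{ts}=d(e_{ts})$ in the present setting, that contraction reads
\[
\max_{x_s} e_{ts}(x_s)\;=\;d(e_{ts})\;\leq\;\frac{d(\psi_{ts})^{2}d(E_{ts})+1}{d(\psi_{ts})^{2}+d(E_{ts})}.
\]
Plugging this single-power bound into the fixed-point iteration argument of Corollary~\ref{coro:uniform distance bd} (with $\log\varepsilon$ interpreted as an upper bound on $\log\max_{x}E_{ts}(x)$) produces the self-consistency equation $\log\varepsilon=\max_{(s,p)\in\mathbb{E}}\sum_{t\in\Gamma_s\backslash p}\log\frac{d(\psi_{ts})^{2}\varepsilon+1}{d(\psi_{ts})^{2}+\varepsilon}$, and the log-distance on the belief $B_s\propto\psi_s\prod_{t\in\Gamma_s}m_{ts}$ becomes $\sum_{t\in\Gamma_s}\log\max e_{ts}\leq\sum_{t\in\Gamma_s}\log\frac{d(\psi_{ts})^{2}\varepsilon+1}{d(\psi_{ts})^{2}+\varepsilon}$, which is exactly the claim.

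The only delicate point — and the step I would spend the most care on — is justifying that the reciprocity $\max e_{ts}\cdot\min e_{ts}=1$ is preserved by one step of the LBP update, so that it genuinely holds throughout the propagation and not merely at the fixed points. This follows from the fact that the update operator $F$ in Equation~\eqref{eq:binary message} commutes with the involution $x\mapsto 1-x$ (the symmetry of $F$ about $(1/2,1/2)$ noted in the proof of Property~\ref{prop:three fixed points}), so if two sequences of messages are related by this involution at iteration $i$, they remain so at iteration $i+1$; in particular the errors they induce stay reciprocal. Once this invariance is in place, the rest of the argument is a direct specialization of the existing fixed-point-iteration proof.
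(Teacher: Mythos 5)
Your proposal is correct and follows essentially the same route as the paper: both arguments hinge on the reciprocity $\max_x e(x)\cdot\min_x e(x)=1$ for errors between the symmetric fixed points (Property~\ref{prop:fixed or oscilloate}), which makes the maximum-error and dynamic-range measures coincide, after which the unsquared dynamic-range contraction of~\cite[Th.~8]{ihler05b} and the fixed-point $\varepsilon$ of the dynamic-range recursion give the stated bound via $\log\max E_s=\log d(E_s)\leq\sum_{t\in\Gamma_s}\log d(e_{ts})$. Your additional verification that reciprocity is preserved by the update, and the remark that $d(\psi_{t\star})=1$, are harmless elaborations the paper omits.
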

\begin{proof}
$\log\max E_s=\log d(E_s)\leq\sum_{t\in\Gamma_s}\log d(e_{ts})\leq
\sum_{t\in\Gamma_s}\log\frac{d(\psi_{ts})^2\varepsilon+1}{d(\psi_{ts})^2+\varepsilon}$.
\end{proof}
For the uniform graph in Fig.~\ref{fig:graphs}(c), when
$\eta=0.7$, we have the true log-distance equal to $2.2785$, while
our previous log-distance bound in Corollary \ref{coro:uniform
distance bd completely uniform graph} obtains $2.2785$, which is
exactly equal to the true value, and our {\em Improved-UDB} in
Corollary 7 obtains $2.3318$.

\subsection{True Error-Variation Function}
\begin{figure}[htb]
\begin{center}
\includegraphics[height=8.18 cm,width=11.09 cm]{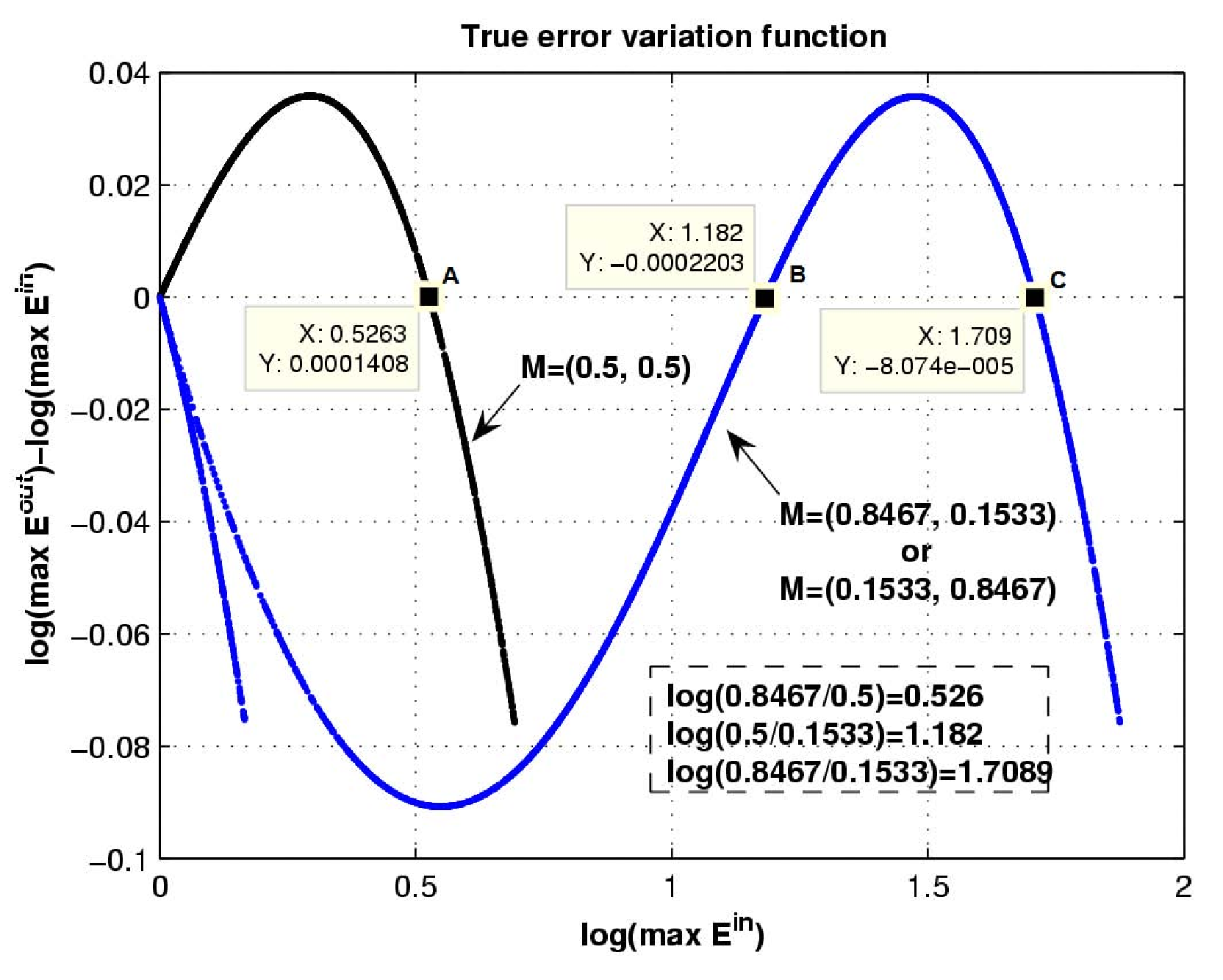}
\end{center}
\caption{True error variation function when $M$s are fixed-point
messages for the completely uniform graph in
Fig.~\ref{fig:graphs}(c). $a=0.7$, $b=0.3$. The fixed-point
messages are: $M=(0.8467,0.1533)$, $M=(0.1533,0.8467)$ and
$M=(0.5,0.5)$.}\label{fig:true variation}
\end{figure}

In this section, we characterize the true error-variation function
for a completely uniform binary graph. We have the following
message updating equation:
\begin{equation*}
\begin{pmatrix} m e^{out}_1\\(1-m)e^{out}_2\end{pmatrix}=\frac{1}{a+b}\begin{pmatrix}
a&b\\b&a\end{pmatrix}\begin{pmatrix}M
E^{in}_1\\(1-M)E^{in}_2\end{pmatrix},
\end{equation*}
where $M$ is the product of fixed-point incoming messages, $m$ is
the fixed-point outgoing message, $E^{in}$ represents the product
of incoming errors and $e^{out}$ represents the outgoing error.
Assuming $E^{in}$ is the same for each node at a level on the
Bethe tree, we have the following error equation:
\begin{equation*}
\begin{pmatrix}E^{out}_1\\E^{out}_2\end{pmatrix}=\frac{(aM+b(1-M))^k+(bM+a(1-M))^k}{(aME^{in}_1+b(1-M)E^{in}_2)^k+(bME^{in}_1+a(1-M)E^{in}_2)^k}\begin{pmatrix}\frac{(aME^{in}_1+b(1-M)E^{in}_2)^k}{(aM+b(1-M))^k}\\\frac{(bME^{in}_1+a(1-M)E^{in}_2)^k}{(bM+a(1-M))^k}\end{pmatrix},
\end{equation*}
where $E^{out}$ is the product of outgoing errors flowing into a
node at the upper level.

When $E^{in}_1>E^{in}_2$ and $a>b$, we have $E^{out}_1>E^{out}_2$.
Therefore, letting $E$ denote $E^{in}_1$, we obtain the true error
variation function:
\begin{equation*}
G(log(E))=\log\max{E^{out}}-\log\max{E^{in}}
\end{equation*}
\begin{equation}
=\log{(\frac{(aME+b(1-ME))^k}{(aM+b(1-M))^k}\cdot\frac{(aM+b(1-M))^k+(bM+a(1-M))^k}{(aME+b(1-ME))^k+(bME+a(1-ME))^k})}-\log{E},\label{eq:true
error bd}
\end{equation}
when $1<E<1/M$ and $a>b$.

An example of the true error variation function is illustrated in
Fig.~\ref{fig:true variation} for the graphical model in
Fig.~\ref{fig:graphs} (c). The curve of the error variation
function $G(\log{E})$ in Equation \eqref{eq:true error bd} varies
with the choice of $M$. The black curve corresponds to
$G(\log{E})$ for $M = (0.5, 0.5)$, while the blue curve
corresponds to $G(\log{E})$ for $M = (0.8467, 0.1533)$ or $M =
(0.1533, 0.8467)$. Since $G^{(1)}(\infty) = -1$, when $G(\log{E})$
does not cross the horizontal axis except the point at
$\log{E}=0$, we have $G(\log{E})<0$ for $\log{E}>0$. In other
words, $\log{E}$ will eventually decrease to zero and LBP
converges to a unique fixed point. However, when $G(\log{E})$
crosses the horizontal axis besides $\log{E}=0$, $\log{E}$ will
eventually stay at stable points, in which case, the product of
the incoming errors at one level of Bethe tree equals the product
of the incoming errors at its upper level. In other words, errors
will not decrease after one LBP update. In Fig.~\ref{fig:true
variation}, for the black curve, when $\log{E}$ leaves zero, it
will eventually stay at $A$. For the blue curve in
Fig.~\ref{fig:true variation}, when $\log{E}$ is between zero and
the value at point $B$, it will decrease and finally stay at zero;
when $\log{E}$ is bigger than the value at point $B$, it will
increase and finally stay at point $C$. We can see that point $B$
is an unstable point.

From the example in Fig.~\ref{fig:true variation}, we can observe
that the zero-crossing points of $\log{E}$ correspond to the exact
log distances between two fixed-point messages. Specifically, the
value at point $A$ is equal to the maximal log distance between $M
= (0.8467, 0.1533)$ and $M = (0.5, 0.5)$, and the value at point
$B$ is equal to the maximal log distance between $M = (0.5, 0.5)$
and $M = (0.1533, 0.8467)$, and the value at point $C$ is equal to
the maximal log distance between $M = (0.8467, 0.1533)$ and $M =
(0.1533, 0.8467)$. Therefore, our true error function in
Equation~\eqref{eq:true error bd} characterizes the true distance
between fixed points, when LBP does not converge.

\section{Conclusion}
\label{sec:conclusion}

In this paper, we presented tighter error bounds on Loopy Belief
Propagation (LBP) and used these bounds to study the
dynamics---error, convergence, accuracy, and scheduling---of the
sum-product algorithm. Specifically, we derived tight upper- and
lower-bounds on error propagation in synchronous belief networks.
We subsequently relied on these bounds to provide uniform and
non-uniform distance bounds for the sum-product algorithm. We then
used the distance bounds to obtain uniform and non-uniform
sufficient conditions for convergence of the sum-product
algorithm. We investigated the relation between convergence of LBP
with sparsity and walk-summability of graphical models. We also
showed that upper-bounds on message errors can be utilized to
determine a priority for scheduling in sequential belief
propagation. Moreover, we studied the accuracy of the bounds on
the sum-product algorithm based on our error bounds. We also
presented a case study of LBP by characterizing the dynamics of
the sum-product algorithm for completely uniform graphs and
analyzed its fixed and quasi-fixed (oscillatory) points.





\newpage

\section*{Appendix A. Detailed Proofs}
\label{app:theorem}

\textbf{Proof of Theorem \ref{theo:error upper&lower}}
\begin{proof}
We use {\em maximum multiplicative error} function as an error
measure:
\begin{equation*}
\max_{x_s} e^{i+1}_{ts}(x_{s})=\max_{x_s}
   \frac{\int\psi_{ts}(x_t,x_s)M_{ts}(x_t)E^i_{ts}(x_t)dx_t}{\int\psi_{t\star}(x_t)M_{ts}(x_t)E^i_{ts}(x_t)dx_t}\times
\frac{\int\psi_{t\star}(x_t)M_{ts}(x_t)dx_t}{\int\psi_{ts}(x_t,x_s)M_{ts}(x_t)dx_t},
\end{equation*}
where $\psi_{t\star}(x_t) = \int\psi_{ts}(x_t,x_s)dx_s$. The {\em
minimum multiplicative error} function $\min_{x_s}
e^{i+1}_{ts}(x_{s})$ is also used as an error measure in this
theorem. Some assumptions throughout this proof are:
$\psi_{ts}(x_t,x_s)$ is positive; message product $M_{ts}(x_t)$
and polluted message product $M_{ts}(x_t)E^i_{ts}(x_t)$ are
positive and normalized.

We use the same framework of proof as that in \cite[Thm.
8]{ihler05b}. Let us first introduce a lemma that will be used in
our proof.
\begin{lemma}
For $f_1,f_2,g_1,g_2$ all positive,
\begin{eqnarray*}
\frac{f_1+f_2}{g_1+g_2}\leq\max[\frac{f_1}{g_1},\frac{f_2}{g_2}],\quad\frac{f_1+f_2}{g_1+g_2}\geq\min[\frac{f_1}{g_1},\frac{f_2}{g_2}].
\end{eqnarray*}\label{lemma:sum fraction upper lower}
\end{lemma}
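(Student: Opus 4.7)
The plan is to prove the two inequalities by the classical mediant argument, i.e., to show that the ratio $(f_1+f_2)/(g_1+g_2)$ is a convex combination of $f_1/g_1$ and $f_2/g_2$, and therefore sits between them.

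First I would write
\[
\frac{f_1+f_2}{g_1+g_2} \;=\; \frac{g_1}{g_1+g_2}\cdot\frac{f_1}{g_1} \;+\; \frac{g_2}{g_1+g_2}\cdot\frac{f_2}{g_2}.
\]
Since $g_1,g_2>0$, the coefficients $g_1/(g_1+g_2)$ and $g_2/(g_1+g_2)$ are nonnegative and sum to one, so the right-hand side is a convex combination of $f_1/g_1$ and $f_2/g_2$. Any convex combination of two real numbers lies between their minimum and maximum, which immediately gives both inequalities in the statement.

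If a more direct cross-multiplication style is preferred (which matches the tone of the paper's appendix), I would instead argue as follows. Assume without loss of generality that $f_1/g_1 \le f_2/g_2$, which is equivalent to $f_1 g_2 \le f_2 g_1$ since all denominators are positive. For the upper bound, compare $(f_1+f_2)g_2$ with $f_2(g_1+g_2)$: their difference is $f_1 g_2 - f_2 g_1 \le 0$, hence $(f_1+f_2)/(g_1+g_2) \le f_2/g_2 = \max(f_1/g_1, f_2/g_2)$. For the lower bound, compare $(f_1+f_2)g_1$ with $f_1(g_1+g_2)$: their difference is $f_2 g_1 - f_1 g_2 \ge 0$, giving $(f_1+f_2)/(g_1+g_2) \ge f_1/g_1 = \min(f_1/g_1, f_2/g_2)$.

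There is no real obstacle here; the only thing to be careful about is the symmetry reduction (handling the opposite ordering $f_2/g_2 \le f_1/g_1$ in exactly the same way) and the positivity of the $g_i$'s, which is used to preserve inequality directions under cross multiplication and to ensure the convex-combination weights are well defined. I would present the convex-combination proof as the main argument since it is shortest and makes the role of positivity transparent, and mention in one line that equality in either bound holds iff $f_1/g_1 = f_2/g_2$.
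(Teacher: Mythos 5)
Your proof is correct. The paper's own argument is exactly your second, cross-multiplication variant: assume WLOG an ordering of $f_1/g_1$ and $f_2/g_2$, cross-multiply (valid since the $g_i$ are positive), add $f_1g_1$ to both sides, and read off the bound; the paper does this once for each inequality. Your primary argument — writing
\[
\frac{f_1+f_2}{g_1+g_2}=\frac{g_1}{g_1+g_2}\cdot\frac{f_1}{g_1}+\frac{g_2}{g_1+g_2}\cdot\frac{f_2}{g_2}
\]
and observing that a convex combination lies between the min and the max — is a different packaging of the same mediant fact. It buys you both inequalities simultaneously from a single identity, makes the role of $g_1,g_2>0$ transparent (the weights must be nonnegative and well defined), and gives the equality characterization for free; the paper's version is marginally more elementary in that it avoids invoking the convex-combination principle, but requires two separate WLOG cases. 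Either proof is complete and there is no gap.
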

\begin{proof}
The left inequality is proved in \cite{ihler05b}. Let us restate
it here. Assume without loss of generality that $f_1/g_1\geq
f_2/g_2$ so that $f_1 g_2\geq f_2 g_1\Rightarrow f_1 g_2+f_1
g_1\geq f_2 g_1+f_1 g_1\Rightarrow
\frac{f_1}{g_1}\geq\frac{f_1+f_2}{g_1+g_2}$. For the right
inequality assume without loss of generality that $f_1/g_1\leq
f_2/g_2$ so that $f_1 g_2\leq f_2 g_1\Rightarrow f_1 g_2+f_1
g_1\leq f_2 g_1+f_1 g_1\Rightarrow
\frac{f_1}{g_1}\leq\frac{f_1+f_2}{g_1+g_2}$.
\end{proof}

Similar to the analysis in~\cite[Lemma 26]{ihler05b},we need the
following lemma to assist our proof. In the following, we shall
omit reference to the iteration number of the messages and errors
for simplicity and clarity of the presentation.

\begin{lemma}
The maximum of $\max_{x_s} e_{ts}(x_{s})$ or the minimum of
$\min_{x_s} e_{ts}(x_{s})$ is attained when
\begin{align*}
&\psi_{ts}(x_t,x_s)=1+(d(\psi_{ts})^2-1)\chi_\psi(x_t),
\quad\psi_{t\star}(x_t)=1+(d(\psi_{t\star})^2-1)\chi_\star(x_t)\\
&E_{ts}(x_t)=1+(d(E_{ts})^2-1)\chi_E(x_t),
\end{align*}
where $\chi_\psi$, $\chi_\star$ and $\chi_E$ are indicator
functions.\label{lemma:max attained at extreme psi}
\end{lemma}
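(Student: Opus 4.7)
The plan is to show that the extremum of $\max_{x_s}e_{ts}(x_s)$ (respectively $\min_{x_s}e_{ts}(x_s)$) over all admissible triples $(\psi_{ts},\psi_{t\star},E_{ts})$ with prescribed dynamic-range measures is attained by two-valued (indicator-type) functions, via a Möbius-function / extreme-point argument in the spirit of Lemma 26 of \cite{ihler05b}. First I fix an $x_s^{\ast}$ that achieves the outer $\max$ (or $\min$), so the quantity to be extremized becomes
$$
e_{ts}(x_s^{\ast})=\frac{\int\psi_{ts}(x_t,x_s^{\ast})M_{ts}(x_t)E_{ts}(x_t)\,dx_t}{\int\psi_{t\star}(x_t)M_{ts}(x_t)E_{ts}(x_t)\,dx_t}\cdot\frac{\int\psi_{t\star}(x_t)M_{ts}(x_t)\,dx_t}{\int\psi_{ts}(x_t,x_s^{\ast})M_{ts}(x_t)\,dx_t}.
$$

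The key step is to observe that, holding all other pointwise values fixed, $e_{ts}(x_s^{\ast})$ is a Möbius (linear-fractional) function of the pointwise value of any one of $\psi_{ts}$, $\psi_{t\star}$, or $E_{ts}$ at any single point $x_t^{\ast}$. Indeed, each of the four integrals above is affine in that single value, with strictly positive coefficient (because $\psi_{ts}$, $M_{ts}$, $E_{ts}$ are positive), so the numerator and denominator of $e_{ts}(x_s^{\ast})$ are both affine in it and the pole of the resulting Möbius function lies outside the admissible interval, which after normalization of the minimum to $1$ is $[1,d(\cdot)^{2}]$. A Möbius function with pole outside the interval is strictly monotonic on it, so its extremum over that interval is attained at one of the two endpoints, i.e.\ at $1$ or at $d(\cdot)^{2}$. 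For $\psi_{ts}(x_t^{\ast},x_s^{\ast\ast})$ with $x_s^{\ast\ast}\neq x_s^{\ast}$ the same reasoning applies, noting that $\psi_{t\star}(x_t^{\ast})=\int\psi_{ts}(x_t^{\ast},x_s)\,dx_s$ also varies affinely, so the induced dependence remains Möbius.

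Iterating this pointwise extremization across all $x_t$ (and, for $\psi_{ts}$, across $x_s$) drives each of the three functions to take only two values, which after normalization are $1$ and $d(\cdot)^{2}$; this is exactly the stated form $f=1+(d(f)^{2}-1)\chi$. The consistency relation $\psi_{t\star}(x_t)=\int\psi_{ts}(x_t,x_s)\,dx_s$ then forces $\chi_{\star}$ to be compatible with $\chi_{\psi}$, which the statement absorbs by giving the indicators separate names. The main obstacle is making the iteration rigorous in the continuous setting: Möbius monotonicity only says that pushing one value to an endpoint does not worsen the objective, so one needs a compactness/continuity argument to conclude that the supremum (or infimum) is actually attained by a two-valued function rather than merely approached. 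The cleanest route, following the approach of Lemma 26 in \cite{ihler05b}, is to discretize $x_t$ and $x_s$, prove the extreme-point statement for finite sums by finite induction on the number of coordinates not yet at an endpoint, and pass to the limit using continuity of $e_{ts}(x_s^{\ast})$ in the $L^{1}$ topology on the admissible class of potentials and error products.
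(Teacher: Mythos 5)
Your proposal is correct and follows essentially the same route as the paper: the paper writes each of $\psi_{ts}$, $\psi_{t\star}$, $E_{ts}$ as a convex combination of positive functions and applies the mediant inequality $\frac{f_1+f_2}{g_1+g_2}\le\max[f_1/g_1,\,f_2/g_2]$ to conclude that the extremum is attained at the two-valued extreme points $1+(d(\cdot)^2-1)\chi$, which is the same linear-fractional (M\"obius) monotonicity you invoke pointwise. The one caveat is your attempt to preserve the coupling $\psi_{t\star}(x_t)=\int\psi_{ts}(x_t,x_s)\,dx_s$ while varying $\psi_{ts}(x_t^\ast,x_s^\ast)$ at the distinguished $x_s^\ast$ --- there both factors of the numerator and denominator move and the dependence becomes a ratio of quadratics, not M\"obius --- but the lemma (and the paper's proof) simply relaxes the three functions to independent ones with prescribed dynamic ranges, under which your argument goes through.
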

\begin{proof}
Let
$\psi_{ts}(x_t,x_s)=\alpha_1\psi_{1}(x_t,x_s)+\alpha_2\psi_{2}(x_t,x_s)$,where
$\alpha_1\geq0$,$\alpha_2\geq0$, $\alpha_1+\alpha_2=1$. In other
words,$\psi_{ts}(x_t,x_s)$ is a convex combination of two
arbitrary positive functions $\psi_{1}(x_t,x_s)$ and
$\psi_{2}(x_t,x_s)$. Thus, by applying Lemma \ref{lemma:sum
fraction upper lower}, we have:
\begin{eqnarray*}
\frac{\alpha_1\int\psi_{1}(x_t,x_s)M_{ts}(x_t)E_{ts}(x_t)dx_t+\alpha_2\int\psi_{2}(x_t,x_s)M_{ts}(x_t)E_{ts}(x_t)dx_t}{\alpha_1\int\psi_{1}(x_t,x_s)M_{ts}(x_t)dx_t+\alpha_2\int\psi_{2}(x_t,x_s)M_{ts}(x_t)dx_t}\\
\leq\max[\frac{\int\psi_{1}(x_t,x_s)M_{ts}(x_t)E_{ts}(x_t)dx_t}{\int\psi_{1}(x_t,x_s)M_{ts}(x_t)dx_t},\frac{\int\psi_{2}(x_t,x_s)M_{ts}(x_t)E_{ts}(x_t)dx_t}{\int\psi_{2}(x_t,x_s)M_{ts}(x_t)dx_t}].
\end{eqnarray*}
We find that $\max_{x_s} e_{ts}(x_{s})$ is maximized when we take
the maximum of the RHS expression in the previous inequality. Let
us scale $\psi_{ts}(x_t,x_s)$ so that the minimal value of the
function is $1$. Thus, $\psi_{ts}(x_t,x_s)$ can be composed by a
convex combination of functions which have the form
$1+(d(\psi_{ts})^2-1)\chi_\psi(x_t)$, where $\chi_\psi(x_t)$ is an
indicator function. We can find that the $\max_{x_s}
e_{ts}(x_{s})$ is maximized when
$\psi_{ts}(x_t,x_s)=1+(d(\psi_{ts})^2-1)\chi_\psi(x_t)$. Similar
are the proofs for $\psi_{t\star}(x_t)$ and $E_{ts}(x_t)$.

To minimize the $\min_{x_s} e_{ts}(x_{s})$, by applying Lemma
\ref{lemma:sum fraction upper lower}, we have:
\begin{eqnarray*}
\frac{\alpha_1\int\psi_{1}(x_t,x_s)M_{ts}(x_t)E_{ts}(x_t)dx_t+\alpha_2\int\psi_{2}(x_t,x_s)M_{ts}(x_t)E_{ts}(x_t)dx_t}{\alpha_1\int\psi_{1}(x_t,x_s)M_{ts}(x_t)dx_t+\alpha_2\int\psi_{2}(x_t,x_s)M_{ts}(x_t)dx_t}\\
\geq\min[\frac{\int\psi_{1}(x_t,x_s)M_{ts}(x_t)E_{ts}(x_t)dx_t}{\int\psi_{1}(x_t,x_s)M_{ts}(x_t)dx_t},\frac{\int\psi_{2}(x_t,x_s)M_{ts}(x_t)E_{ts}(x_t)dx_t}{\int\psi_{2}(x_t,x_s)M_{ts}(x_t)dx_t}].
\end{eqnarray*}
Furthermore, by constructing the potential function
$\psi_{ts}(x_t,x_s)$ as a convex combination of functions of the
form $1+(d(\psi_{ts})^2-1)\chi_\psi(x_t)$, where $\chi_\psi(x_t)$
is an indicator function, we can find that $\min_{x_s}
e_{ts}(x_{s})$ is minimized when $\psi_{ts}(x_t,x_s))$ is one of
these functions. Similar are the proofs for $\psi_{t\star}(x_t)$
and $E_{ts}(x_t)$.
\end{proof}
So we have $\max_{x_s} e_{ts}(x_{s})$ is bounded by
\begin{eqnarray*}
&&\frac{\int\psi_{ts}(x_t,x_s)M_{ts}(x_t)E_{ts}(x_t)dx_t}{\int\psi_{t\star}(x_t)M_{ts}(x_t)E_{ts}(x_t)dx_t}\times
\frac{\int\psi_{t\star}(x_t)M_{ts}(x_t)dx_t}{\int\psi_{ts}(x_t,x_s)M_{ts}(x_t)dx_t}\\
&&=\frac{\int(1+(d(\psi_{ts})^2-1)\chi_\psi(x_t))M_{ts}(x_t)(1+(d(E_{ts})^2-1)\chi_E(x_t))dx_t}{\int(1+(d(\psi_{t\star})^2-1)\chi_\star(x_t))M_{ts}(x_t)(1+(d(E_{ts})^2-1)\chi_E(x_t))dx_t}\\
&&\times\frac{\int(1+(d(\psi_{t\star})^2-1)\chi_\star(x_t))M_{ts}(x_t)dx_t}{\int(1+(d(\psi_{ts})^2-1)\chi_\psi(x_t))M_{ts}(x_t)dx_t}.
\end{eqnarray*}
Define the quantities:
\begin{align*}
&M_A =\int M_{ts}(x_t)\chi_\psi(x_t)d x_t,\quad M_B =\int
M_{ts}(x_t)\chi_\star(x_t)d x_t,\quad M_E =\int
M_{ts}(x_t)\chi_E(x_t)d
x_t,\\
&M_{AE} =\int M_{ts}(x_t)\chi_\psi(x_t)\chi_E(x_t)d x_t,\quad M_{BE} =\int M_{ts}(x_t)\chi_\star(x_t)\chi_E(x_t)dx_t,\\
&\alpha_1 =d(\psi_{ts})^2-1,\quad
\alpha_2=d(\psi_{t\star})^2-1,\quad \beta=d(E_{ts})^2-1.
\end{align*}
The maximum multiplicative error $\max_{x_s} e_{ts}(x_s)$ is
upper-bounded by $\Delta_1$ where
\[
\Delta_1 = \max_{M}\frac{1+\alpha_1 M_A+\beta M_E+\alpha_1\beta
M_{AE}}{1+\alpha_2 M_B+\beta M_E+\alpha_2\beta
M_{BE}}\,\frac{1+\alpha_2 M_B}{1+\alpha_1 M_A}.
\]
The maximum is obtained when $M_{AE}=M_A=M_E=1-M_B$ and
$M_{BE}=0$, which gives
\[
\Delta_1 =
\max_{M_E}\frac{1+(\alpha_1+\beta+\alpha_1\beta)M_E}{1+\alpha_2+(\beta-\alpha_2)M_E}
\,\frac{1+\alpha_2-\alpha_2M_E}{1+\alpha_1 M_E}.
\]
Taking the derivative wrt $M_E$ and setting it to zero, we obtain
\begin{equation*}
\max_{x_s} e_{ts}(x_s)\leq \Delta_1 =
\left(\frac{d(\psi_{ts})d(\psi_{t\star})d(E_{ts})+1}{d(\psi_{ts})d(\psi_{t\star})+d(E_{ts})}\right)^2.
\end{equation*}

Similarly to what we have done so far, we can lower-bound
$\min_{x_s} e_{ts}(x_{s})$ with respect to $\psi_{ts}(x_t,x_s)$,
$\psi_{t\star}(x_t)$ and $E_{ts}(x_t)$, to obtain
\begin{equation*}
\min_{x_s}
e_{ts}(x_s)\geq\left(\frac{d(\psi_{ts})d(\psi_{t\star})+d(E_{ts})}{d(\psi_{ts})d(\psi_{t\star})d(E_{ts})+1}\right)^2=\frac{1}{\Delta_1}.
\end{equation*}
\end{proof}
\textbf{Proof of Corollary \ref{coro:uniform distance bd}}
\begin{proof}
Let
$\Delta_{ut}(x)=(\frac{d(\psi_{ut})d(\psi_{u\star})x+1}{d(\psi_{ut})d(\psi_{u\star})+x})^2,x\geq1,ut\in\mathbb{E}$.
Therefore,
\begin{align*}
d(E_{ts}^i)\leq\prod_{u\in\Gamma_t\backslash
s}d(e_{ut}^i)=\prod_{u\in\Gamma_t\backslash
s}\frac{\max\sqrt{e_{ut}^i(x_t)}}{\min\sqrt{e_{ut}^i(x_t)}}\leq\varepsilon_{ts}^i=\prod_{u\in\Gamma_t\backslash
s}\Delta_{ut}(d(E_{ut}^{i-1})).
\end{align*}
Thus, we have
\begin{align*}
&\max_{x_s}E_{sp}^{i+1}(x_s)\leq\prod_{t\in\Gamma_s\backslash
p}\max_{x_s}e_{ts}^{i+1}(x_s)\leq\varepsilon_{sp}^{i+1}=\prod_{t\in\Gamma_s\backslash
p}\Delta_{ts}(d(E_{ts}^i))\\
&\leq\prod_{t\in\Gamma_s\backslash
p}\Delta_{ts}(\varepsilon_{ts}^{i})\leq\prod_{t\in\Gamma_s\backslash
p}\Delta_{ts}(\max_{t\in\Gamma_s\backslash
p}\varepsilon_{ts}^{i})=\Delta_3(\max_{t\in\Gamma_s\backslash
p}\varepsilon_{ts}^{i}).
\end{align*}

The term $\varepsilon_{sp}^{i+1}$ is an upper-bound on the
incoming error product $E_{sp}^{i+1}(x_s)$ at iteration $i+1$,
while $\max_{t\in\Gamma_s\backslash p}\varepsilon_{ts}^{i}$ is the
maximum of the upper-bounds on the incoming error products
$\{E_{ts}^{i}(x_t),t\in\Gamma_s\backslash p\}$ at iteration $i$.
We hope to achieve that
$\varepsilon_{sp}^{i+1}<\max_{t\in\Gamma_s\backslash
p}\varepsilon_{ts}^{i}$. Denoting
$\varepsilon=\max_{t\in\Gamma_s\backslash p}\varepsilon_{ts}^{i}$,
let us introduce an {\em error bound-variation function}:
\begin{equation*}
G_{sp}(\log{\varepsilon})=\log
\Delta_3(\varepsilon)-\log{\varepsilon}\geq\log\varepsilon_{sp}^{i+1}-\log\max_{t\in\Gamma_s\backslash
p}\varepsilon_{ts}^{i},\varepsilon\geq 1,
\end{equation*}
which describes variation of error bound after each iteration.
When $G_{sp}(\log{\varepsilon})=0$, the log-distance bound
$\log{\varepsilon}$ will reach a fixed point, which is the maximal
distance between message products at various iterations. Because
$G_{sp}^{(2)}(\log{\varepsilon})<0$ for $\log{\varepsilon}>0$ and
$G_{sp}^{(1)}(\infty)=-1/2$, $G_{sp}^{(1)}(\log{\varepsilon})$
will decrease until it is equal to $-1/2$.  Therefore, it only has
one crossing point besides $\log{\varepsilon}=0$ (zero crossing
point). This nonzero crossing point is a stable fixed point of
function $G_{sp}(\log{\varepsilon})$.  In other words, once
$\log{\varepsilon}$ leaves the zero crossing point, it will stay
at this stable crossing point, $\log{\varepsilon^*}$, which
corresponds to the upper bound on error products.

Because the distance between fixed points of $B_s(x_s)$ is
\[
\log{E_s(x_s)}=\log{\prod_{t\in\Gamma_s}e_{ts}(x_s)}\leq
\log{\prod_{t\in\Gamma_s}\Delta_{ts}(\varepsilon^*)},
\]
we can obtain the log-distance bound on $B_s(x_s)$ by taking the
maximum $\varepsilon^*$.
\end{proof}
\textbf{Proof of Theorem \ref{theo:uniform converge}}
\begin{proof}
Let us revisit the {\em error bound-variation function} in
Equation~\eqref{eq:bd variation}:
\begin{equation*}
G_{sp}(\log{\varepsilon})=\log{\prod_{t\in\Gamma_s\backslash
p}(\frac{d(\psi_{ts})d(\psi_{t\star})\varepsilon+1}{d(\psi_{ts})d(\psi_{t\star})+\varepsilon}})^2-\log{\varepsilon},
\end{equation*}
which describes the variation of the error bound after each
iteration. To guarantee that LBP converges, it is sufficient to
require $G_{sp}(\log{\varepsilon})<0,\forall \log{\varepsilon}>0$.
Let $z=\log{\varepsilon}$. The second derivative of $G_{sp}(z)$ is
\begin{equation*}
G_{sp}^{(2)}(z)=2\times\sum_{t\in\Gamma_s\backslash
p}\frac{d(\psi_{ts})d(\psi_{t\star})e^z((d(\psi_{ts})d(\psi_{t\star}))^2-1)(1-e^{2z})}{(d(\psi_{ts})d(\psi_{t\star})e^z+1)^2(d(\psi_{ts})d(\psi_{t\star})+e^z)^2}\leq
0,
\end{equation*}
when $d(\psi_{ts})d(\psi_{t\star})>1$ and $z\geq 0$. When $z>0$,
$G_{sp}(z)$ is strictly concave.

The first derivation of $G_{sp}(z)$ is
\begin{equation*}
G_{sp}^{(1)}(z)=2\times\sum_{t\in\Gamma_s\backslash
p}\frac{e^z((d(\psi_{ts})d(\psi_{t\star}))^2-1)}{(d(\psi_{ts})d(\psi_{t\star})e^z+1)(d(\psi_{ts})d(\psi_{t\star})+e^z)}-1.
\end{equation*}
Because $G_{sp}(z=0)=0$, if the first derivative
$G_{sp}^{(1)}(z=0)<0$, we will have $G_{sp}(z>0)<0$. Therefore,
\[G_{sp}^{(1)}(0)=2\times\sum_{t\in\Gamma_s\backslash
p}\frac{((d(\psi_{ts})d(\psi_{t\star}))^2-1)}{(d(\psi_{ts})d(\psi_{t\star})+1)(d(\psi_{ts})d(\psi_{t\star})+1)}-1<0\]\[\Rightarrow\sum_{t\in\Gamma_s\backslash
p}\frac{d(\psi_{ts})d(\psi_{t\star})-1}{d(\psi_{ts})d(\psi_{t\star})+1}<\frac{1}{2}.\]
\end{proof}
\textbf{Proof of Theorem \ref{theo:nonuniform convergence}}
\begin{proof}
Recall that in the proof of {\em uniform convergence condition},
we use an {\em error bound-variation function}
$G_{sp}(\log{\varepsilon})$, which is originally to describe
$(\log{\varepsilon_{sp}^{i+1}}-\log{\varepsilon_{ts}^{i}})$, for
$\forall (s,p)\in \mathbb{E}$. For each $T(\mathbb{G},v)$, given
$vu\in\mathbb{\bar{E}}$, let us introduce the following error
bound-variation function:
\begin{eqnarray*}
&G_{vu}(\{\log{\varepsilon_{w_iv}}\},\log{\varepsilon})=\sum_{w_i\in\Gamma_{v}\backslash u}\log{\frac{d(\psi_{l(w_i)v})^2\varepsilon_{w_iv}+1}{d(\psi_{l(w_i)v})^2+\varepsilon_{w_iv}}}-\log{\varepsilon},\\
&\log{\varepsilon_{w_iv}}=\sum_{w_j\in\Gamma_{w_i}\backslash
v}\log{\frac{d(\psi_{l(w_j)l(w_i)})^2\varepsilon_{w_jw_i}+1}{d(\psi_{l(w_j)l(w_i)})^2+\varepsilon_{w_jw_i}}},\\
&...\\
&\log{\varepsilon_{w_qw_p}}=\sum_{w_r\in\Gamma_{w_q}\backslash
w_p}\log{\frac{d(\psi_{l(w_r)l(w_q)})^2\varepsilon+1}{d(\psi_{l(w_r)l(w_q)})^2+\varepsilon}},
\end{eqnarray*}
where $\{w_r\}$ is the set of leaf nodes of $T(\mathbb{G},v)$.

To guarantee LBP to converge, it is sufficient to have
$G_{vu}(\log{\varepsilon})<0$ for $\forall \log{\varepsilon}>0$.
Because $G_{vu}(\log{\varepsilon}=0)=0$, when
$G_{vu}'(\log{\varepsilon}=0)<0$, we will definitely have
$G_{vu}(0<\log{\varepsilon}<\delta)<0$, where $\delta$ is a small
positive value. When $G_{vu}(\log{\varepsilon})$ is concave,
$\delta$ can be infinity so that the convergence of LBP is true
for $\forall \log{\varepsilon}>0$. However, because
$G_{vu}(\log{\varepsilon})$ is not guaranteed to be concave, we
will only obtain local convergence for an infinitesimal $\delta$.

Define
$f_{w_jw_i}(\varepsilon_{w_jw_i})=\log{\frac{d(\psi_{l(w_j)l(w_i)})^2\varepsilon_{w_jw_i}+1}{d(\psi_{l(w_j)l(w_i)})^2+\varepsilon_{w_jw_i}}}$.
Thus, we have the first derivative of\\
$G_{vu}(\{\log{\varepsilon_{w_iv}}\},\log{\varepsilon})$ as
follows:
\begin{equation*}
\frac{\partial
G_{vu}(\{\log{\varepsilon_{w_iv}}\},\log{\varepsilon})}{\partial
\log{\varepsilon}}=\sum_{w_i\in\Gamma_{v}\backslash
u}f_{w_iv}'\sum_{w_j\in\Gamma_{w_i}\backslash
v}f_{w_jw_i}'....\sum_{w_r\in\Gamma_{w_q}\backslash
w_p}f_{w_rw_q}'-1,
\end{equation*}
where $f'=\frac{\partial f(\log{\varepsilon})}{\partial
\log{\varepsilon}}=\frac{(d(\psi)^4-1)\varepsilon}{(d(\psi)^2\varepsilon+1)(d(\psi)^2+\varepsilon)}$.
Plugging $\log{\varepsilon}=0$ into the previous equation, we
obtain our non-uniform convergence condition.
\end{proof}
\textbf{Proof of Property \ref{prop:fixed or oscilloate}}
\begin{proof}
Let us analyze the fixed points by solving the set of equations
\begin{subequations}
\begin{eqnarray}
y=F(x)\label{eq:y=f(x)}\\
x=F(y)\label{eq:x=f(y)}
\end{eqnarray}
\end{subequations}
which corresponds to second order periodicity $x=F^2(x)$.  The set
of equations is depicted in Fig.~\ref{fig:binary LBP} for $a>b$
and $a<b$ respectively. We can easily find that $F(x)$ and $F(y)$
are symmetric with respect to $y=x$. Moreover, because $F(x)$ is
symmetric about the point $(0.5,0.5)$, we have $F(1-x)=1-F(x)$.
Therefore, it is easy to see that $F(x)$ and $F(y)$ are also
symmetric with respect to $y=1-x$. Let us check whether the two
functions are symmetric with respect to other lines such as
$y=\beta+\alpha x$. Substitute $y=\beta+\alpha x$ and
$x=\frac{1}{\alpha}(y-\beta)$ in~\eqref{eq:y=f(x)}.  We have
$\beta+\alpha
x=\frac{a(y-\beta)^k+b(\alpha-(y-\beta))^k}{(a+b)((y-\beta)^k+(\alpha-(y-\beta))^k)}$.
 For this equation to be always equivalent to~\eqref{eq:x=f(y)}, we have
$(\alpha=1,\beta=0)$ or $(\alpha=-1,\beta=1)$.  Thus, the set of
equations is only symmetric with respect to $y=x$ and $y=1-x$.

When $y=F(x)$ and $x=F(y)$ intersect, they must have crossing
points on $y=x$ or $y=1-x$. In the following, we will show that
they do not cross elsewhere. When $a>b$, let us assume these two
functions have one crossing point {\em A} not on $y=x$ and
$y=1-x$, which is illustrated in Fig.~\ref{fig:binary LBP} (a).
Due to the symmetry between $F(x)$ and $F(y)$, they must have the
other three crossing points $B,C$ and $D$ shown in
Fig.~\ref{fig:binary LBP} (a) respectively. Both functions must go
through those points. The first derivative of $F(x)$ is
$F^{(1)}(x)=\frac{k(a-b)x^{k-1}(1-x)^{k-1}}{(a+b)((1-x)^k+x^k)^2}=\left\{\begin{array}{c}
>0,a>b\\<0,a<b\end{array}\right.$, which shows that function $F(x)$ is
either monotonic increasing or monotonic decreasing.  Because
$y_B<y_A$, when $x_B>x_A$, we arrive at a contradiction with the
monotonic increasing property under the condition $a>b$. Similar
result is for $a<b$. According to Property \ref{prop:three fixed
points}, $y=F(x)$ and $x=F(y)$ have at most three real crossings
points with an arbitrary line. Therefore, we can see that the set
of equations will have at most three crossing points with either
$y=x$ or $y=1-x$.

The set of equations in~\eqref{eq:y=f(x)} and~\eqref{eq:x=f(y)}
has a naive fixed point $(0.5, 0.5)$. However, it is only stable
when the set of equations crosses nowhere else on $y=x$ and
$y=1-x$. When $a>b$ and
$F^{(1)}(\frac{1}{2})=\frac{k(a-b)}{(a+b)}>1$, we can see that the
belief network will either converge at fixed point E or at fixed
point F on $y=x$ in Fig.\ref{fig:binary LBP} (a). In this case,
the fixed point at $x=0.5$ is an unstable point. When $a<b$ and
$F^{(1)}(\frac{1}{2})<-1$, the belief network will eventually
oscillate between E and F on $y=1-x$, which is shown in
Fig.~\ref{fig:binary LBP} (b). The fixed point at $x=0.5$ is again
an unstable fixed point. Because $F(x)$ is symmetric with respect
to $(x=0.5,y=0.5)$, points E and F are symmetric with respect to
$(x=0.5,y=0.5)$.
\end{proof}


\vskip 0.2in
\bibliography{sample}

\end{document}